\documentclass{lmcs} %%% last changed 2014-08-20
\pdfoutput=1

% LMCS Layouting Macros
\usepackage{lastpage}
\lmcsdoi{16}{1}{6}
\lmcsheading{}{\pageref{LastPage}}{}{}%
{Mar.~25,~2019}{Jan.~27,~2020}{}

\usepackage[iso-8859-1]{inputenx}

%+++++++++++++++++++++++++++++++++++++++++++++++++++++++++++++++++++++++++++++++++++

\usepackage{stmaryrd}
\usepackage{amsthm}
\usepackage{amssymb}
\usepackage{amsmath}
\usepackage{bussproofs}
\usepackage{dsfont}
\usepackage{xypic}
\usepackage[all]{xy}
\usepackage{tikz-cd}
\usepackage{todonotes}
\usepackage{url}
\usepackage{proof}
\usepackage{comment}
\usepackage{mdframed}

\newtheorem{teorema}{Teorema}[section]

\newtheorem{theorem}[teorema]{Theorem}
\newtheorem{lemma}[teorema]{Lemma}
\newtheorem{proposition}[teorema]{Proposition}
\newtheorem{corollary}[teorema]{Corollary}

\theoremstyle{definition} 
\newtheorem{definition}[teorema]{Definition}

\theoremstyle{remark} 
\newtheorem{remark}[teorema]{Remark}
\newtheorem{example}[teorema]{Example}

\newcommand{\diam}[1]{ \langle #1\rangle }

\newcommand*\diff{\mathop{}\!\mathrm{d}}

\newcommand{\sem}[1] {  \llbracket #1 \rrbracket  }

\newcommand{\CHaus}{\cat{CHaus}}
\newcommand{\blank}{\ensuremath{{\mbox{-}}}}
\newcommand{\op}{\ensuremath{^{\mathrm{op}}}}
\newcommand{\R}{\ensuremath{\mathbb{R}}}

\newcommand{\Rdn}{\ensuremath{\mathcal{R}}}
\newcommand{\id}{\ensuremath{\mathrm{id}}}
\newcommand{\Rdnl}{{\Rdn^{\leq 1}}}
\newcommand{\Id}{\ensuremath{\mathrm{Id}}}
\newcommand{\ie}{\emph{i.e.}}

\newcommand{\Spec}{\ensuremath{\mathrm{Spec}}}

\newcommand{\cat}[1]{\ensuremath{\mathbf{#1}}}
\newcommand{\CoAlg}{\cat{CoAlg}}
\newcommand{\Markov}{\cat{Markov}}
\newcommand{\CRiesz}{\cat{Riesz}}

\newcommand{\CAURiesz}{\cat{CAURiesz}}
\newcommand{\CAURieszD}{\CAURiesz_\Diamond}

\newcommand{\URiesz}{\cat{URiesz}}
\newcommand{\AURiesz}{\cat{AURiesz}}
\newcommand{\AURieszD}{\AURiesz_\Diamond}

\newcommand{\E}{\ensuremath{\mathbb{E}}}
\newcommand{\DAURiesz}{\cat{DAURiesz}}

%\newcommand{\one}{\ensuremath{\mathds{1}}}
% \mathds replaced by the following definitions in order to have also the $\zero$ command
\DeclareMathAlphabet{\mathbbold}{U}{bbold}{m}{n}
\newcommand{\one}{\ensuremath{\mathbbold{1}}}
\newcommand{\zero}{\ensuremath{\mathbbold{0}}}
\newcommand{\Formulas}{\texttt{Form}}

\newcommand{\dia}[1]{\ensuremath{\Diamond\!_{#1}}}
\newcommand{\hull}{\ensuremath{\mathrm{hull}}}

\newcommand{\RUB}[0]{ \R_{\{\B\}} }

\newcommand\B[0]{
	\textnormal{P}
}

%% GRAPHICS STUFF FROM WILL

\newcommand\nodeC[1]{*+[o][F]{#1}}

\newcommand\addLabelUL[1]{\ar@{}[]+UR|(1){~\makebox[0pt][l]{$\mathbf{#1}$}}}

\newcommand\addLabelUR[1]{\ar@{}[]+UR|(1){~\makebox[0pt][l]{$\mathbf{#1}$}}}

\newcommand\addLabelDL[1]{\ar@{}[]+UR|(1){~\makebox[0pt][l]{$\mathbf{#1}$}}}

\newcommand\addLabelDR[1]{\ar@{}[]+UR|(1){~\makebox[0pt][l]{$\mathbf{#1}$}}}

\newcommand\addDMD[2]{
	\ar@{-}[]+<#1pt,0pt>;[]+<0pt,#2pt>
	\ar@{-}[]+<0pt,#2pt>;[]+<-#1pt,0pt>
	\ar@{-}[]+<-#1pt,0pt>;[]+<0pt,-#2pt>
	\ar@{-}[]+<0pt,-#2pt>;[]+<#1pt,0pt>
}

%+++++++++++++++++++++++++++++++++++++++++++++++++++++++++++++++++++++++++++++++++++

\begin{document}
%
% paper title
% Titles are generally capitalized except for words such as a, an, and, as,
% at, but, by, for, in, nor, of, on, or, the, to and up, which are usually
% not capitalized unless they are the first or last word of the title.
% Linebreaks \\ can be used within to get better formatting as desired.
% Do not put math or special symbols in the title.

\title{Probabilistic Logics \\ based on Riesz Spaces}

%\titlecomment{{\lsuper*}This paper collects the results of LICS 2017 and LICS 2018 (to write well).}
  
% author names and affiliations
% use a multiple column layout for up to three different
% affiliations

\author[R.~Furber]{Robert Furber}	%optional
\address{Aalborg University, Denmark}	%optional
\email{furber@cs.aau.dk}  %optional
%\thanks{thanks 2, optional.}	%optional

\author[R.~Mardare]{Radu Mardare}	%optional
\address{University of Strathclyde, Scotland, UK}	%optional
\email{r.mardare@strath.ac.uk}  %optional
%\thanks{thanks 3, optional.}	%optional

\author[M.~Mio]{Matteo Mio}	%required
\address{CNRS and ENS--Lyon, France}	%required
\email{matteo.mio@ens-lyon.fr}  %optional
%\thanks{thanks 1, optional.}	%optional

%++++++++++++++++++++++++++++++++++++++++++++++++++++++++++++++++++++++++++++++++++

\begin{abstract}
We introduce a novel real--valued endogenous logic for expressing properties of probabilistic transition systems called \emph{Riesz modal logic}.
The design of the syntax and semantics of this logic is directly inspired by the theory of Riesz spaces, a mature field of mathematics
at the intersection of universal algebra and functional analysis. By using powerful results from this theory, we develop the duality theory of the Riesz modal logic
in the form of an algebra--to--coalgebra correspondence.
This has a number of consequences including: a sound and complete axiomatization, the proof that the logic characterizes probabilistic bisimulation and other convenient
results such as completion theorems. This work is intended to be the basis for subsequent research on extensions of Riesz modal logic with fixed--point operators.

\end{abstract}

\maketitle

%\newpage

%\tableofcontents 

\section{Introduction}

Directed graphs and similar structures, such as labelled transition systems and Kripke frames, are  mathematical objects often used to represent, by means of operational semantics, the behaviour of (nondeterministic) computer programs \cite{SOS}. For this reason a large body of research has focused on the study of logics for expressing useful properties of directed graphs.  Among these, \emph{modal logic} (see, e.g, \cite{BdRVModal,Stirling96,modallogic2008book}) and its extensions (e.g., \emph{CTL} \cite{CES83}, \emph{modal $\mu$-calculus} \cite{Kozen83}, among others) play a fundamental role. After decades of research, the current state of knowledge regarding modal logics is substantial:
\begin{itemize}[wide,itemindent=0pt]
\item[] \textbf{Model Theory}: the class of transition (often referred to as ``relational'') structures interpreting the language of modal logics and their interplay with modal formulas is well understood. This theory includes key concepts such as that of \emph{bisimulation}, algorithmically relevant properties such as the \emph{finite model property}, expressiveness and definability results and advanced constructions such as \emph{ultraproducts}. We refer to \cite{GorankoOtto2007} for an overview.\\

\item[] \textbf{Algebraic Semantics}: another natural approach to give semantics to modal logic is \emph{algebraic}: formulas are interpreted over algebras equipped with operations corresponding to the connectives of the logic and subject to certain axioms. In the case of basic modal logic (i.e., system \emph{K} \cite{modallogic2008book}), the signature is $\{ \top,\bot,\neg, \vee, \wedge, \Diamond\}$ and the algebras considered are called \emph{modal Boolean algebras} and satisfy the usual axioms of Boolean algebras together with the additional axioms $\Diamond \bot=\bot$ and $\Diamond(F\vee G) = \Diamond(F) \vee \Diamond(G)$ for the modal connective. Starting with the seminal works of J\'{o}nsson, McKinsey and Tarski \cite{bennett_1946,jonssontarski1951,jonssontarski1952}, a precise correspondence between algebraic and transition semantics has been established. They key tool being used is that of \emph{Stone duality}: to each Boolean algebra $B$ there corresponds a certain topological ``dual'' space $X$, and the modal operation $\Diamond: B \rightarrow B$ of each modal Boolean algebra corresponds to a transition (topologically closed) relation $R\subseteq X\times X$:

\begin{center}
modal Boolean algebra $(B,\Diamond)$ $\ \ \ \ \ \Longleftrightarrow  \ \ \ \ \ \ $ (topological) Kripke frame $(X, R)$. 
\end{center}
The correspondence is in fact a duality of categories when morphisms between Kripke frames are defined using the framework of coalgebra theory \cite{KKV2004,Jacobs2016}. This duality provides deep mathematical insights and is considered by Johan van Benthem as ``one of the three pillars of wisdom in the edifice of modal logic'' \cite{JVB1984} (the other two being \emph{completeness} and \emph{correspondence theory}). We refer to \cite{SambinVaccaro1988} and \cite{KKV2004} for an overview.\\

\item[] \textbf{Axiomatizations and Proof Systems}: axiomatizations (sound and complete with respect to the semantics) have been found for modal logic and many of its extensions, including  CTL \cite{reynolds2001} and modal $\mu$-calculus \cite{Walukiewicz-CompletenessofKozen}. Furthermore, structural (analytic) proof systems based on Gentzen's \emph{sequent calculus} have been designed. These constitute the purely syntactical side of the theory of modal logic. We refer to \cite{prooftheorymodallogic} for a general introduction and to  \cite{Studer07,DHL2006,doumanephd} for a selection of some recent results. \\
\end{itemize}

%The first property is important because bisimilarity, and not graph isomorphism, has emerged in the last three decades as the most convenient invariant when graphs are intended to represent the meaning of computer programs. The second property is also crucial as it can be used, e.t., to automatically verify if a property specified by the formula $\phi$ is satisfiable by some graph or not. Lastly, complete axiomatizations (and structured deductive systems) are useful to reason compositionally about formulas \cite{}.

\subsection{Probabilistic logics}
\noindent Despite their wide applicability, directed graph structures are not adequate for modelling all kinds of programs. Most notably probabilistic programs, such as those involving commands for generating random numbers (e.g., \verb!x=rand()! in \verb!C++!), are naturally modelled by Markov chains or similar structures (e.g.,   Markov decision processes). See, e.g., \cite{BaierKatoenBook} for an overview. Consequently, a number of logics specifically designed to express properties of Markov chains have been investigated: e.g., Kozen's \emph{probabilistic PDL} \cite{Kozen1981,Kozen1983}, Larsen--Skou modal logic \cite{LS91}, \emph{probabilistic CTL} (\cite{HJ94,LS1982}), among others. We generally refer to such logics as \emph{probabilistic logics}.

The current status of the theory of probabilistic logics is, compared with that of ordinary modal logics, rather unsatisfactory. For example, for most probabilistic logics capable of expressing properties useful in model checking (probabilistic CTL is a main example), the following problems have been open for more than 35 years (since, at least, \cite{LS1982}):
\begin{enumerate}
\item find a sound and complete axiomatization of the set of valid formulas,
\item find structural proof systems (e.g., sequent calculus)  for deriving valid formulas,
\item establish if the set of valid formulas is decidable or not.
\end{enumerate}

These problems are evidently intrinsically difficult but one reason that makes them harder to tackle is, possibly, the fact that most probabilistic logics (including probabilistic CTL) have been designed with special focus on model--checking (e.g., the ability to express properties useful in practice, availability of efficient algorithms for verifying  finite--state systems, \emph{etc.}) rather than mathematical convenience.  This has led to successful results, with real--world probabilistic programs formally verified using model checking techniques. But, on the other hand, very little progress has been made on the open problems listed above.

\subsection{Real--valued probabilistic logics}

The seminal work of  Kozen on \emph{probabilistic PDL}  \cite{Kozen1983} is noteworthy as being among the first to the design probabilistic logics with main focus on convenient mathematical foundations. A key novelty of probabilistic PDL is the fact that its semantics is \emph{real--valued}: formulas are not interpreted as  \emph{true} or \emph{false}, as in most other probabilistic logics with a Boolean semantics (including probabilistic CTL), but are rather interpreted as real numbers  ($\mathbb{R}$). So in real--valued logics the semantics of a given formula $F$ can be a number like $0$, $1$, $\frac{7}{19}$ and $\pi$. The adequacy and mathematical convenience of a real--valued semantics in the context of probabilistic logics is discussed in detail in \cite{Kozen1981}.% and its mathematical convenience has been proved by Kozen who has, remarkably, obtained a sound and complete axiomatisation of probabilistic PDL (c.f., open problem 1 above). 

However, the logic probabilistic PDL is, using the terminology introduced by Pnueli \cite{PNUELI77}, \emph{exogenous}:  the language of formulas is both an assertion language able to express properties of probabilistic programs and a programming language. The logic probabilistic CTL and most other logics for model checking are, on the other hand,  \emph{endogenous}: the language of formulas is independent from the concrete syntax of any given programming language. This distinction is important because the ``programming languages'' embedded in exogenous logics are usually quite abstract and restricted (e.g., consist only of the usual Kleene algebra operations) while endogenous logics express properties of models generated by arbitrary programs. One of the consequences is, for example, that the logic probabilistic PDL enjoys the \emph{finite model property} \cite{Kozen1983} while probabilistic CTL does not \cite{LS1982,Brazdil2008}. The failure of the final model property is a fundamental characteristic of probabilistic CTL and a main  source of complexity.  This means that it does not seem possible to directly apply the results available for probabilistic PDL to solve the open problems regarding probabilistic CTL (and other endogenous logics) listed above.

For this reason, following Kozen, some research has subsequently explored the design of real--valued endogenous probabilistic logics based on the idea of interpreting formulas as real numbers. Early works include \cite{MM07,HM96,prakash2000,deAlfaro2003}.  A shortcoming of these attempts is, however, that these probabilistic logics are not sufficiently expressive to interpret the logic probabilistic CTL and other endogenous probabilistic logics having the usual Boolean semantics. Recent works \cite{MioThesis,MIO2012b,MIO2014a,MioSimpsonFI2017} have shown, however, that the desired expressivity can be achieved by extending a simple real--valued probabilistic modal logic (called \emph{\L ukasiewicz modal logic}) with fixed--point operators, in the style of Kozen's modal $\mu$--calculus \cite{Kozen83}. Indeed, the resulting real--valued logic, called \emph{\L ukasiewicz $\mu$--calculus} can interpret the logic probabilistic CTL.  Hence the real--valued approach to  endogenous modal logics for probabilistic systems suffices to express most properties of interest:

\begin{center}
\begin{tabular}{c c c}
simple real--valued modal logic  & & \\
$+$ & $\supseteq$ & $\textnormal{probabilistic CTL}$\\
(co)inductively defined operators\\
\end{tabular}
\end{center}
This observation suggests the following research program:
\begin{enumerate}
\item Identify a simple real--valued endogenous modal logic $\mathcal{L}$ having convenient mathematical foundations which, once extended with fixed--point operators, is sufficiently expressive to interpret probabilistic CTL and other probabilistic logics, just like the \emph{\L ukasiewicz modal logic} mentioned above.

\item Develop the theory of the probabilistic real--valued logic $\mathcal{L}$: model theory, algebraic theory, axiomatizations and proof systems.
\item Extend $\mathcal{L}$ with the (co)inductive operators required to increase its expressive power.
\item Develop the theory of the extended logic using the large body of knowledge on methods for reasoning about fixed points and (co)inductive definitions.
\end{enumerate}

The main contribution of this work is to set down the mathematical foundation of a logic $\mathcal{L}$  enjoying the properties outlined above. 

\subsection{Contributions of this work}

We introduce the \emph{Riesz modal logic}, a real--valued probabilistic endogenous modal logic named in honour of  the Hungarian mathematician Frigyes Riesz. The design of the syntax and the semantics of this logic is inspired by the theory of \emph{Riesz spaces}, also known as \emph{vector lattices} \cite{Luxemburg,JVR1977}, a branch of mathematics at the intersection of algebra and functional analysis, pioneered in the 1930's by F. Riesz, L. Kantorovich and H. Freudenthal among others, with applications in the study of function spaces.

A Riesz space (see Section \ref{sec:background:riesz} for the details) is a real--vector space $V$ equipped with a lattice order ($\leq$) such that the basic vector space operations of addition and scalar multiplication satisfy:\\
\begin{center}
 if $x\leq y$ then $x+z \leq y + z$, $ \ \ \ \ \ \ \ \ $
 if $x\leq y $ then $r x \leq r y$, for any scalar $r\geq0$.\\
\end{center}
$\ $ \\
For example, the linearly ordered set of real numbers $\mathbb{R}$ is a Riesz space. Hence the concept of Riesz space is obtained by combining the notion of lattice,  which is pervasive in logic, with those of addition and scalar multiplication, which are pervasive in probability theory (e.g., convex combinations, linearity of the expected value operator, \emph{etc}.)
 
In the context of our work, it is convenient to think at Riesz spaces as a quantitative generalization of Boolean algebras, obtained by replacing the two--element Boolean algebra $\mathbf{2}=(\{0,1\},\vee,\wedge,\neg)$ with the Riesz space $\mathbb{R}$. This is not just a vague analogy as the theory of Riesz spaces is very rich and includes key results such as:
\begin{itemize}
\item $\mathbb{R}$ generates the variety of Riesz spaces, just like $\mathbf{2}$ generates the variety of Boolean algebras,
\item \emph{Yosida duality}, which is the equivalent of Stone duality for Boolean algebras, provides a  bridge between algebra and topology, 
\item completion theorems, just as in Boolean algebras, allow one to embed Riesz spaces into other Riesz spaces whose order has certain closure properties (e.g., it is a complete lattice), \emph{etc}. This is convenient, for example, when it is required to guarantee the existence of fixed--points of monotone operators as in the Knaster--Tarski fixed--point theorem. 
\end{itemize}
What makes Riesz spaces particularly convenient for our applications to probabilistic logics is that, being vector spaces, the notion of \emph{linear transformation} plays a fundamental role in the theory. For example the theory of Riesz spaces include results such as:  
\begin{itemize}
\item theory of linear functionals: representation theorems such as, e.g., the Riesz representation theorem for (probability) measures,
\item extension theorems: e.g.,  generalizations of the Hahn-Banach theorem.
\end{itemize}
For these reasons we claim that the theory of Riesz spaces is a very convenient mathematical setting to develop the theory of probabilistic logics.
%, just as the theory of Boolean algebras is the natural setting for ordinary modal and temporal logics.

We define the \emph{transition semantics} of Riesz modal logic with respect to transition systems modelled as  (topological) Markov chains, which we refer to as \emph{Markov processes}.
Formally, these are defined as coalgebras of the Radon monad on the category of compact Hausdorff spaces (see Section \ref{sec:background:coalgebra}). The theory of coalgebra then provides automatically appropriate definitions of morphisms between models, products, quotients, bisimulation, \emph{etc.} Beside the choice of the category to work with, which is motivated by mathematical convenience and is at the same time very general (see discussion in Section \ref{other:models:section}), the semantics is essentially standard, it agrees with several other works in the literature and is based on the interpretation of the $\Diamond$ modality with the expected--value operator. And indeed we show that Riesz modal logic can interpret other basic real--valued logics that have appeared in the literature including the \L ukasiewicz modal logic (see Section \ref{sec:other:logics}).

\subsection{Technical results}

Our main technical contribution is to set the mathematical foundation of Riesz modal logic by developing its duality theory, following closely the duality theory framework of ordinary modal logic. To do this, we define an \emph{algebraic semantics} for Riesz modal logic. 

The algebras are called \emph{modal Riesz spaces} and are Riesz spaces $R$ equipped with an additional unary operation $\Diamond: R\rightarrow R$ subject to the following  axioms (see Figure \ref{axioms:of:modal:riesz:spaces} in Section \ref{section_modal}):
\begin{center}
\begin{tabular}{l l l}
Linearity: &&  $\Diamond(r_1x + r_2y ) = r_1\Diamond(x) + r_2\Diamond(y)$,\\
Positivity:  && if $x\geq 0$ then $\Diamond x\geq 0$, and \\
$1$-decreasing:  && $\Diamond 1 \leq 1$.
\end{tabular}
\end{center}
This variety of algebras forms a category by taking homomorphisms  (i.e., mappings preserving all operations) as morphisms.  By applying the machinery of Yosida duality, and other results from the theory of Riesz spaces, we prove that the category of transition models (coalgebras) is dually equivalent with the category of Archimedean modal Riesz spaces (Theorem \ref{main_theorem_paper}).

This result has a number of consequences. Firstly, Riesz modal logic characterizes bisimulation (Corollary \ref{corollary:bisimulation}). Secondly, we obtain a sound and complete axiomatizations of Riesz modal logic (Theorem \ref{completeness_theorem_app}). The axioms and inferences rules are depicted in Figure \ref{figure:full:axiomatisation} in Section \ref{section_applications}. While other simple probabilistic modal logics characterizing bisimulation have been completely axiomatized in the literature (see, e.g., the \emph{Markovian logic} of \cite{StonePrakash,KMP2013}), to the best of our knowledge, Riesz modal logic is the first probabilistic logic which, once extended with fixed--point operators, is sufficiently expressive to interpret other expressive logics such as probabilistic CTL.

Using duality theory, we can investigate properties of the final coalgebra (which, in the context of operational semantics, is understood as the collection of ``behaviours'' \cite{Jacobs2016,KurzPHD}) by establishing results of the initial modal Riesz space, and \emph{vice versa}. We prove some fundamental properties of the initial modal Riesz space in Section \ref{final_coalgebra_section}. These allow us, for instance, to prove that the final coalgebra is a compact Polish space (Theorem \ref{corollary:polish}).

Riesz modal logic is, by design, a very simple formalism and lacks temporal operators needed to express many of the useful properties expressible in logics such as probabilistic CTL. As already mentioned, the required expressiveness can be achieved by extending Riesz modal logic with recursively defined operators, in the style of Kozen's modal $\mu$--calculus. This has been shown in, e.g., \cite{MIO2012b,MioSimpsonFI2017,MIO2014a,Mio18}. Fixed--point definitions usually rely on the Knaster--Tarski theorem on complete lattices. In this context, by applying a theorem of Kantorovich in the theory of Riesz spaces,  we prove a fundamental completion result (Theorem \ref{completion:thm2}): every Archimedean modal Riesz space can be embedded in a Dedekind complete modal Riesz space. This, by duality, means that any topological Markov chain (coalgebra) can be embedded into a topological Markov chain having a state--space which is \emph{Stonean} (i.e., the Stone--dual of a complete Boolean algebra). 

\subsection{Organization of this work}
This article is organized as follows:\\

\noindent
\textbf{Section \ref{section_background}: Technical background}. We provide the necessary background definitions and results. This section is quite lengthy but, hopefully, serves the purpose of keeping this article reasonably self--contained. Subsections \ref{TopMeasRieszRepSubSect} and \ref{sec:background:coalgebra} deal with basic notions from probability theory and coalgebra and can be safely skipped by readers familiar with these topics. Subsections \ref{sec:background:riesz}, \ref{sec:riesz:back2}, \ref{duality_section_background} and \ref{sec:dedekind} deal with the basic definitions and results of the theory of Riesz spaces. Once again, these can be safely ignored by readers familiar with this theory and consulted only when necessary.\\

\noindent
\textbf{Section \ref{logic_section}: Riesz Modal Logic, Syntax and Transition Semantics.} In this section we define the syntax and the transition semantics of Riesz modal logic. The latter is given in terms of topological Markov chains, which we refer to as \emph{Markov processes}, defined in Section  \ref{sec:background:coalgebra}. We give several examples of formulas and, in Subsection \ref{sec:other:logics}, explain how Riesz modal logic can interpret other similar real--valued probabilistic modal logics that have appeared in the literature, including the  \emph{\L ukasiewicz modal logic} of \cite{MioThesis,MIO2014a,MioSimpsonFI2017}.\\

\noindent
\textbf{Section \ref{section_modal}: Modal Riesz spaces.} In this section we introduce the notion of \emph{modal Riesz space}, the algebraic counterpart of Riesz modal logic. In Subsection \ref{dedekind:sec} we establish a completion theorem for modal Riesz spaces (Theorem \ref{completion:thm2}). This result is likely of fundamental importance in the future development of fixed--point extensions of Riesz modal logic based on the Knaster--Tarski theorem. In Subsection \ref{relation:MValgebras} we comment on some similarities with the notion of \emph{state MV--algebra} from \cite{FM2009}.\\

\noindent
\textbf{Section \ref{section_duality}: Duality between Markov processes and modal Riesz spaces.} In this section we establish our main technical result (Theorem \ref{main_theorem_paper}): the categories of uniformly complete Archimedean modal Riesz spaces and that of Markov processes with coalgebra morphisms are dually equivalent.\\

\noindent
\textbf{Section \ref{final_coalgebra_section}: Initial algebra.} In this section we give explicit constructions of the initial objects of several categories of modal Riesz spaces and establish some basic properties. We also leave open an important question which we could not answer so far (see Subsection \ref{sec:AURiesz}). \\

 \noindent
 \textbf{Section \ref{final:section}: Final coalgebra.} In this section we illustrate one application of the duality theory: it is possible to establish properties of the final coalgebra in the category of Markov processes by proving properties of the initial modal Riesz space. We prove that the state--space of the final coalgebra is a compact Polish space.\\

 \noindent
 \textbf{Section \ref{section_applications}: Applications of duality to Riesz modal logic.}  Another application of the duality theory is, of course, to establish properties of Riesz modal logic. In this section we prove that Riesz modal logic characterizes probabilistic bisimilarity and that the proof system of Figure \ref{figure:full:axiomatisation}, for proving semantic equality between pairs of Riesz modal logic formulas, is sound and complete.\\

 \noindent
 \textbf{Section \ref{other:models:section}: Other classes of models}: In this section we show how our notion of Markov process (as given in Definition \ref{markov_process_coalgebra}) is in fact very general in the sense that most of the similar known notions can be embedded into Markov processes in our sense.\\
 
 \noindent
 \textbf{Section \ref{conclusion_section}: Conclusions:} In this section we present some final comments and direction for future research.\\

 \noindent
 \textbf{Appendix \ref{QuotientCompleteSubsection}:} In this appendix we prove a result regarding Archimedean Riesz spaces needed in the proof of Lemma \ref{SurjectiveUniformlyCompleteLemma} of Section \ref{final_coalgebra_section}. This might well be a known result but we could not find any explicit reference for it in the literature.

%%% Local Variables:
%%% mode: latex
%%% TeX-master: "ms.tex"
%%% End:

\section{Technical background}
\label{section_background}

\subsection{Topology, measures and Riesz--Markov--Kakutani representation theorem}
\label{TopMeasRieszRepSubSect}

$ \ $ \\

We denote by $\CHaus$ the category of compact Hausdorff spaces with continuous maps as morphisms. If $X$ is a compact Hausdorff space, we denote with $\mathcal{B}(X)$ the collection of Borel sets of $X$, \ie, the smallest $\sigma$-algebra of subsets of $X$ containing all open sets. A \emph{(Borel) subprobability measure} on $X$ is a function $\mathcal{B}(X)\rightarrow[0,1]$ such that $\mu(\emptyset)\!=\!0$, $\mu(X)\!\leq\!1$ and $\mu(\bigcup_n A_n)\!=\!\sum_n \mu(A_n)$ for all countable sequences $(A_n)$ of pairwise disjoint Borel sets. The measure $\mu$ is a \emph{probability measure} if $\mu(X)\!=\!1$.

A (sub--)probability measure $\mu$ on the compact Hausdorff space $X$ is \emph{Radon} if for every Borel set $A$, $\displaystyle\mu(A)\!=\!\sup\{\ \mu(K) \mid K\!\subseteq\! A \textnormal{ and $K$ is compact}\}$. In other words, a measure is Radon if the measure $\mu(A)$ of every Borel set $A$ can be approximated to any degree of precision by compact subsets of $A$. Most naturally occurring probability (sub--)measures are Radon. In particular, if $X$ is a Polish space, all (sub--)probability measures are Radon. 

Given a set $X$, we denote the collection of all functions $X\!\rightarrow\!\mathbb{R}$ by $\R^X$. If $X$ is a topological space, then $C(X)$ denotes the subset of $\R^X$ consisting of all continuous functions. We use $\zero_X$ and $\one_{X}$ to denote the constant (continuous) functions defined as $\zero_X(x)=0$ and $\one_X(x) = 1$, for all $x \in X$, respectively. Using the vector space operations of $\R$ pointwise, both  $\R^X$ and $C(X)$ can be given the structure of a $\mathbb{R}$-vector space. Furthermore, the ordering ($\leq$) defined  pointwise as $f\leq g \Leftrightarrow \forall x.f(x)\leq g(x)$ is a lattice on both $\R^X$ and $C(X)$.

Given a compact Hausdorff space $X$ and a (sub--)probability measure $\mu$ on $X$, one can define the expectation functional $\mathbb{E}_\mu: C(X)\rightarrow\mathbb{R}$ as 
\vspace{-2mm}
\begin{equation}
\mathbb{E}_\mu (f) = \int_{X} f \diff \mu
\end{equation}

where the integral is well defined because any $f\!\in\! C(X)$, being continuous and defined on a compact space, is measurable and bounded. One can then  observe that:
\begin{enumerate}[label=(\roman*)]
\item  $\mathbb{E}_\mu$ is a \emph{linear} map:  $\mathbb{E}_\mu (f_1+f_2)\!=\! \mathbb{E}_\mu (f_1) + \mathbb{E}_\mu (f_2)$, and $ \mathbb{E}_\mu(r f) = r \mathbb{E}_\mu(f)$, for all $r\!\in\!\mathbb{R}$,
\item  $\mathbb{E}_\mu$ is \emph{positive}:  if $f\geq \zero_X$ then $\mathbb{E}_\mu(f)\geq 0$, and
\item $\mathbb{E}_\mu$ is \emph{$\one_X$-decreasing}: $\mathbb{E}_\mu(\one_X)\leq 1$. 
\end{enumerate}
The latter inequality becomes an equality if $\mu$ is a probability measure.

The celebrated Riesz--Markov--Kakutani representation theorem states %\footnote{In its full generality the theorem provides a representation for arbitrary signed Radon measures and not just subprobability measures. See, e.g., Appendix B of \cite{LAX}.} 
 that, in fact, any such functional corresponds to a unique Radon subprobability (see \cite{LAX}).

\begin{theorem}[(Riesz--Markov--Kakutani)]
Let $X$ be a compact Hausdorff space. For every functional $F:C(X)\rightarrow\mathbb{R}$ such that (i) $F$ is linear, (ii) $F$ is positive and (iii) $F(\one_X)\leq 1$, there exists one and only one Radon subprobability measure $\mu$ on $X$ such that $F=\mathbb{E}_\mu$.
\end{theorem}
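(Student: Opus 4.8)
The plan is to construct the measure $\mu$ directly from $F$ by the classical outer-measure method, and then to verify the representation identity $F=\mathbb{E}_\mu$ and uniqueness as separate matters. Before anything else I would record the two elementary consequences of the hypotheses. Positivity (ii) upgrades to monotonicity: if $f\le g$ then $g-f\ge\zero_X$, so $F(g)-F(f)=F(g-f)\ge 0$. Together with $F(\one_X)\le 1$ this will force $0\le\mu(A)\le\mu(X)=F(\one_X)\le 1$, so the measure produced is automatically a subprobability. Uniqueness I would dispatch at the end: once $F=\mathbb{E}_\mu$ is known, the value of a Radon measure on a compact set $K$ equals $\inf\{\,F(g): g\in C(X),\ g\ge\mathbf{1}_K\,\}$ --- the inequality $\le$ coming from Urysohn's lemma (valid because compact Hausdorff spaces are normal) and outer regularity, the inequality $\ge$ from $g\ge\mathbf{1}_K$. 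Since this quantity depends only on $F$, two representing Radon measures agree on all compact sets and hence, by inner regularity of a finite Radon measure, everywhere.

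The construction occupies the bulk of the work. I would first define, for each open $U\subseteq X$,
\[ \mu(U)=\sup\{\, F(f): f\in C(X),\ \zero_X\le f\le\one_X,\ \operatorname{supp}(f)\subseteq U \,\}, \]
and then extend to an outer measure on all subsets by $\mu^*(A)=\inf\{\,\mu(U): U\supseteq A,\ U\text{ open}\,\}$. That $\mu^*$ is monotone, vanishes on $\emptyset$, and is countably subadditive is routine; subadditivity on open sets is the one delicate point, and there I would use a finite partition of unity subordinate to a cover together with the linearity and monotonicity of $F$. Positivity of $F$ is precisely what keeps every quantity here nonnegative and the suprema well behaved.

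The main obstacle is the pair of remaining facts: that every Borel set is Carath\'eodory-measurable for $\mu^*$, and that the resulting Borel measure $\mu$ actually represents $F$, \ie\ $F(f)=\int_X f\diff\mu$ for all $f\in C(X)$. For measurability it suffices to show each open set splits every test set additively, which again reduces to Urysohn's lemma plus an $\varepsilon$-approximation of $\mu^*$ by supports of continuous functions. For the representation identity I would, given $f\in C(X)$ with values in some $[a,b]$, slice the range by a partition $a=y_0<\dots<y_n=b$ of mesh $<\varepsilon$, set $E_i=f^{-1}((y_{i-1},y_i])$, choose open $U_i\supseteq E_i$ on which $f<y_i+\varepsilon$ with $\mu(U_i)$ close to $\mu(E_i)$, and take a partition of unity $\{h_i\}$ subordinate to $\{U_i\}$ with $\sum_i h_i=\one_X$. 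Writing $F(f)=\sum_i F(h_i f)$, squeezing each term between $(y_{i-1}-\varepsilon)F(h_i)$ and $(y_i+\varepsilon)F(h_i)$, and comparing with the Riemann-type sum $\sum_i y_i\,\mu(E_i)$ for $\int_X f\diff\mu$, yields $|F(f)-\int_X f\diff\mu|=O(\varepsilon)$; letting $\varepsilon\to 0$ gives equality. This $\varepsilon$-bookkeeping, together with the careful choice of the $U_i$, is the genuinely hard part.

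Finally I would note a shortcut special to the compact setting that can replace the verification step by a Daniell--Stone argument: since $X$ is compact, Dini's theorem guarantees that whenever $f_n\downarrow\zero_X$ pointwise with $f_n\in C(X)$ the convergence is uniform, whence $0\le F(f_n)\le F(\one_X)\,\|f_n\|_\infty\to 0$, so $F$ is automatically a Daniell integral; the Daniell--Stone theorem then produces $\mu$ on the Baire sets and Radon regularity extends it canonically to the Borel sets. Either route works. Since the statement is entirely classical, it is legitimate in the paper itself simply to cite \cite{LAX}, but the outline above is the self-contained argument I would give.
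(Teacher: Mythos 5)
The paper contains no proof of this theorem: it is quoted as classical background, with a pointer to \cite{LAX}, and is then used as a black box (to identify $\Rdnl(X)$ with a space of functionals). So the comparison here is between your self-contained sketch and a citation. Your outline is the standard outer-measure construction of Riesz--Markov--Kakutani: define $\mu$ on open sets as $\sup\{F(f): \zero_X\le f\le \one_X,\ \operatorname{supp}(f)\subseteq U\}$, pass to the induced outer measure, establish Carath\'eodory measurability of Borel sets via Urysohn's lemma and partitions of unity (both available since compact Hausdorff spaces are normal), and prove $F=\mathbb{E}_\mu$ by slicing the range of $f$; your uniqueness argument, $\mu(K)=\inf\{F(g): g\in C(X),\ g\ge \mathbf{1}_K\}$, is also the standard one, and it meshes correctly with the paper's definition of ``Radon'' (inner regularity by compact sets, which for finite measures on a compact space yields the outer regularity you invoke). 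Two pieces of bookkeeping in your slicing step need the standard repair: the lower squeeze $(y_{i-1}-\varepsilon)F(h_i)\le F(h_i f)$ is not guaranteed if $U_i$ is only required to satisfy $f<y_i+\varepsilon$ on $U_i$ --- you must take $U_i$ inside $f^{-1}\big((y_{i-1}-\varepsilon,\,y_i+\varepsilon)\big)$ --- and the partition of the range should start strictly below $\min f$ so that the sets $E_i$ actually cover $X$ (otherwise a partition of unity with $\sum_i h_i=\one_X$ subordinate to $\{U_i\}$ is unavailable). Both fixes are one-liners, and your alternative Daniell--Stone route via Dini's theorem is equally sound, modulo the (standard) regular extension from Baire to Borel sets. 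As for what each approach buys: the paper's citation keeps an already long background section lean, which is appropriate since nothing downstream depends on the internals of the construction; your sketch makes the statement self-contained, but the items you defer --- countable subadditivity, Carath\'eodory measurability, the $\varepsilon$-bookkeeping --- are exactly the nontrivial content that the cited reference carries out.
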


Given a compact Hausdorff space $X$ we denote with $\Rdnl(X)$ the collection of all Radon subprobability measures on $X$. Equivalently, by the Riesz--Markov--Kakutani theorem, we can identify $\Rdnl(X)$ with the collection of functionals
$$
\big\{ F:C(X)\rightarrow\mathbb{R} \mid \textnormal{$F$ is linear, positive and $\one_X$-decreasing}\big\}.
$$

The set $\Rdnl(X)$ can be endowed with the weak-* topology, the coarsest (\ie, having fewest open sets) topology such that, for all $f\!\in\! C(X)$, the map $T_{f}\!:\!\Rdnl(X)\!\rightarrow\!\mathbb{R}$, defined as $T_{f}(F)\!=\!F(f)$, is continuous. The weak-* topology on $\Rdnl(X)$ is compact and Hausdorff by the Banach-Alaoglu theorem. Hence $\Rdnl$ maps a compact Hausdorff space $X$ to the compact Hausdorff space $\Rdnl(X)$. In fact $\Rdnl$ becomes a functor on $\CHaus$  by defining, for any continuous map $f \!:\! X\! \rightarrow\! Y$ in $\CHaus$, the continuous map $\Rdnl(f)\!:\! \Rdnl(X)\!\rightarrow\! \Rdnl(Y)$ as
\begin{equation}
\label{RdnlMapDefn}
\Rdnl(f)(F)(g) = F(g \circ f),
\end{equation}
for all $g\in C(Y)$. 

The functor $\Rdnl$ is shown to be the underlying functor of a monad in \cite[\S 6]{Keimel2009}, based on \'Swirszcz's proof of the probabilistic case \cite{Swirszcz74,swirszcz75} (see also Giry's work \cite{giry1981}). However, we will not require the monad structure for the purposes of this article. Following \cite{FurberJ14a}, we call $\Rdnl$ the \emph{Radon monad}.

\subsection{Markov Processes and Coalgebra}\label{sec:background:coalgebra}

$ \ $ \\

Informally, a (discrete-time) Markov process consists of a set of states $X$ and a transition function $\alpha$ that associates to each state $x\!\in\!X$ a probability distribution $\alpha(x)$ on the state space $X$. This mathematical object is interpreted,  given an initial state $x_0$, as generating an infinite trajectory (or ``computation'') $(x_n)_{n\in\mathbb{N}}$ in the state space $X$, where $x_{n+1}$ is chosen randomly using the probability distribution $\alpha(x_n)$. A slight variant of this model, allowing the generation of infinite as well as finite trajectories,  uses transition functions $\alpha$ associating to each state $x$ a subprobability distribution $\alpha(x)$. The intended interpretation is that the computation will stop at state $x$ with probability $1-m_x$, where $m_x\!\in\![0,1]$ is the total mass of $\alpha(x)$, and will continue with probability $m_x$ following the (normalized) probability distribution $\alpha(x)$.

\begin{example}\label{logic:example1a}
Consider the following Markov process having state space $X=\{x_1,x_2\}$ and transition function $\alpha$ defined by: $\alpha(x_1) = \frac{1}{3}x_1 + \frac{1}{2} x_2$ and $\alpha(x_2) =\frac{1}{3} x_1$ depicted below:
\vspace{-4mm}
\begin{center}
$$
\SelectTips{cm}{}
	\xymatrix @=20pt {
		\nodeC{x_1} \ar@{->}[rr]^{\frac{1}{2}}\ar@{->}@(ul,ur)^{\frac{1}{3}}   & &  \nodeC{x_2} \ar@{->}@/^10pt/[ll]^{\frac{1}{3}}   	}
$$
\end{center}
From the state $x_1$ the computation progresses to $x_1$ itself with probability $\frac{1}{3}$, to $x_2$ with probability $\frac{1}{2}$ and it halts with probability $\frac{1}{6}$.  From the state $x_2$ the computation progresses to $x_1$ with probability $\frac{1}{3}$ and it  halts with probability $\frac{2}{3}$.
\end{example}

This informal description readily translates to a formal definition for Markov processes having finite or countably infinite state space $X$, also known as \emph{Markov chains}. Sometimes, however, it is interesting to model Markov process having uncountable state spaces (e.g., $X=[0,1]$). When $X$ is uncountable, the notion of discrete probability distribution is naturally replaced by that of probability measure and, therefore, $X$ is often assumed to be a topological or measurable space and $\alpha$ is defined as a map from $X$ to the collection of (sub--)probability measures on $X$ satisfying certain convenient regularity assumption.

In this work we define Markov processes as follows.
\begin{definition}\label{markov_def_1}
A \emph{Markov process} is a pair $(X,\alpha)$ such that $X$ is a compact Hausdorff topological space and $\alpha:X\rightarrow\Rdnl(X)$ is a continuous map.
\end{definition}

\begin{example}[Finite Markov chains]\label{example:mp:1}
Finite Markov chains, such as the one defined in the example \ref{logic:example1a} above, can be formalized as Markov processes in the sense of Definition \ref{markov_def_1}.  Indeed, the finite state space $X$, endowed with the discrete topology, is a compact Hausdorff space. And the transition function $\alpha: X\rightarrow \Rdnl(X)$ is continuous (since $X$ is discrete). Observe that the space $\Rdnl(X)$ is isomorphic to the set $\mathcal{D}^{\leq 1}(X)=\{ d:X\rightarrow[0,1] \mid \sum_x d(x)\leq 1\}$ of all subprobability distributions on $X$.

\end{example}

\begin{example}[Uncountable Markov process]\label{example:mp:2}

We define a Markov process having state space $X=[0,1]$ where, from the state $x\in[0,1]$, the computation progresses to $x$ with probability $x$ and it halts with probability $1-x$.  This is formalized by defining the transition function $\alpha:X\rightarrow\Rdnl(X)$ as follows:

$$
\alpha(x) = x \cdot \delta_x 
$$ 
where $\delta_x$ is the Dirac probability measure centred on $x\in [0,1]$. Note that $\alpha$, being the pointwise product of the continuous function ($x\mapsto x)$ and the continuous function ($x\mapsto \delta_x$), is indeed continuous.
\end{example}

The previous example is included particularly because it turns out to be useful in proving Theorem \ref{uniform-incompleteness:robert}, via Example \ref{example:logic:3} and Lemma \ref{PiecePolyLemma}. However, as a general example it is amenable to criticism on the grounds that it does not use what we would consider to be continuous probability distributions. Therefore we provide one further example.

\begin{example}
We define a Markov process $\alpha: S^1 \rightarrow \Rdnl(S^1)$ on the unit circle $S^1$ with the property that from each point there is a nonzero probability of moving to any interval within $S^1$. The circle $S^1$ is measurably isomorphic to $(-\pi,\pi]$ via the usual parametrization by angle. Therefore we can consider the Lebesgue measure $\lambda$ restricted to $S^1$ and define an $S^1$-indexed family of probability density functions on $S^1$:
\begin{align*}
f_\zeta &: S^1 \rightarrow \R \\
f_\zeta(\theta) &= \frac{1 + \cos(\zeta + \theta)}{2\pi},
\end{align*}
from which we can define an $S^1$-indexed family of probability measures on $S^1$, each with a peak at $\zeta$ but nonzero probability of moving to any interval within $S^1$:
\[
\alpha(\zeta) = f_\zeta \cdot \lambda,
\]
which is to say, for every bounded real-valued measurable function $g$ on $S^1$ we have
\[
\int_{S^1} g \diff \alpha(\zeta) = \int_{S^1} g f_\zeta \diff \lambda.
\]
It follows from the continuity of $\cos$ and arithmetic operations that if $(\zeta_i)_{i \in \mathbb{N}}$ converges to $\zeta$ in $S^1$, then $f_{\zeta_i} \to f_\zeta$ pointwise. By the dominated convergence theorem, this implies convergent sequences in $S^1$ map to convergent sequences in $\Rdnl(S^1)$ under $\alpha$. Since $S^1$ is metrizable, we can conclude that $\alpha$ is continuous. 
\end{example}

\begin{remark}
While these examples are natural, Definition \ref{markov_def_1} might appear unnecessarily restrictive because several practically interesting classes of probabilistic systems do not have a state space endowed with a compact  topology (e.g., $\mathbb{N}$ and $\mathbb{R}$ are not compact) and often the transition functions are not continuous (e.g., they are just Borel measurable). The choice of using the class of compact Hausdorff spaces and continuous transition functions in Definition \ref{markov_def_1} is mostly motivated by mathematical convenience since, as described later, this is the class of topological spaces appearing in the duality theory of Riesz spaces. We will explain in detail in Section \ref{other:models:section} how this is not at all a restriction when it comes to Riesz modal logic.
\end{remark}

The theory of coalgebra (for a comprehensive introduction see \cite{Jacobs2016}) provides a convenient framework for formalizing the notion of morphism between Markov processes. The following is an equivalent reformulation of Definition \ref{markov_def_1} in coalgebraic terms and relies on the fact, discussed earlier, that $\Rdnl$ is an endofunctor on the category $\CHaus$.
\begin{definition}\label{markov_process_coalgebra}
A \emph{Markov process} is a coalgebra of the endofunctor $\Rdnl$ in the category  $\CHaus$, \ie, it is a morphism $\alpha\!:\! X\! \rightarrow\! \Rdnl(X)$ in $\CHaus$. A \emph{(coalgebra) morphism} between the coalgebra $\alpha\!:\! X\! \rightarrow\! \Rdnl(X)$ and the coalgebra $\beta\!:\! Y\! \rightarrow\! \Rdnl(Y)$ is a continuous function $f\!:\!X\!\rightarrow\! Y$ such that the following diagram commutes:
\begin{equation}
\label{MarkovMorphDiag}
\vcenter{\xymatrix{
X \ar[r]^-\alpha \ar[d]_f & \Rdnl(X) \ar[d]^{\Rdnl(f)} \\
Y \ar[r]_-\beta & \Rdnl(Y).
}}
\end{equation}
Such a morphism will be denoted by $\alpha \stackrel{f}{\rightarrow}\beta$.
\end{definition}

\begin{definition}[Category of Markov Processes]
We define the category $\Markov$ of Markov processes to be $\CoAlg(\Rdnl)$ where the objects are coalgebras $\alpha\!:\! X\! \rightarrow\! \Rdnl(X)$  in  $\CHaus$
and morphisms $\alpha \stackrel{f}{\rightarrow}\beta$ are coalgebra morphisms. 
\end{definition}
%T
%%, where $\theta$ can also be considered to be an endomorphism of $X$ in $\Kl(\Rdnl)$. 
%It is well known that $\CoAlg(F)$ is always a category, for any functor $F$. This is also easy to see directly, by composing diagrams vertically for composition and using the fact that the identity morphism $X \rightarrow X$ makes \eqref{MarkovMorphDiag} commute because functors preserve identity morphisms. 

It is a well known fact that $\CoAlg(F)$ is always a category, for any functor $F$. 
In computer science, and in particular in the field of categorical semantics of programming languages, one specific coalgebra in $\CoAlg(F)$ plays an important role. This is (when it exists) the final object $\alpha: X\rightarrow F(X)$ in $\CoAlg(F)$, and is called the \emph{final coalgebra}. The universal property that characterizes $\alpha$ is that, for every other $F$-coalgebra $\beta:Y\rightarrow F(Y)$, there exists one and only one coalgebra morphism $\beta \stackrel{\eta}{\rightarrow} \alpha$ in $\CoAlg(F)$. This property allows to interpret the domain $X$ of $\alpha$ as the space of all ``behaviours'' as follows: given any coalgebra $\beta:Y\rightarrow\Rdnl(Y)$, the behaviour of the state $y$ is the point $\eta(y)\in X$. And two states $y_1,y_2\in Y$ are ``behaviourally equivalent'' if $\eta(y_1)=\eta(y_2)$. 

For this reason, in Section \ref{final:section} we study some properties of the final Markov process, \ie, the final object in $\Markov$.

\subsection{Riesz Spaces}\label{sec:background:riesz}

$ \ $ \\

This section contains the basic definitions and results related to Riesz spaces. We refer to \cite{Luxemburg} for a comprehensive reference to the subject.

A Riesz space is an algebraic structure $(A,0,+,(r)_{r\in\mathbb{R}},\sqcup,\sqcap)$ such that 
$(A,0,+,(r)_{r\in\mathbb{R}})$ is a vector space over the reals, $(A,\sqcup,\sqcap)$ is a lattice and the induced order $(a\leq b \Leftrightarrow a\sqcap b = a)$ is compatible with addition and with the scalar multiplication, in the sense that: (i)  for all $a,b,c\in A$, if $a\leq b$ then $a+c\leq b+c$, and (ii) if $a\geq b$ and $r\in \mathbb{R}_{\geq 0}$ is a non--negative real, then $r a\geq rb$. Formally we have:
\begin{definition}[Riesz Space]
The \emph{language} $\mathcal{L}_R$ of Riesz spaces is given by the (uncountable) signature $\{ 0,+, (r)_{r\in\mathbb{R}}, \sqcup, \sqcap\}$ where $0$ is a constant, $+$, $\sqcup$ and $\sqcap$ are binary functions and $r$ is a unary function, for all $r\in\mathbb{R}$. A \emph{Riesz space} is a $\mathcal{L}_R$-algebra satisfying the equations of Figure \ref{axioms:of:riesz:spaces}. We use the standard abbreviations of $-x$ for $(-1)x$ and $x\leq y$ for $x\sqcap y = x$.
\end{definition}

Note how the compatibility axioms have been equivalently formalized in Figure \ref{axioms:of:riesz:spaces} as inequalities and not as implications by using $(x\sqcap y)$ and $y$ as two general terms automatically satisfying the hypothesis $(x\sqcap y)\leq y$. Since the inequalities can be rewritten as equations using the lattice operations ($x\leq y \Leftrightarrow x\sqcap y = x$), the family of Riesz spaces is a variety in the sense of universal algebra.

\begin{figure}[h!]
\begin{mdframed}
\begin{center}
 \begin{enumerate}
\item Axioms of real vector spaces:
\begin{itemize}
\item Additive group: $x + (y + z) = (x + y) + z$, $x + y = y + x$, $x + 0 = x$, $x - x= 0$,
\item Axioms of scalar multiplication: $r_1(r_2 x) = (r_1\cdot r_2) x$, $1x = x$, $r(x+y) = (rx) + (ry)$, $(r_1 + r_2)x = (r_1 x) + (r_2 x)$,
\end{itemize}
\item Lattice axioms:    (associativity) $x \sqcup (y \sqcup z) = (x \sqcup y) \sqcup z$,  $x \sqcap (y \sqcap z) = (x \sqcap y) \sqcap z$, (commutativity) $z \sqcup y = y \sqcup z$, $z \sqcap y = y \sqcap z$,
(absorption) $z \sqcup (z \sqcap y) = z$, $z \sqcap (z \sqcup y) = z$, (idempotence) $x\sqcup x =x$,  $x\sqcap x =x$. 
\item Compatibility axioms:  
\begin{itemize}
\item $(x \sqcap y) + z \leq  y + z $,
\item $r (x \sqcap y) \leq  ry$, for all scalars $r\geq 0$. 
\end{itemize}
\end{enumerate}
\end{center}
\end{mdframed}
\caption{Equational axioms of Riesz spaces.}
\label{axioms:of:riesz:spaces}
\end{figure}

%In the literature Riesz spaces are also known as (real--)\emph{vector lattices}. 
%Put this paragrah in introduction
%Historically, the study of Riesz spaces was pioneered by F. Riesz, L. Kantorovich and H. Freudenthal among others and was motivated by the the applications to function spaces, as in the examples below. Indeed once an order is defined, one can then define the notion of a positive function, the positive part of a function,  suprema and infima, lim sup and lim inf, \emph{etc}. 

\begin{example}\label{background:example1}
The most familiar example is the real line $\mathbb{R}$ with its usual linear ordering, \ie, with $\sqcup$ and $\sqcap$ being the usual $\max$ and $\min$ operations. An important fact about this Riesz space is the following (see, e.g., \cite{LvA2007}). Given two terms $t_1,t_2$ in the language of Riesz spaces, the equality $t_1=t_2$ holds in all Riesz spaces if and only if $t_1=t_2$ is true in $\mathbb{R}$. In the terminology of universal algebra one says that $\mathbb{R}$ generates the variety of all Riesz spaces. In this sense $\mathbb{R}$ plays in the theory of Riesz spaces a role similar to the two-element Boolean algebra $\{0,1\}$ in the theory of Boolean algebras.
\end{example}

\begin{example}\label{example2_subalgebra}
 For an example of Riesz space whose order is not linear take the vector space $\mathbb{R}^n$ with order defined pointwise: $(x_1,\dots, x_n)\!\leq\! (y_1,\dots, y_n) \Leftrightarrow x_i\!\leq\! y_i$, for all $1\!\leq\! i\!\leq\! n$. More generally, for every set $X$, the set $\mathbb{R}^X=\{ f: X\rightarrow\mathbb{R}\}$ with operations defined pointwise is a Riesz space. Since Riesz spaces are algebras, other examples can be found by taking sub-algebras. For instance, the collection of bounded functions $\ell^\infty(X) = \{ f \in \mathbb{R}^X \mid \textnormal{$f$ is bounded}\}$ is a Riesz subspace of $\mathbb{R}^X$. As another example, if $X$ is a topological space, then the set of continuous functions $C(X)=\{ f \in \mathbb{R}^X \mid \textnormal{$f$ is continuous}\}$ is another Riesz subspace of  $\mathbb{R}^X$.
\end{example}

The following definitions are useful. Let $A$ be a Riesz space. An element $a$ is \emph{positive} if $a\geq 0$. The set of all positive elements is called the \emph{positive cone} and is denoted by $A^+$. Given an element $a\!\in\! A$, we define $a^+=a\sqcup 0$, $a^-= -a\sqcup 0$ and $|a|=a^+ + a^-$. Note that $a^+, a^-, |a|\in A^+$,  $a^+ = (-a)^-$, $a^- = (-a)^+$ and $a= a^+ - a^-$.

\begin{definition}[Archimedean Riesz space]\label{archimedean:def}
An element $a\in A$ of a Riesz space is \emph{infinitely small} if there exists some $b\in A$ such that $n |a|\leq |b|$, for all $n\in\mathbb{N}$. Clearly, $0$ is infinitely small. The Riesz space $A$ is \emph{Archimedean} if $0$ is the only infinitely small element in $A$. Equivalently, $A$ is Archimedean if it satisfies the following (countably) infinitary rule:

\begin{figure}[h!]
\begin{mdframed}
\begin{center}
$\infer[(\mathbb{A})]
                        {a = 0 }            {  |a|\leq | b| \ \ \ \  2|a|\leq |b| \ \ \ \  3|a|\leq |b| \ \ \ \ \dots \ \ \ \  n|a|\leq |b|  \ \ \ \ \dots  }$
\end{center}
\end{mdframed}
\caption{Archimedean Rule}
\label{archimedean_rule:fig}
\end{figure}
\end{definition}

All the Riesz spaces in Examples \ref{background:example1} and \ref{example2_subalgebra} are Archimedean. Not all Riesz spaces are Archimedean, however, as the following example shows.
\begin{example}\label{background:example2} The vector space $\mathbb{R}^2$ with the lexicographic order, defined as $(x_1,y_1)\leq (x_2,y_2) \Leftrightarrow$ either $x_1< x_2$ or $x_1=x_2$ and $y_1\leq y_2$, is not Archimedean. For instance, $(0,1)$ is infinitely small with respect to $(1,0)$.
\end{example}

As usual in universal algebra, a homomorphism between Riesz spaces is a function $f\!:\!A\!\rightarrow\! B$ preserving all operations. Therefore a \emph{Riesz homomorphism} is a linear map preserving finite meets and joins. %Among the several possible definitions of ideals on Riesz spaces, the following one will suffice for our purposes.
\begin{definition}[Ideals and Maximal Ideals]\label{ideals_def}
A subset $J\subseteq A$ of a Riesz space $A$ is an \emph{ideal} if it is the kernel of a homomorphism $f:A\rightarrow B$, in the sense that $J=f^{-1}(\{0\})=\{ a \mid f(a)=0\}$, for some Riesz space $B$.  The sets $\{0\}$ and $A$ itself are trivially ideals. All other ideals are called \emph{proper}.
Ideals in $A$ can be partially ordered by inclusion. An ideal $J\subseteq A$ is called $\emph{maximal}$ if it is a proper ideal and there is no larger proper ideal $J\subsetneq J^\prime$.
\end{definition}
The following alternative characterization of ideals (see, e.g., Section 3.9 of \cite{vulikh}) is often much more simple to deal with.
\begin{proposition}
Let $A$ be a Riesz space. A subset $J\subseteq A$ is an ideal if and only $J$ is a Riesz subspace of $A$ (i.e., closed under all operations) and furthermore, for all $a\in J$ and $b\in A$, if $|a|\in J$ and $|b|\leq |a|$ then $b\in J$.  
\end{proposition}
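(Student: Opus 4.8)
The plan is to prove the two implications separately, exploiting the standard universal-algebra correspondence between ideals and congruences.

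For the forward direction I would assume $J = f^{-1}(\{0\})$ for a Riesz homomorphism $f\colon A \to B$. Since $f$ preserves $0$, $+$, every scalar multiplication and both lattice operations, the preimage of the subalgebra $\{0\}$ is closed under all operations, hence a Riesz subspace. For the solidity clause I would recall that a Riesz homomorphism is monotone (it preserves $\sqcup,\sqcap$, hence $\leq$) and commutes with $|\cdot|$, the latter being term-definable from $\sqcup$, $+$ and scalar multiplication. Thus if $|a|\in J$ and $|b|\leq|a|$, applying $f$ gives $0 \le |f(b)| = f(|b|) \le f(|a|) = |f(a)| = 0$, whence $f(b)=0$ (using that $|x|=0$ forces $x=0$), i.e. $b\in J$.

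For the converse, assume $J$ is a solid Riesz subspace and set $a \sim b \iff a - b \in J$. I would show $\sim$ is a congruence for every operation of $\mathcal{L}_R$. Compatibility with $+$ and each scalar multiplication is immediate from $J$ being a linear subspace. The crux is compatibility with $\sqcup$ and $\sqcap$: given $a - a' \in J$ and $b - b' \in J$, I must show $(a \sqcup b) - (a' \sqcup b') \in J$. Here I would invoke the standard Riesz-space identity $a \sqcup b = \tfrac12(a + b + |a-b|)$ together with the inequalities $|x+y|\le|x|+|y|$ and $\big||x|-|y|\big|\le|x-y|$ to derive the key estimate $\big|(a\sqcup b)-(a'\sqcup b')\big| \le |a-a'| + |b-b'|$, and symmetrically for $\sqcap$. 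Since $J$ is closed under $|\cdot|$ and $+$, the right-hand side $d := |a-a'| + |b-b'|$ lies in $J$ and satisfies $d \ge 0$, so $|d| = d \in J$; the solidity clause, applied with $d$ in the role of $a$ and $(a\sqcup b)-(a'\sqcup b')$ in the role of $b$, then yields $(a\sqcup b)-(a'\sqcup b') \in J$.

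Having established that $\sim$ is a congruence, the quotient $A/{\sim}$ carries a unique $\mathcal{L}_R$-algebra structure making the projection $q\colon A \to A/{\sim}$ a surjective homomorphism; since the Riesz-space axioms are equational and hold in $A$, they are inherited by the homomorphic image $A/{\sim}$, which is therefore a Riesz space. Finally $q$ is a Riesz homomorphism with $q^{-1}(\{0\}) = \{a : a \sim 0\} = J$, exhibiting $J$ as an ideal. The only genuinely non-routine step is the key estimate controlling the difference of two joins (resp. meets); once it is in place, the application of solidity and the descent to the quotient are entirely standard.
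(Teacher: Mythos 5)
Your proof is correct. Note that the paper does not actually prove this proposition: it is stated as a known fact with a pointer to Section 3.9 of Vulikh's book, so there is no internal proof to compare against. Your argument is essentially the standard one from that literature. The forward direction is exactly as you say: kernels of homomorphisms are subalgebras because $\{0\}$ is, and solidity follows since a Riesz homomorphism is monotone and commutes with the term-definable operation $|\cdot|$, so $|b|\leq|a|$ with $f(|a|)=0$ forces $|f(b)|=0$ and hence $f(b)=0$. For the converse, the whole content is the congruence property for $\sqcup$ and $\sqcap$, and your key estimate $\bigl|(a\sqcup b)-(a'\sqcup b')\bigr|\leq|a-a'|+|b-b'|$ does follow as you indicate from $a\sqcup b=\tfrac{1}{2}\bigl(a+b+|a-b|\bigr)$ together with $|x+y|\leq|x|+|y|$ and $\bigl||x|-|y|\bigr|\leq|x-y|$; solidity applied to $d=|a-a'|+|b-b'|\in J$ then closes the argument, and the passage to the quotient is routine universal algebra since the Riesz axioms are equational (inequalities being encoded as lattice equations). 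The only step you leave implicit — that $\sim$ is an equivalence relation — is immediate from $J$ being a linear subspace, so nothing essential is missing.
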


\begin{example}\label{infinitesimals:are:ideals}
Given any Riesz space $A$, the collection of infinitely small elements (see Definition \ref{archimedean:def}) is an ideal of $A$. 
\end{example}

We now introduce the important concept of a strong unit.
\begin{definition}[Strong Unit]
An element $u\in A$ is called a \emph{strong unit} if it is positive (\ie, $u\in A^+$) and for every $a\in A$ there exists $n\in\mathbb{N}$ such that $|a|\leq n (u)$.
\end{definition}

\begin{example}\label{background:example3}
The real line $\mathbb{R}$ has $1$ as strong unit. The space $\mathbb{R}^\mathbb{N}$ does not have a strong unit. Its subspace $\ell^\infty(\mathbb{N})$ consisting of bounded functions has $\one_\mathbb{N}$ (the constant $n\mapsto 1$ function) as strong unit. Similarly, let $X$ be a compact topological space and $C(X)$ the Riesz space of continuous functions into $\mathbb{R}$. Since $X$ is compact, any function $f\in C(X)$ is bounded and therefore $\one_X$ is a strong unit of $C(X)$. 
\end{example}

We now introduce a notion of convergence in Riesz spaces which plays an important role in the duality theory of Riesz spaces.

\begin{definition}[$u$-convergence and $u$-uniform Cauchy sequences]
Let $A$ be a Riesz space and $u$ be a positive element $u\!\geq\! 0$. We say that a sequence $(a_n)_{n\in\mathbb{N}}$ \emph{converges $u$-uniformly} to $b$, written $(a_n)\rightarrow_{u}b$, if for every positive real $\epsilon>0$ there exists a natural number $N_\epsilon$ such that $|b-a_n|\leq \epsilon u$, for all $n>N_\epsilon$. We say that $(a_n)_{n\in\mathbb{N}}$ is a \emph{$u$-uniform Cauchy sequence} if for every $\epsilon >0$ there exists a number $N_\epsilon$ such that $|a_i-a_j|\leq \epsilon u$, for all $i,j>N_\epsilon$. 
\end{definition}
Clearly, if $(a_n)\!\rightarrow_{u}\!b$ then $(a_n)$ is a $u$-uniform Cauchy sequence.

\begin{definition}[uniform completeness]\label{def_uniformly_complete}
A Riesz space $A$ is \emph{$u$-uniformly complete} if for every $u$-uniform Cauchy sequence $(a_n)$ there exists $b\!\in\! A$ such that $(a_n)\rightarrow_{u}b$. It is \emph{uniformly complete} if it is $u$-uniformly complete, for all $u \!\in\! A^+$. 
\end{definition}

We now state important properties related to uniform completeness of Archimedean Riesz spaces with strong unit.

\begin{theorem}[45.5 in \cite{Luxemburg}]\label{theorem_unital_convergence}
If  $A$ is Archimedean and has strong unit $u$, then $A$ is uniformly complete if and only if it is $u$-uniform complete.
\end{theorem}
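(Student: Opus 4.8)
The plan is to prove the two implications separately, with the forward direction immediate and the backward direction carrying all the content. If $A$ is uniformly complete, then by Definition \ref{def_uniformly_complete} it is $w$-uniformly complete for every $w \in A^+$, in particular for $w = u$. So the only substantial task is to show that $u$-uniform completeness implies $v$-uniform completeness for an \emph{arbitrary} $v \in A^+$.

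So fix $v \in A^+$ and let $(a_n)$ be a $v$-uniform Cauchy sequence; I must produce a $v$-uniform limit. First I would exploit the strong unit: since $u$ is a strong unit there is $k \in \mathbb{N}$ with $v \leq k u$. This converts the $v$-uniform Cauchy condition into a $u$-uniform Cauchy condition, because $|a_i - a_j| \leq \epsilon v \leq \epsilon k u$, so after rescaling $\epsilon$ the sequence $(a_n)$ is $u$-uniform Cauchy. By the hypothesis of $u$-uniform completeness there is then $b \in A$ with $(a_n)\rightarrow_{u}b$.

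The crux is to upgrade this $u$-uniform convergence to genuine $v$-uniform convergence, and here the Archimedean property enters. Fix $\epsilon > 0$ and choose $N$ so that $-\epsilon v \leq a_i - a_j \leq \epsilon v$ for all $i,j > N$. Fixing $i > N$, I want to let $j$ tend to infinity in these inequalities. For any $\delta > 0$, $u$-convergence provides some $j > N$ with $b - \delta u \leq a_j \leq b + \delta u$; combining with $a_i - \epsilon v \leq a_j \leq a_i + \epsilon v$ yields $a_i - \epsilon v \leq b + \delta u$ and $b - \delta u \leq a_i + \epsilon v$ for every $\delta > 0$. Setting $c = (a_i - \epsilon v - b)\sqcup 0$, the first inequality gives $c \leq \delta u$ for all $\delta > 0$, hence $n c \leq u$ for all $n$, so $n|c| \leq |u|$; the Archimedean rule of Figure \ref{archimedean_rule:fig} then forces $c = 0$, i.e. $a_i - \epsilon v \leq b$. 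The symmetric argument applied to $(b - a_i - \epsilon v)\sqcup 0$ gives $b \leq a_i + \epsilon v$. Therefore $-\epsilon v \leq b - a_i \leq \epsilon v$, whence $|b - a_i| = (b - a_i)\sqcup(a_i - b) \leq \epsilon v$. As $i > N$ was arbitrary, this is exactly $(a_n)\rightarrow_{v}b$.

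The only places where the hypotheses are genuinely used beyond bookkeeping are the two conversions: the strong unit lets a $v$-uniform Cauchy sequence be seen as $u$-uniform Cauchy, and the Archimedean property is precisely what allows the one-sided order inequalities to be closed under the $u$-uniform limit. I expect the main (though still routine) obstacle to be the careful passage to the limit in these one-sided inequalities, ensuring that the positive-part reduction combined with rule \ref{archimedean_rule:fig} is applied correctly on both sides; the remaining manipulations are direct rescalings.
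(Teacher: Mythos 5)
Your proof is correct. Note that the paper itself gives no argument for this statement at all --- it is quoted as background with a citation to Luxemburg--Zaanen (Theorem 45.5), so there is nothing in the paper to compare step-by-step; what you have written is a sound, self-contained version of the standard argument. Both implications are handled properly: the forward direction is indeed just specialization of Definition \ref{def_uniformly_complete} to $w=u$, and in the backward direction your two conversions are exactly where the hypotheses earn their keep. The strong unit gives $v \leq ku$, turning a $v$-uniform Cauchy sequence into a $u$-uniform Cauchy one, and the Archimedean rule is what lets you close the one-sided inequalities under $u$-uniform limits: your reduction $c = (a_i - \epsilon v - b)\sqcup 0$, $c \leq \delta u$ for all $\delta>0$, hence $n|c| \leq |u|$ for all $n$, hence $c=0$, is a correct application of the rule in Figure \ref{archimedean_rule:fig} (using $|c|=c$ and $|u|=u$ for positive elements), and the final step $|b-a_i| = (b-a_i)\sqcup(a_i-b) \leq \epsilon v$ is the standard identity $|x| = x \sqcup (-x)$. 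One small observation worth making explicit: what you prove in the upgrade step is really the general fact that in an Archimedean space, order inequalities such as $a_j \leq c$ are preserved under $u$-uniform limits; isolating that as a lemma would shorten the symmetric halves, but as written the argument is complete and correct.
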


%Hence the notions of $u$-uniform completeness and uniform completeness coincide on Archimedean spaces with strong unit.

\begin{example}
The Riesz space $\mathbb{R}$ has $1$ as strong unit. It is a $1$-uniformly complete space as the notion of $1$-uniform Cauchy sequence coincides with the usual notion of Cauchy sequence of reals. Therefore $\mathbb{R}$ is uniformly complete. Let $X$ be a compact Hausdorff space, $C(X)$ the Riesz space of continuous functions $f\!:\!X\!\rightarrow\!\mathbb{R}$ and $\one_{X}\!\in\! C(X)$ the constant function $x\mapsto 1$. Then $C(X)$ is $\one_X$-uniformly complete (\cite[Example 27.7, Theorem 43.1]{Luxemburg}). Once again, $C(X)$ is uniformly complete because $\one_X$ is a strong unit.
%Lastly, let $A\subseteq C([0,1])$ be the collection of piecewise linear continuous functions. Then $A$ is a Riesz subspace of $C([0,1])$ and $\one_{[0,1]} \in A$. However $A$ is not $\one_{[0,1]}$-uniformly complete as the $\one_{[0,1]}$-uniform limit of a sequence if generally a continuous but not pointwise linear function.
\end{example}

\begin{theorem}[Theorem 43.1 in \cite{Luxemburg}]\label{normed_space}
Let $A$ be Archimedean with strong unit $u\in A$. Let $\| \_ \| : A\rightarrow\mathbb{R}_{\geq 0}$ be defined as:
\begin{equation}
\| a \| = \inf\{ r\in\mathbb{R} \mid  |a|\leq ru \}
\end{equation}
Then $\|\_\|$ is a norm on $A$, i.e., $\|0\|\!=\!0$, $\| a+b\|\leq \|a\|+\|b\|$ and $\| r a\|= |r|\cdot \|a\|$, for all $a,b\!\in\!A$ and $r\!\in\!\mathbb{R}$. 
\end{theorem}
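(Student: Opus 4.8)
The plan is to verify the three listed identities directly, after first checking that $\|\cdot\|$ is well defined, and to note where each hypothesis is used. Throughout I write $S_a = \{ r \in \mathbb{R} \mid |a| \leq ru\}$, so that $\|a\| = \inf S_a$. First I would establish well-definedness. Because $u$ is a strong unit, for each $a$ there is some $n \in \mathbb{N}$ with $|a| \leq nu$, so $S_a$ is nonempty. It is also bounded below by $0$: if $r \in S_a$ with $r < 0$ then $ru \leq 0$ (since $u \geq 0$), and combined with $0 \leq |a| \leq ru$ this forces $ru = 0$; as $r \neq 0$ this would give $u = 0$, which is impossible in a nontrivial space (the trivial space being immediate). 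Hence $\|a\| = \inf S_a$ exists in $\mathbb{R}$ and is $\geq 0$; note this step uses only the strong unit. The identity $\|0\| = 0$ is then immediate, since $0 = |0| \leq 0 \cdot u$ gives $0 \in S_0$ and therefore $0 \leq \|0\| \leq 0$.

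For homogeneity I would first record the standard Riesz-space identity $|ra| = |r|\,|a|$, which follows from the compatibility axioms (multiplication by a positive scalar is monotone, and in fact a lattice homomorphism) together with $|{-a}| = |a|$. The case $r = 0$ is covered by $\|0\| = 0$. For $r \neq 0$ and any scalar $s$, using this identity and then multiplying by the positive scalar $1/|r|$ gives $|ra| \leq su \iff |r|\,|a| \leq su \iff |a| \leq (s/|r|)\,u$, so $S_{ra} = |r|\cdot S_a$. Since $t \mapsto |r|\,t$ is an increasing bijection of $\mathbb{R}$ it preserves infima, and therefore $\|ra\| = |r|\,\|a\|$.

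For subadditivity I would use the triangle inequality $|a+b| \leq |a| + |b|$, itself a consequence of $\pm(a+b) \leq |a| + |b|$ together with $|x| = x \sqcup (-x)$. Fix $\epsilon > 0$ and choose $r \in S_a$, $s \in S_b$ with $r < \|a\| + \epsilon$ and $s < \|b\| + \epsilon$, which is possible by the definition of the infimum. Then $|a+b| \leq |a| + |b| \leq ru + su = (r+s)u$, so $r + s \in S_{a+b}$ and $\|a+b\| \leq r + s < \|a\| + \|b\| + 2\epsilon$; letting $\epsilon \to 0$ gives $\|a+b\| \leq \|a\| + \|b\|$.

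Although the listed conditions do not mention it, for $\|\cdot\|$ to be a genuine norm one also needs $\|a\| = 0 \Rightarrow a = 0$, and this is precisely where the Archimedean hypothesis enters: if $\|a\| = 0$ then $|a| \leq \tfrac{1}{n}\,u$, equivalently $n|a| \leq u$, for every $n \in \mathbb{N}$, so $a$ is infinitely small and hence $a = 0$ by rule $(\mathbb{A})$. The computations are routine once the two lattice identities $|ra| = |r|\,|a|$ and $|a+b| \leq |a| + |b|$ are in hand; the only points needing real care are the well-definedness argument (nonemptiness from the strong unit and the lower bound $0$) and the bookkeeping that $S_a$ rescales exactly by $|r|$, so that the infimum transforms as claimed.
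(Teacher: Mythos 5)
Your proof is correct. Note that the paper does not prove this statement at all --- it is imported verbatim from the literature (Theorem 43.1 of Luxemburg--Zaanen), so there is no in-paper argument to compare against; your direct verification (well-definedness of the infimum from the strong unit, homogeneity via $|ra| = |r|\,|a|$ and rescaling of $S_a$, subadditivity via $|a+b| \leq |a|+|b|$ and an $\epsilon$-argument) is exactly the standard one. A particularly good point is your observation that the Archimedean hypothesis is invisible in the three listed identities and enters only through definiteness ($\|a\| = 0 \Rightarrow a = 0$, via $n|a| \leq u$ for all $n$), which is precisely why the paper assumes it: without it $\|\cdot\|$ is merely a seminorm. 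The only blemish is the degenerate case $A = \{0\}$, where $u = 0$ qualifies as a strong unit under the paper's definition and the formula literally gives $\inf \mathbb{R} = -\infty$; your parenthetical dismissal is fine in spirit, but strictly speaking the formula itself fails there rather than being ``immediate,'' a pedantic point that every standard treatment ignores.
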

As a consequence, each Archimedean Riesz space with strong unit is a normed vector space and therefore can be endowed with the metric $d_A\!:\!A^2\rightarrow\mathbb{R}_{\geq 0}$ defined as $d_A(a,b)\!=\! \| a- b  \|$. Accordingly, we say that a Riesz homomorphism $f\!:\!A\!\rightarrow\! B$ between Archimedean spaces with strong units is \emph{continuous} (resp. is an \emph{isometry}) if it is continuous (resp. distance preserving) with respect to the metrics of $A$ and $B$.

Importantly, on Archimedean spaces with strong unit, the notion of uniform convergence and convergence in the norm (i.e., in the metric $d$) coincide.
\begin{theorem}[(Theorem 43.1 in \cite{Luxemburg})]
Let $A$ be an Archimedean Riesz space with strong unit $u$. A sequence $(a_n)$ converges $u$-uniformly to $b$ if and only if $(a_n)$ converges in norm to $b$. The space $A$ is uniformly complete if and only if it is complete as a metric space.
\end{theorem}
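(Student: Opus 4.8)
The plan is to reduce everything to a single order-theoretic translation of the norm $\|\cdot\|$ from Theorem~\ref{normed_space}, and then read off both sentences. The key observation I would establish first is the pair of implications, valid for any $c \in A$ and any real $\epsilon > 0$: (a) if $|c| \leq \epsilon u$ then $\|c\| \leq \epsilon$, and (b) if $\|c\| < \epsilon$ then $|c| \leq \epsilon u$. Implication (a) is immediate, since $\epsilon$ then belongs to the set $\{\, r \mid |c| \leq ru \,\}$ whose infimum is by definition $\|c\|$. For (b), $\|c\| < \epsilon$ means this infimum lies strictly below $\epsilon$, so there is some $r < \epsilon$ with $|c| \leq ru$; as $u \geq 0$ and $r \leq \epsilon$ we have $ru \leq \epsilon u$, giving $|c| \leq \epsilon u$. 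Neither implication uses the Archimedean hypothesis. I would remark in passing that, using Definition~\ref{archimedean:def}, one can upgrade (b) to the attained form $\|c\| \leq \epsilon \Rightarrow |c| \leq \epsilon u$: applying (b) with $\epsilon + \tfrac1n$ shows $(|c| - \epsilon u)^+ \leq \tfrac1n u$ for every $n$, so $(|c|-\epsilon u)^+$ is infinitely small and hence $0$; but this sharper form is not needed below.

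With (a) and (b) in hand, the first sentence follows directly. If $(a_n) \rightarrow_u b$, then for each $\epsilon > 0$ eventually $|b - a_n| \leq \epsilon u$, so by (a) eventually $\|b - a_n\| \leq \epsilon$; thus $\|b - a_n\| \to 0$, i.e.\ $(a_n)$ converges in norm. Conversely, if $(a_n)$ converges in norm then for each $\epsilon > 0$ eventually $\|b - a_n\| < \epsilon$, so by (b) eventually $|b - a_n| \leq \epsilon u$, which is exactly $u$-uniform convergence. Applying the same two implications to the differences $a_i - a_j$ shows, identically, that a sequence is a $u$-uniform Cauchy sequence if and only if it is a Cauchy sequence for the metric $d_A(a,b) = \|a - b\|$.

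For the second sentence I would combine these equivalences with Theorem~\ref{theorem_unital_convergence}, which is where the Archimedean and strong-unit hypotheses actually enter. By that theorem, $A$ is uniformly complete if and only if it is $u$-uniformly complete, so it suffices to prove that $u$-uniform completeness coincides with metric completeness. This is now immediate from the previous paragraph: a $u$-uniform Cauchy sequence is the same object as a metric Cauchy sequence, and $u$-uniform convergence to a candidate limit is the same condition as metric convergence to it. Hence every $u$-uniform Cauchy sequence $u$-converges precisely when every metric Cauchy sequence converges in norm.

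I do not anticipate a genuine obstacle; the whole argument is an unwinding of the definition of $\|\cdot\|$ as an infimum. The only point requiring care is the strict-versus-non-strict inequality in (b): because the defining infimum need not a priori be attained, one must route through a strict bound $\|c\| < \epsilon$ rather than $\|c\| \leq \epsilon$. This is harmless, since both the notion of convergence and that of Cauchy sequence quantify over all $\epsilon > 0$, so shrinking $\epsilon$ slightly never costs anything.
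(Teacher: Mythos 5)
Your proof is correct. One thing to note at the outset: the paper itself gives no proof of this statement --- it is imported verbatim as Theorem 43.1 of \cite{Luxemburg} --- so there is no internal argument to compare against; what you have written is essentially the standard textbook argument, reconstructed correctly. The two translation lemmas (a) and (b) are right, including the observation that neither needs the Archimedean hypothesis, and your handling of the strict-versus-non-strict inequality in (b) is exactly the care the infimum definition of $\|\cdot\|$ requires. The equivalence of $u$-uniform Cauchy/convergent with metric Cauchy/convergent then follows as you say, and you correctly isolate the only place where the Archimedean and strong-unit hypotheses genuinely enter the second sentence: the passage between $u$-uniform completeness and full uniform completeness, which you delegate to Theorem \ref{theorem_unital_convergence} (itself a prior background result in the paper, so no circularity). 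Your side remark upgrading (b) to the attained form via infinitely small elements is also sound, though, as you note, unnecessary. The only point left tacit is that $d_A$ is a genuine metric (definiteness of the norm), which is where Archimedeanness is used for the first sentence to even make sense; the paper's Theorem \ref{normed_space} and the discussion following it already cover this, so nothing is missing.
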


%We also observe that, for an Archimedean Riesz space with strong unit $u$, uniform completeness is equivalent to completeness in the norm \cite[p. 309]{Luxemburg}\todo{Do we need this observation? Unless the norm (and the associated metric) is used, I would avoid mentioning it in this paper.}
%\[
%\| a \| = \inf \{ r \in \Rgeq \mid -r u \leq a \leq r u \}.
%\]

\subsection{Riesz Spaces with a distinguished positive element}
\label{sec:riesz:back2}

$ \ $ \\

It is now convenient to extend the language of Riesz spaces with a new constant symbol $u$ for a positive element.

\begin{definition}\label{riesz_positive_def}
A Riesz space with \emph{distinguished positive element} $u$ is a pair $(A,u)$ where $A$ is a Riesz space and $u\geq 0$.  A morphism between $(A,u)$ and $(B,v)$ is a Riesz homomorphism $f:A\rightarrow B$ such that $f(u)=v$. If $u$ is a strong unit in $A$ we say that $(A,u)$ is \emph{unital}.
\end{definition}

When confusion might arise, we will stress the fact that a homomorphism $f\!:\!(A,u)\!\rightarrow\!(B,v)$ preserves the distinguished positive elements (\ie, $f(u)\!=\!v$) by saying that $f$ is a \emph{unital (Riesz) homomorphism}. We write $\CRiesz^{u}$ for the category having Riesz spaces $(A,u)$ with a distinguished positive element as objects and unital homomorphisms as morphisms. We write $\URiesz$ for the subcategory of $\CRiesz^{u}$ whose objects are unital Riesz spaces.

\begin{example}\label{background:example_unital}
The basic example is the real line $(\mathbb{R},1)$. Since $1$ is a strong unit, this is in fact a unital Riesz space. Furthermore it follows easily from the result mentioned in Example \ref{background:example1} that $(\mathbb{R},1)$ generates the variety $\CRiesz^{u}$.
\end{example}

The following theorem (see, e.g., \cite[Thm 27.3-4]{Luxemburg}) expresses a key property of unital Riesz spaces. 
\begin{theorem}
\label{MaxIdealTheorem}
Let $(A,u)$ be a unital Riesz space. Then, for every unital homomorphism $f\!:\!(A,u)\!\rightarrow\!(\mathbb{R},1)$, the ideal $f^{-1}(0)$ is maximal. Conversely, every maximal ideal $J$ in $(A,u)$ is of the form $f_J^{-1}(0)$ for a unique unital Riesz homomorphism $f_J\!:\!(A,u)\!\rightarrow\!(\mathbb{R},1)$.
\end{theorem}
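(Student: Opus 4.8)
The plan is to prove the two directions separately and then establish uniqueness.

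\emph{First direction.} Given a unital homomorphism $f\colon(A,u)\to(\mathbb{R},1)$, the set $f^{-1}(0)$ is an ideal by Definition \ref{ideals_def} (it is literally the kernel of a homomorphism), and it is proper since $f(u)=1\neq 0$ forces $u\notin f^{-1}(0)$. To see it is maximal, suppose $J$ is any ideal with $f^{-1}(0)\subsetneq J$ and pick $a\in J\setminus f^{-1}(0)$. As an ideal, $J$ is a Riesz subspace, so $|a|\in J$, and because $f$ preserves the lattice operations we have $f(|a|)=|f(a)|=c>0$. Then the element $u-\tfrac{1}{c}|a|$ lies in $f^{-1}(0)\subseteq J$ (since $f$ sends it to $1-\tfrac{1}{c}c=0$), while $\tfrac{1}{c}|a|\in J$; adding these and using closure under $+$ shows $u\in J$. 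Since $u$ is a strong unit, every $b$ satisfies $|b|\leq nu$ for some $n$, so $b\in J$ by the domination property of ideals, giving $J=A$. Hence $f^{-1}(0)$ is maximal.

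\emph{Second direction.} Given a maximal ideal $J$, I would pass to the quotient $q\colon A\to A/J$, which is a Riesz space with distinguished strong unit $\bar u=q(u)$. By the ideal correspondence, the ideals of $A/J$ are in bijection with the ideals of $A$ containing $J$, so maximality of $J$ makes $A/J$ \emph{simple}: its only ideals are $\{0\}$ and $A/J$. The crux is to identify this simple unital quotient with $(\mathbb{R},1)$. I would do this in two steps: (a) show $A/J$ is \emph{totally ordered}, using that a maximal ideal in a Riesz space with strong unit is prime, equivalently that the quotient has no pair of incomparable elements; and (b) show $A/J$ is \emph{Archimedean}, so that no nonzero element is infinitely small relative to $\bar u$. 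Granting (a) and (b), a totally ordered Archimedean Riesz space embeds, by a H\"older-type argument, as a real-linear Riesz subspace of $\mathbb{R}$ sending $\bar u\mapsto 1$; since a nonzero linear subspace of $\mathbb{R}$ is all of $\mathbb{R}$, this embedding is an isomorphism $A/J\cong(\mathbb{R},1)$. Composing with $q$ yields the desired unital homomorphism $f_J$ with $f_J^{-1}(0)=J$.

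\emph{Uniqueness.} If $f,g\colon(A,u)\to(\mathbb{R},1)$ are unital homomorphisms with $f^{-1}(0)=g^{-1}(0)=J$, then both factor through $q$ as induced maps $\bar f,\bar g\colon A/J\to(\mathbb{R},1)$, each injective (kernel exactly $J$) and surjective (nonzero subspace of $\mathbb{R}$), hence isomorphisms. Then $\bar g\circ\bar f^{-1}$ is a unital Riesz automorphism of $(\mathbb{R},1)$; but a Riesz homomorphism is $\mathbb{R}$-linear, hence multiplication by a scalar, and fixing $1$ forces that scalar to be $1$. Thus $\bar f=\bar g$, and therefore $f=g$.

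\emph{Main obstacle.} The genuinely nontrivial content is the second direction, specifically steps (a) and (b): that the quotient by a maximal ideal is both totally ordered and Archimedean. Total order comes from primeness of maximal ideals, which is where the strong unit $u$ is essential, while the Archimedean property is the more delicate point — it is exactly what rules out pathological quotients such as the lexicographic $\mathbb{R}^2$ of Example \ref{background:example2}, and it is what licenses the final embedding into $\mathbb{R}$. This is the part where I would lean on the established theory of Riesz spaces, in particular the norm furnished by Theorem \ref{normed_space}: maximal ideals are closed in this norm, and the induced quotient norm interacts cleanly with the order, which is the natural route to establishing the Archimedean property of $A/J$.
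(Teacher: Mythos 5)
Your overall architecture is the right one and, in fact, coincides with the proof in the reference that the paper itself relies on: the paper does not prove Theorem \ref{MaxIdealTheorem} but cites \cite[Thm 27.3-4]{Luxemburg}, whose argument is exactly your second direction (quotient by a maximal ideal is simple, hence totally ordered and Archimedean, hence isomorphic to $(\mathbb{R},1)$ by a H\"older-type argument). Your first direction is complete and correct, and so is your uniqueness argument. The genuine gap is in your plan for step (b), the step you yourself single out as the crux. You propose to get the Archimedean property of $A/J$ from the norm of Theorem \ref{normed_space}, but that theorem \emph{presupposes that $A$ is Archimedean}, while Theorem \ref{MaxIdealTheorem} is stated for an arbitrary unital Riesz space: by Definition \ref{riesz_positive_def}, ``unital'' only means that $u$ is a strong unit. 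Non-Archimedean unital spaces exist --- the lexicographic plane of Example \ref{background:example2} with unit $(1,0)$ is one --- and on such a space $\inf\{r \mid |a|\leq ru\}$ vanishes on every infinitely small element, so it is merely a seminorm and the phrase ``maximal ideals are closed in this norm'' cannot carry the argument. This case cannot be set aside, since the paper immediately uses Theorem \ref{MaxIdealTheorem} (in Theorem \ref{unit_archimedean_thm3}) precisely to \emph{characterize} which unital spaces are Archimedean.

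The fix is purely order-theoretic and needs no norm: Archimedeanness of $A/J$ follows from the simplicity you have already established. Suppose $\bar a \neq 0$ were infinitely small in $A/J$, say $n|\bar a| \leq |\bar b|$ for all $n\in\mathbb{N}$ and some $\bar b$. Consider
\[
I \;=\; \bigl\{\, \bar c \in A/J \;\bigm|\; n|\bar c| \leq |\bar b| \text{ for all } n\in\mathbb{N} \,\bigr\}.
\]
This is an ideal: it is solid by construction, closed under addition because $2n|\bar c|\leq |\bar b|$ and $2n|\bar d|\leq |\bar b|$ give $n|\bar c + \bar d| \leq n|\bar c| + n|\bar d| \leq |\bar b|$, and closed under scalars since $n|r\bar c| \leq \lceil n|r|\rceil\, |\bar c|$. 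It contains $\bar a \neq 0$, and it does not contain $\bar b$: otherwise $n|\bar b| \leq |\bar b|$ for all $n$ forces $|\bar b| = 0$, hence $\bar b = 0$, hence $\bar a = 0$, a contradiction. So $I$ is a nonzero proper ideal of $A/J$, contradicting simplicity. (The same simplicity argument also yields the total order for your step (a): two disjoint nonzero positive elements would generate a nonzero proper principal ideal.) With this replacement your proof goes through; note also that the strong unit is not what makes maximal ideals prime --- primeness holds in any Riesz space --- but is what guarantees $u \notin J$, so that the isomorphism $A/J \cong \mathbb{R}$ can be normalized to send $[u]$ to $1$.
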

Hence there is a one-to-one correspondence between maximal ideals in unital Riesz spaces $(A,u)$ and homomorphisms into $(\mathbb{R},1)$ preserving the unit. Observe, once again (cf. Examples \ref{background:example1} and \ref{background:example_unital}), how the Riesz space $(\mathbb{R},1)$ plays, in the theory of unital Riesz spaces, a role similar to two element Boolean algebra $\{0,1\}$, in the theory of Boolean algebras. 
% We caution the reader at this point that the uniqueness would be lost if we did not require preservation of the unit.
%\begin{theorem}
%Let $(A,u)$ be a unital Riesz space. Then every ideal in $A$ can be extended to a maximal ideal.
%\end{theorem}

We say that a unital Riesz space $(A,u)$ is Archimedean if $A$ is Archimedean. We write $\AURiesz$ for the category of Archimedean unital Riesz spaces with unital Riesz homomorphisms. We write $\CAURiesz$ for the category of Archimedean and uniformly   complete unital Riesz spaces with unital Riesz homomorphisms.
$$
\CAURiesz \hookrightarrow \AURiesz \hookrightarrow \URiesz \hookrightarrow \CRiesz^{u}
$$

\begin{example}\label{CAU_example}
Let $X$ be a compact Hausdorff space. Let $\one_{X}$ be the constant $(x\mapsto 1)\in C(X)$ function. Then $(C(X),\one_{X})$ is an Archimedean unital  and uniformly complete  Riesz space \cite[Example 27.7, Theorem 43.1]{Luxemburg}.
\end{example}

The following results describe the property of being Archimedean for unital Riesz spaces. 
\begin{theorem}\label{unit_archimedean_thm3}
Let $(A,u)$ be a unital Riesz space. An element $a\in A$ is infinitely small if and only if $n|a|\leq u$, for all $n\in\mathbb{N}$. This means that the Archimedean rule (cf. Definition \ref{archimedean:def}) can be equivalently reformulated as follows:
\begin{center}
$\infer[\mathbb{A}]
                        {a = 0 }            {  |a|\leq u \ \ \ \  2|a|\leq u \ \ \ \  3|a|\leq u \ \ \ \ \dots \ \ \ \  n|a|\leq u \ \ \ \ \dots  }$
\end{center}
Furthermore, $A$ is Archimedean if and only if  for every $a\!\neq\! 0$  there exists a unital Riesz homomorphism $f\!:\!(A,u)\!\rightarrow\!(\mathbb{R},1)$ such that $f(a)\!\neq\! 0$.
\end{theorem}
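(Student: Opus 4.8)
The plan is to prove the two assertions in turn, reserving most of the effort for the harder ``only if'' direction of the second.

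For the first biconditional the strong unit does all the work. If $n|a|\le u$ for every $n$, then taking $b=u$ (so that $|b|=u$) witnesses that $a$ is infinitely small by definition. Conversely, if $a$ is infinitely small, fix $b$ with $n|a|\le|b|$ for all $n$; since $u$ is a strong unit there is some $m$ with $|b|\le mu$, whence $n|a|\le mu$ for every $n$. Applying this with $n=mk$ and multiplying through by the positive scalar $1/m$ gives $k|a|\le u$ for all $k$, as required. This immediately legitimises the reformulated rule $(\mathbb{A})$.

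For the second claim the easy direction is ``if''. Suppose every $a\neq0$ is separated by some unital homomorphism into $(\mathbb{R},1)$, and suppose toward a contradiction that $a$ is infinitely small with $a\neq0$. By the first part $n|a|\le u$ for all $n$; choosing $f\colon(A,u)\to(\mathbb{R},1)$ with $f(a)\neq0$ and using that $f$ is a positive Riesz homomorphism (so $f(|a|)=|f(a)|>0$ and $f(u)=1$) yields $n\,|f(a)|\le 1$ for all $n$, contradicting the Archimedean property of $\mathbb{R}$. Hence $0$ is the only infinitely small element and $A$ is Archimedean.

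The substantive direction is ``only if''. I would first reduce to the case $a\ge0$, $a\neq0$: it suffices to separate $|a|$, since any Riesz homomorphism satisfies $f(|a|)=|f(a)|$, and $|a|\neq0$ because $a\neq0$. Using the norm of Theorem \ref{normed_space}, the Archimedean hypothesis and the first part force $\|a\|>0$, for $\|a\|=0$ would give $n|a|\le u$ for all $n$, making $a$ infinitely small and hence $0$. By the correspondence of Theorem \ref{MaxIdealTheorem} the goal is to find a unital homomorphism $f$ with $f(a)=\|a\|>0$ (equivalently a maximal ideal omitting $a$). To produce one I would pass to the state space $S=\{f\colon A\to\mathbb{R}\mid f\text{ linear},\ f\ge0,\ f(u)=1\}$, which is convex and weak-$*$ compact. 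A Hahn--Banach extension of the functional $ta\mapsto t\|a\|$, dominated by the sublinear map $x\mapsto\inf\{r\mid x\le ru\}$, yields an $f\in S$ with $f(a)=\|a\|$, where positivity of $f$ and $f(u)=1$ are read off from the values of the dominating functional on $\pm u$. The maximisers $\{f\in S\mid f(a)=\|a\|\}$ form a nonempty weak-$*$ compact convex face of $S$, so Krein--Milman supplies an extreme point $f_0$, which is then extreme in $S$; the classical fact that the extreme points of $S$ are exactly the Riesz (lattice) homomorphisms gives the desired separating unital homomorphism.

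The main obstacle is precisely this last step: obtaining a genuine \emph{lattice} homomorphism rather than merely a positive linear functional. Hahn--Banach on its own only separates by positive functionals; upgrading to a Riesz homomorphism is where the real content lies and relies on the extreme-point characterisation of the state space (equivalently, the semisimplicity of Archimedean unital Riesz spaces, which is the heart of Yosida's representation). A tempting shortcut is to pass to the uniform completion and invoke a Kakutani-type representation as some $C(X)$, but this risks circularity with the duality developed later, so I would prefer the self-contained Hahn--Banach/Krein--Milman route above.
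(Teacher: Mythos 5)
Your proposal is correct, but note that the paper never proves this statement: it is quoted in the background section as a known result of Riesz space theory (it is, in essence, the semisimplicity half of Yosida's representation, implicit in the later assertion of Theorem \ref{YosidaAdjunctionTheorem} that the counit $\epsilon_A$ embeds an Archimedean $A$ into $C(\Spec(A))$). So the comparison is between your self-contained argument and a citation. Your first biconditional (rescaling $n|a|\leq |b|\leq mu$ to get $k|a|\leq u$) and the easy ``if'' direction of the separation claim are the standard complete arguments. For the hard direction you follow the classical functional-analytic route: reduce to $a=|a|\geq 0$, get $\|a\|>0$ from the first part, produce a state with $f(a)=\|a\|$ by Hahn--Banach against $p(x)=\inf\{r \mid x\leq ru\}$ (the reduction to $a\geq 0$ is exactly what makes $p$ agree with the norm on the ray through $a$, so the domination hypothesis holds), then upgrade to a lattice homomorphism via Krein--Milman on the face of maximizers plus the classical fact that extreme states are precisely the unital Riesz homomorphisms. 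This is the standard proof of Yosida/Kakutani semisimplicity and is non-circular, as you observe. The one caveat is that this extreme-point characterization, which you invoke without proof, carries all the lattice-theoretic content of the theorem --- everything before it only produces a positive linear functional; invoking it is legitimate (it is no deeper than what the paper itself implicitly cites), but a fully self-contained proof would have to establish it. Two small corrections: positivity of the Hahn--Banach extension is read off from $f(-x)\leq p(-x)\leq 0$ for $x\geq 0$, while the values at $\pm u$ give unitality; and weak-$*$ compactness of the state space $S$ also uses that states are automatically norm-bounded, via $|f(x)|\leq f(|x|)\leq \|x\| f(u)$.
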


\begin{corollary}\label{hom:map:inf2inf}
Let $(A,u)$ and $(B,v)$ be unital Riesz spaces and $f:A\rightarrow B$ a unital Riesz homomorphism. If $a\in A$ is infinitely small then $f(a)\in B$ is also infinitely small.
\end{corollary}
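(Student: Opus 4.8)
The plan is to reduce everything to the reformulation of ``infinitely small'' provided by Theorem \ref{unit_archimedean_thm3}. In a unital Riesz space $(A,u)$ that theorem states that $a$ is infinitely small if and only if $n|a|\leq u$ for every $n\in\mathbb{N}$. So first I would record the hypothesis in this form: since $a\in A$ is infinitely small, we have $n|a|\leq u$ for all $n$; and since $f$ is unital, $f(u)=v$, where $v$ is by assumption a strong unit of $B$. The goal then becomes to establish $n|f(a)|\leq v$ for every $n\in\mathbb{N}$, from which Theorem \ref{unit_archimedean_thm3}, now applied inside $(B,v)$, yields that $f(a)$ is infinitely small.

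Next I would exploit the fact that a Riesz homomorphism preserves \emph{all} the operations of the signature. In particular it commutes with the absolute value: writing $|a| = (a\sqcup 0) + \bigl((-a)\sqcup 0\bigr)$ and using that $f$ is linear and preserves $\sqcup$, one gets $f(|a|)=|f(a)|$. I would also note that $f$ is monotone, since if $x\leq y$, i.e.\ $x\sqcap y = x$, then $f(x)\sqcap f(y)=f(x\sqcap y)=f(x)$, so $f(x)\leq f(y)$.

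Finally, applying $f$ to each inequality $n|a|\leq u$ and using monotonicity together with linearity, namely $f(n|a|)=n\,f(|a|)=n|f(a)|$, and $f(u)=v$, delivers $n|f(a)|\leq v$ for all $n\in\mathbb{N}$. By Theorem \ref{unit_archimedean_thm3} this is precisely the assertion that $f(a)$ is infinitely small in $(B,v)$, completing the argument. I do not expect any genuine obstacle here; the only step deserving a moment's care is the verification that a Riesz homomorphism commutes with $|\cdot|$ and is order-preserving, and both of these are immediate consequences of the definition of a homomorphism as a map preserving every operation.
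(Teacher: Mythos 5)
Your proof is correct and follows essentially the same route as the paper's: both reduce to the characterization of infinitely small elements in unital Riesz spaces given by Theorem \ref{unit_archimedean_thm3} ($n|a|\leq u$ for all $n$), then apply $f$ using monotonicity, linearity, preservation of $|\cdot|$, and unitality to obtain $n|f(a)|\leq v$ for all $n$. The only difference is that you spell out the verification that a Riesz homomorphism is monotone and commutes with $|\cdot|$, which the paper takes as immediate.
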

\begin{proof}
By assumption we have that for all $n\in\mathbb{N}$ the inequality $n|a|\leq u$ holds. Since $f$ is a homomorphism we have that $f$ is monotone (and thus $f(n|a|)\leq f(u)$) and that $f(n|a|) = n|f(a)|$. Furthermore, since $f$ is unital, we have that $f(u)=v$. Therefore we have:
$$
f(n|a|)  =  n| f(a) | \leq v
$$
for all $n\in\mathbb{N}$ which means that $f(a)$ is infinitely small.
\end{proof}

\subsection{Yosida's Theorem and Duality Theory of Riesz Spaces}\label{duality_section_background}

$ \ $ \\

In this section we assume familiarity with the basic notions from category theory regarding  equivalences of categories and adjunctions. A standard reference is \cite{maclane}.

The celebrated Stone duality theorem states that any Boolean algebra $B$ is isomorphic to the Boolean algebra of clopen sets (or equivalently continuous functions $f\!:\!X\!\rightarrow\!\{0,1\}$ where $\{0,1\}$ is given the discrete topology) of a unique (up to homeomorphism) \emph{Stone space}, \ie, a compact Hausdorff and zero--dimensional topological space $X$.  Here $X$ is the collection $\Spec(B)$ of maximal (Boolean) ideals in $B$ endowed with the \emph{hull--kernel} topology. In fact this correspondence can be made into a \emph{categorical equivalence} between $\cat{Stone}$ and $\cat{Bool}\op$.

A similar representation theorem, due to Yosida \cite{yosida}, states that every uniformly complete, unitary and Archimedean Riesz space $(A,u)$ is isomorphic to $(C(X),\one_X)$, the Riesz space of all continuous functions $f\!:\!X\!\rightarrow\!\mathbb{R}$, of a unique (up to homeomorphism) compact Hausdorff space $X$. This correspondence can be made into a categorical equivalence
\begin{equation}
\label{YosidaEquivDisplayEqn}
\CHaus \simeq \CAURiesz\op
\end{equation}
see, e.g., \cite{westerbaan2016} for a detailed proof. 
%contains a proof of \eqref{YosidaEquivDisplayEqn} making full use of categorical language.
In fact Yosida proved a more general result which can be  conveniently formulated as an adjunction between $\CHaus$ and $\AURiesz\op$ which restricts to the equivalence (\ref{YosidaEquivDisplayEqn}) on the subcategory $\CAURiesz\op$.
%\eqref{YosidaEquivDisplayEqn} if we require the unit and counit of the adjunction to be isomorphisms (Theorem \ref{YosidaDualityTheorem} below). In this section we describe this adjunction. For a detailed exposition see \cite{yosida} and \cite[\S 45]{Luxemburg}. %The article \cite{westerbaan2016} contains a proof of \eqref{YosidaEquivDisplayEqn} making full use of categorical language.%Once again, the space $X$ is the collection of all maximal (Riesz space) ideals in $(R,u)$ endowed the \emph{hull--kernel} topology.
%Yosida defined two families of isomorphisms in \cite{yosida} that can be used to show that $\CHaus \simeq \CAURiesz\op$. 
In the rest of this section we describe it as a unit--counit adjunction $(\eta,\epsilon): C \dashv \Spec$ consisting of two functors:

\hspace{1cm}$C\! : \!\CHaus \!\rightarrow\! \CAURiesz\op\! \hookrightarrow\!  \AURiesz\op$

\hspace{1cm}$\Spec\! :\! \AURiesz\op \!\rightarrow\! \CHaus$

\noindent and two natural transformations:

\hspace{1cm}$\eta: \id_{\CHaus} \Rightarrow \Spec\circ C$

\hspace{1cm}$\epsilon: C\circ \Spec \Rightarrow \id_{\AURiesz\op}$

\noindent 
called unit and counit, respectively.

We first define the functor  $C \!:\! \CHaus \rightarrow \CAURiesz\op$. 

On objects, for a compact Hausdorff space $X$, we define $C(X)$ as the set of continuous real-valued functions on $X$, equipped with the Riesz space operations defined pointwise from those on $\R$ (see Example \ref{example2_subalgebra}) and strong unit $\one_X$ ($x\mapsto 1$). As discussed earlier (see Example \ref{CAU_example}) this is indeed a uniformly complete Archimedean and unital Riesz space. On continuous maps $f \!:\! X\! \rightarrow \! Y$, we define
$C(f)(b) = b \circ f$, for all $b \!\in\! C(Y)$. This is easily proven to be a unital Riesz space morphism by the fact that the Riesz space operations are defined pointwise. 

We now turn our attention to the description of the functor $\Spec : \AURiesz\op \rightarrow \CHaus$. 

As in the Stone duality theorem, on objects $(A,u)$ in $\AURiesz$, the functor $\Spec(A)$ is defined as the \emph{spectrum of $A$}, \ie, the collection of all maximal ideals of $A$ (see Definition \ref{ideals_def}) equipped with the hull--kernel topology which can be defined as follows. %A basis for the open sets is given by the sets $U_a$, for $a\in A$, of the form $U_a = \{ J \in \Spec(A)\mid a\not\in J\}$. Equivalently,  
A subset $X\subseteq \Spec(A)$ is closed in the hull--kernel topology if and only if there exists a (not necessarily maximal) ideal $I\subseteq A$ such that $X=\hull(I)$ where $
\hull(I) = \{ J \in \Spec(A) \mid I \subseteq J \}$. %One can in fact show that this correspondence between ideals in $A$ and closed sets in $\Spec(A)$  is one-to-one. \todo{Check this.}
See, e.g., \cite[Theorem 36.4 (ii)]{Luxemburg} for a proof that $\Spec(A)$ is indeed a compact Hausdorff space.
%On an Archimedean Riesz space that has a strong unit, the hull-kernel topology is a compact Hausdorff topology on $\Spec(A)$ \cite[Theorem 36.4 (ii)]{Luxemburg}. 
On maps, for a unital morphism $f \!:\! (A,u_A) \!\rightarrow\! (B, u_B)$ we define, for every $J \in \Spec(B)$,  $\Spec(f)(J) = f^{-1}(J)$.

We now turn our attention to the description of the unit $\eta\!:\! \id_{\CHaus} \!\Rightarrow \!\Spec\circ C$. This is a collection of maps $\{ \eta_{X} :  X\rightarrow \Spec( C(X))\}$ indexed by compact Hausdorff spaces. For a fixed compact Hausdorff space $X$ and $x\!\in\! X$ we can define the map $\delta_x \!:\! C(X) \!\rightarrow\! \R$ as $\delta_x(f) \!=\! f(x)$ which is easily seen to be a unital Riesz homomorphism. Therefore, by Theorem \ref{MaxIdealTheorem} the set $N_x \!=\! \delta_x^{-1}(0)$ is a maximal ideal in $C(X)$, i.e., $N_x\!\in\! \Spec(C(X))$. We then define $\eta_X$ as $\eta_X(x)\!=\!N_x$.

Lastly, we now proceed with the definition of the counit $\epsilon: C\circ \Spec \Rightarrow \id_{\AURiesz\op}$. This is a collection of morphisms $\{ \epsilon_A:  C(\Spec(A)) \rightarrow A\}$ in $\AURiesz\op$, or equivalently a collection of morphisms $\{ \epsilon_A:  A \rightarrow C(\Spec(A))\}$ in $\AURiesz$, indexed by unital and Archimedean Riesz spaces $(A,u_A)$.  For a fixed such $(A,u)$ and $a\!\in\! A$ we can define a function $\hat{a}\!:\!\Spec(A)\!\rightarrow\!\mathbb{R}$ as $\hat{a}(J) \!=\! f_J(a)$, where $f_J$ is the homomorphism from Theorem \ref{MaxIdealTheorem}. That is (see \cite[Thm 27.3-4]{Luxemburg}) the value $\hat{a}(J)$ is defined as the unique real number $r$ such that  $r u_A - a \!\in\! J$. The map $\hat{a}$ is continuous, i.e., $\hat{a}\in C(\Spec(A))$. We then define $\epsilon_A$ as $\epsilon_A(a)=\hat{a}$. 

The statement of Yosida's theorem can then be formulated by the following two theorems (see \cite[Theorems 1--3]{yosida},  also \cite[Theorems 45.3 and 45.4]{Luxemburg} and \cite{westerbaan2016}).
\begin{theorem}
\label{YosidaAdjunctionTheorem}
Both $C$ and $\Spec$ are functors. Both $\eta$ and $\epsilon$ are natural transformations. The quadruple $(\eta,\epsilon): C \dashv \Spec$ is a unit-counit adjunction. The counit $\epsilon_A$ is an isometric isomorphism between $A$ and its image in $C(\Spec(A))$. 
\end{theorem}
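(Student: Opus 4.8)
The plan is to verify the constituents of the adjunction one at a time --- functoriality of $C$ and $\Spec$, naturality of $\eta$ and $\epsilon$, and the two triangle identities --- and then to establish separately the stronger final clause that each counit component $\epsilon_A$ is an isometric order-embedding. First I would dispatch functoriality of $C$. The object assignment already lands in $\CAURiesz$ by Example \ref{CAU_example}, and on a continuous map $f:X\rightarrow Y$ the rule $C(f)(b)=b\circ f$ is a unital Riesz homomorphism because the operations of $C(X)$ are computed pointwise and precomposition commutes with them while sending $\one_Y\mapsto\one_X$. Contravariance $C(g\circ f)=C(f)\circ C(g)$ and $C(\id)=\id$ are immediate from associativity of composition, so $C$ is a functor into $\AURiesz\op$. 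For $\Spec$, the object part (that $\Spec(A)$ is compact Hausdorff) is quoted from Luxemburg--Zaanen; the content to check is the map part. Here I would use Theorem \ref{MaxIdealTheorem}: for $J\in\Spec(B)$ with associated $f_J:(B,u_B)\rightarrow(\R,1)$, the preimage $f^{-1}(J)$ is the kernel of the unital homomorphism $f_J\circ f:(A,u_A)\rightarrow(\R,1)$, hence maximal, so $\Spec(f)$ is well defined. Continuity follows because for an ideal $I\subseteq A$ the preimage of the closed set $\hull(I)$ is $\{J\in\Spec(B)\mid f(I)\subseteq J\}=\hull(\langle f(I)\rangle)$, again closed, and $\Spec(g\circ f)=\Spec(f)\circ\Spec(g)$, $\Spec(\id)=\id$ are immediate from $(g\circ f)^{-1}=f^{-1}\circ g^{-1}$.

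Next I would handle the two natural transformations. Each $\eta_X$ is well defined since $N_x=\delta_x^{-1}(0)$ is maximal, and it is continuous because $\eta_X^{-1}(\hull(I))=\bigcap_{b\in I}b^{-1}(0)$ is closed in $X$; naturality reduces to the pointwise identity $C(f)^{-1}(N_x)=N_{f(x)}$, which follows from $(b\circ f)(x)=b(f(x))$. For the counit, each $\hat a$ lies in $C(\Spec(A))$ as already recorded in the construction of $\epsilon$, and $\epsilon_A$ preserves all operations and the unit because $\hat a(J)=f_J(a)$ is evaluated through the homomorphisms $f_J$, so linearity, lattice preservation, and $\hat{u}=\one_{\Spec(A)}$ hold pointwise; naturality is likewise a pointwise check against the $f_J$. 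The two triangle identities I would then verify directly from these explicit formulas, the decisive input being $f_{N_x}=\delta_x$, which makes $\epsilon_{C(X)}\circ C(\eta_X)$ act as the identity on $C(X)$, while the analogous evaluation $\epsilon_A^{-1}(N_J)=\ker f_J=J$ makes $\Spec(\epsilon_A)\circ\eta_{\Spec(A)}$ the identity on $\Spec(A)$.

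The substantive part, and the expected main obstacle, is the final clause that $\epsilon_A$ is an isometric isomorphism onto its image; this is essentially the representation content of Yosida's theorem. Injectivity is exactly the Archimedean separation property of Theorem \ref{unit_archimedean_thm3}: if $a\neq 0$ then some $f_J$ has $f_J(a)\neq 0$, whence $\hat a\neq 0$. Being already a Riesz homomorphism, it remains to compare norms, where $\|\hat a\|_\infty=\sup_{J}|f_J(a)|$ and $\|a\|=\inf\{r\mid |a|\leq ru_A\}$. The inequality $\|\hat a\|_\infty\leq\|a\|$ is easy: from $|a|\leq ru_A$ and monotonicity, $|f_J(a)|=f_J(|a|)\leq r$ for every $J$. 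The reverse inequality is the heart of the matter and reduces to the order-reflection statement that $f_J(b)\geq 0$ for all $J$ forces $b\geq 0$ in $A$.

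I would prove this order-reflection by contraposition: if $b^-=(-b)\sqcup 0\neq 0$, then Archimedeanness (again Theorem \ref{unit_archimedean_thm3}) yields $f_J$ with $f_J(b^-)>0$, and since $f_J$ is a lattice homomorphism and $b^+\sqcap b^-=0$ we get $\min(f_J(b^+),f_J(b^-))=0$, forcing $f_J(b^+)=0$ and hence $f_J(b)=-f_J(b^-)<0$. Applying this to $b=ru_A-|a|$ turns $\|\hat a\|_\infty\leq r$ into $|a|\leq ru_A$, giving $\|a\|\leq\|\hat a\|_\infty$. Thus $\epsilon_A$ is an isometry, hence an isometric isomorphism onto its image, which is a Riesz subspace of $C(\Spec(A))$, completing the proof. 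The genuinely nontrivial ingredient throughout is the Archimedean separation result, on which both injectivity and the harder norm inequality rest; everything else is bookkeeping with the hull--kernel topology and the explicit formulas for $\eta$ and $\epsilon$.
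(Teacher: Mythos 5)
Your proposal is correct, and its categorical skeleton (functoriality of $C$ and $\Spec$, naturality of $\eta$ and $\epsilon$, the two triangle identities via $\hat{a}(N_x)=a(x)$ and $\epsilon_A^{-1}(N_J)=J$) matches the paper's proof almost step for step; the small variations there are immaterial --- you pass to the ideal generated by $f(I)$ to prove continuity of $\Spec(f)$ where the paper cites the fact that $f(I)$ is already an ideal, and you verify continuity of $\eta_X$ directly via $\eta_X^{-1}(\hull(I))=\bigcap_{b\in I}b^{-1}(0)$ where the paper quotes Yosida's stronger statement that $\eta_X$ is a homeomorphism (which is only really needed later, for Theorem \ref{YosidaDualityTheorem}). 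The genuine divergence is in the final clause. The paper disposes of it by citation: that $\hat{a}\in C(\Spec(A))$, that $\epsilon_A$ is a unital Riesz homomorphism, and that it is an isometric embedding are quoted from Yosida's original theorems (with the isometry justified by the unit-ball observation in the proof of Theorem \ref{YosidaDualityTheorem}). You instead prove this from first principles: injectivity from the Archimedean separation property of Theorem \ref{unit_archimedean_thm3}, the inequality $\|\epsilon_A(a)\|_\infty\leq\|a\|$ from monotonicity of each $f_J$, and the reverse inequality from an order-reflection lemma (if $f_J(b)\geq 0$ for all $J\in\Spec(A)$ then $b\geq 0$), itself proved by contraposition using $b^+\sqcap b^-=0$ and separation applied to $b^-$. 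This argument is sound --- the only unstated ingredient is the standard Riesz-space identity $b^+\sqcap b^-=0$ --- and what it buys is a self-contained proof whose sole nontrivial input is Theorem \ref{unit_archimedean_thm3}, making explicit that the representation content of Yosida's theorem rests entirely on Archimedean separation; the paper's route is shorter but leaves that content inside the cited literature.
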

\begin{proof}
The fact that $C$ is indeed a functor follows from  elementary properties of composition of functions and identity maps. We now show $\Spec$ is a functor. Let $f : (A,u_A) \rightarrow (B, u_B)$ be a unital Riesz homomorphism and $J \subseteq B$ a maximal ideal. By Theorem \ref{MaxIdealTheorem} there is a unital Riesz morphism $\phi_J : B \rightarrow \R$ such that $\phi_J^{-1}(0) = J$. The composite $\phi_J \circ f$ is a unital Riesz homomorphism $A \rightarrow \R$, so
\[
\Spec(f)(J) = f^{-1}(J) = (\phi_J \circ f)^{-1}(0)
\]
is a maximal ideal in $A$. This shows $\Spec(f)$ is a function $\Spec(B) \rightarrow \Spec(A)$. We show it is continuous by showing that the preimage of a closed set is closed. Any closed set in $\Spec(A)$ is $\hull(I)$ for some ideal $I \subseteq A$. By \cite[Theorem 59.2 (iii)]{Luxemburg}, $f(I)$ is also an ideal. By elementary manipulations of the definitions, $\Spec(f)^{-1}(\hull(I)) = \hull(f(I))$, which, as we started with an arbitrary closed set, proves the continuity of $\Spec(f)$. By basic properties of the preimage mapping, $\Spec$ preserves identity maps and reverses composition, and is therefore a contravariant functor, as required. 

We now consider the unit. In \cite[Theorem 4]{yosida} Yosida shows that the mapping $N_\blank : X \rightarrow \Spec(C(X))$ is a homeomorphism onto its image and has dense image. The compactness of $X$ then implies that the image of $N_\blank$ is closed, and therefore is all of $\Spec(C(X))$, \ie, $N_\blank$ is a homeomorphism $X \rightarrow \Spec(C(X))$. 

The proof of naturality, \ie, that $N_{f(x)} = \Spec(C(f))(N_x)$ for all continuous maps $f : X \rightarrow Y$ and for all $x \in X$, is done by expanding the definitions on each side, so is omitted. As a result, $\Spec \circ C \cong \Id_\CHaus$.

For the counit, Yosida shows that $\hat{a} \in C(\Spec(A))$, and $\hat{\blank}$ is a unital Riesz space homomorphism with norm-dense image \cite[Theorems 1--2]{yosida} (see also \cite[Theorem 45.3]{Luxemburg}). 

The naturality of $\hat{\blank}$, \ie, that for all $f : A \rightarrow B$ a unital Riesz homomorphism, $a \in A$, $J \in \Spec(B)$ we have $\widehat{f(a)}(J) = C(\Spec(f))(\hat{a})(J)$ reduces to showing that $\hat{a}(f^{-1}(J)) \cdot u_A - f(a) \in I$, which is easily done using the linearity and unitality of $f$ and the definition of $\hat{\blank}$. 

To show that $\Spec$ is a right adjoint to $C$, we only need to prove that the following diagrams commute:
\[
\xymatrix{
C(X) & C(\Spec(C(X))) \ar[l]_-{C(\eta_X)} & \Spec(A) \ar[r]^-{\eta_{\Spec(A)}} \ar[rd]_\id & \Spec(C(\Spec(A))) \ar[d]^{\Spec(\epsilon_A)} \\
 & C(X) \ar[ul]^\id \ar[u]_{\epsilon_{C(X)}}  & & \Spec(A),
}
\]
where $X$ is a compact Hausdorff space and $A$ a unital Archimedean Riesz space.

For the first diagram, we want to show that if $a \in C(X)$ and $x \in X$, we have $C(\eta_X)(\epsilon_{C(X)}(a))(x) = a(x)$. Expanding the definitions, this is equivalent to showing
\begin{equation}
\label{UsefulHatEquation}
\hat{a}(N_x) = a(x).
\end{equation}
By the definition of $\hat{\blank}$, we have $\hat{a}(N_x) \cdot \one_X - a \in N_x$. Applying the definition of $N_x$ and elementary algebra then gives us the result. 

For the second diagram, we want to show that for each $J \in \Spec(A)$ and $a \in A$ that $a \in \Spec(\epsilon_A)(\eta_{\Spec(A)}(J)) \Leftrightarrow a \in J$. This can be proved simply by expanding the definitions. 

We have therefore shown that $C \dashv \Spec$, \ie, $\Spec$ is a right adjoint to $C$.
\end{proof}

\begin{theorem}
\label{YosidaDualityTheorem}
When restricted to $\CAURiesz\op$, the adjunction becomes an equivalence of categories. An object $(A,u)$ of $\AURiesz$ is uniformly complete (i.e., it belongs to $\CAURiesz$) if and only if  $\epsilon_A$ is  a Riesz isomorphism.
\end{theorem}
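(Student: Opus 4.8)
The plan is to leverage the standard categorical fact (see \cite{maclane}) that a unit--counit adjunction $C \dashv \Spec$ restricts to an adjoint equivalence exactly on the full subcategories where the unit and counit components are isomorphisms. By Theorem \ref{YosidaAdjunctionTheorem} the unit component $\eta_X = N_\blank : X \to \Spec(C(X))$ is a homeomorphism for \emph{every} compact Hausdorff space $X$, so $\eta$ is already a natural isomorphism on all of $\CHaus$; hence the entire content of the theorem is concentrated in understanding when the counit $\epsilon_A$ is an isomorphism. I would therefore first prove the stated biconditional, and then read off the equivalence as a consequence: restricting $\Spec$ to $\CAURiesz\op$ forces every object to be uniformly complete, so every counit component becomes an isomorphism, and with both $\eta$ and $\epsilon$ natural isomorphisms the restricted adjunction is an equivalence of categories.

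For the forward direction of the biconditional, suppose $(A,u)$ is uniformly complete. By Theorem \ref{YosidaAdjunctionTheorem} the map $\epsilon_A = \hat{\blank} : A \to C(\Spec(A))$ is an isometric unital Riesz homomorphism whose image is norm--dense. The key point is that, since $A$ is Archimedean with strong unit, uniform completeness coincides with completeness as a metric space (Theorem 43.1 in \cite{Luxemburg}, recalled above). As $\epsilon_A$ is an isometry, its image $\epsilon_A(A)$ is then a complete metric subspace of $C(\Spec(A))$, hence closed; being also norm--dense, it must be all of $C(\Spec(A))$. Thus $\epsilon_A$ is surjective, and being an injective (isometry) bijective Riesz homomorphism it is a Riesz isomorphism.

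For the converse, suppose $\epsilon_A$ is a Riesz isomorphism, so $(A,u) \cong (C(\Spec(A)), \one_{\Spec(A)})$ via a unital isomorphism. Since $\Spec(A)$ is compact Hausdorff, $C(\Spec(A))$ is uniformly complete by Example \ref{CAU_example}. I would then transport uniform completeness across the isomorphism: because $\epsilon_A$ is unital and preserves the order and lattice structure, it carries $u$-uniform Cauchy sequences to $\one_{\Spec(A)}$-uniform Cauchy sequences and back, and likewise for $u$-uniform limits. Hence every $u$-uniform Cauchy sequence in $A$ converges, that is, $A$ is $u$-uniformly complete; by Theorem \ref{theorem_unital_convergence} (applicable since $A$ is Archimedean with strong unit $u$) this upgrades to full uniform completeness, so $(A,u) \in \CAURiesz$.

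I expect the only genuinely non-formal step to be the surjectivity argument in the forward direction: the density of $\epsilon_A(A)$ is the deep input supplied by Yosida's representation theorem, and the remaining work is the short functional--analytic move that an isometric image of a complete space is closed, so that closed plus dense yields the whole space. Everything else --- the reduction of the equivalence to the behaviour of the counit, and the transport of uniform completeness along an isomorphism --- is routine once Theorems \ref{YosidaAdjunctionTheorem} and \ref{theorem_unital_convergence} are in hand.
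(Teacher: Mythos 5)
Your proposal is correct, and its categorical skeleton is the same as the paper's: the unit is an isomorphism for every compact Hausdorff space by Theorem \ref{YosidaAdjunctionTheorem}, so everything reduces to when the counit is one, after which the restricted adjunction is an adjoint equivalence by the standard fact in \cite[\S IV.4]{maclane}. Where you genuinely diverge is in how the biconditional is handled. The paper treats it as a black box, citing Yosida's Theorem 3 (equivalently \cite[Theorem 45.4]{Luxemburg}) for ``$\epsilon_A$ is an isomorphism iff $A$ is uniformly complete''; you instead re-derive both directions from material already established in the paper. Your forward direction inlines what is essentially the standard proof of Yosida's theorem: uniform completeness equals norm completeness for Archimedean unital spaces (the background restatement of Theorem 43.1 of \cite{Luxemburg}), an isometric image of a complete space is closed, and closed plus norm-dense forces surjectivity --- note that the density of $\epsilon_A(A)$ is recorded in the \emph{proof} of Theorem \ref{YosidaAdjunctionTheorem} (and in the discussion following Theorem \ref{YosidaDualityTheorem}) rather than in its statement, but it is available. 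Your converse transports uniform completeness across the unital isomorphism from $C(\Spec(A))$ (which is uniformly complete by Example \ref{CAU_example}) and upgrades $u$-uniform completeness to full uniform completeness via Theorem \ref{theorem_unital_convergence}; this direction the paper does not even spell out, since it is contained in the cited result. What your route buys is self-containedness: the theorem follows from Theorem \ref{YosidaAdjunctionTheorem} and the Section \ref{sec:background:riesz} background without invoking Yosida's Theorem 3 as an external input. What the paper's citation buys is brevity, plus the explicit observation --- used later to justify Definition \ref{uniform_completion} --- that $C(\Spec(A))$ is the Banach-space completion of $A$, a fact your forward argument proves implicitly but never states.
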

\begin{proof}
Yosida shows that $\hat{\blank}$ is a unital Riesz space isomorphism iff $A$ is uniformly complete in \cite[Theorem 3]{yosida} (see also \cite[Theorem 45.4]{Luxemburg}). As the (norm) unit ball of $A$ is exactly the inverse image of the unit ball of $C(\Spec(A))$, we have that the embedding $\hat{\blank}$ is also an isometry, and therefore $C(\Spec(A))$ is isomorphic to the Banach space completion of $A$. 

We already saw in Theorem \ref{YosidaAdjunctionTheorem} that Yosida proved that $\eta_X$ is always an isomorphism for $X$ a compact Hausdorff space. Therefore $(C,\Spec,\eta,\epsilon)$ is an adjoint equivalence when restricted to $\CAURiesz$ \cite[\S IV.4]{maclane}.
\end{proof}

The functor $C \circ \Spec\!:\!\ \AURiesz\op\rightarrow\! \CAURiesz\op$ maps (not necessarily uniformly complete) Archimedean unital Riesz spaces to uniformly complete ones. In fact, Yosida showed that
$A$ embeds densely %and isometrically
 in  $C(\Spec(A))$. Therefore $C(\Spec(A))$ is isomorphic to the completion of $A$ in its norm (as defined in the statement of Theorem \ref{normed_space}).

\begin{definition}\label{uniform_completion}
The uniform Archimedean and unital Riesz space $C(\Spec(A))$ is called the \emph{uniform completion of $A$} and is simply denoted by $\hat{A}$. We always identify $A$ with the (isomorphic) dense sub-Riesz space $\epsilon_A(A)$ of $\hat{A}$.
\end{definition}

\begin{proposition}
\label{DensenessCor}
For every $A\!\in\! \AURiesz$, the two spaces $\Spec(A)$ and $\Spec(\hat{A})$ are homeomorphic. Furthermore, for every $B\!\in\! \AURiesz$ and unital homomorphism $f \!:\! A \rightarrow B$ there exists a unique unital Riesz homomorphism $\hat{f}:\hat{A}\rightarrow \hat{B}$ extending $f$.\end{proposition}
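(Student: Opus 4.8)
The plan is to derive both statements directly from Yosida's adjunction (Theorem \ref{YosidaAdjunctionTheorem}) together with the identification $\hat{A} = C(\Spec(A))$ fixed in Definition \ref{uniform_completion}; the whole proposition is essentially a formal unwinding of that material, with a single genuine estimate needed at the end.

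For the homeomorphism $\Spec(A) \cong \Spec(\hat{A})$, I would argue as follows. By Definition \ref{uniform_completion} we have $\hat{A} = C(\Spec(A))$, so $\Spec(\hat{A}) = \Spec(C(\Spec(A)))$. Since $\Spec(A)$ is a compact Hausdorff space, the unit component $\eta_{\Spec(A)} \colon \Spec(A) \to \Spec(C(\Spec(A)))$ is a homeomorphism, because Theorem \ref{YosidaAdjunctionTheorem} establishes that $\eta_X$ is a homeomorphism for \emph{every} $X \in \CHaus$. This $\eta_{\Spec(A)}$ is the required homeomorphism, and no further work is needed here.

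For the second claim, I would \emph{define} the extension by applying the two functors, setting $\hat{f} \bydef C(\Spec(f)) \colon C(\Spec(A)) \to C(\Spec(B))$, that is $\hat{f} \colon \hat{A} \to \hat{B}$. Because $C$ and $\Spec$ are functors (Theorem \ref{YosidaAdjunctionTheorem}), $\hat{f}$ is automatically a unital Riesz homomorphism, and its codomain $\hat{B}$ is uniformly complete. That $\hat{f}$ extends $f$ — meaning $\hat{f}\circ\epsilon_A = \epsilon_B\circ f$ under the identifications of $A$ and $B$ with $\epsilon_A(A)$ and $\epsilon_B(B)$ — is exactly the naturality square of the counit $\epsilon\colon C\circ\Spec \Rightarrow \id_{\AURiesz\op}$ evaluated at $f$, which is proved in Theorem \ref{YosidaAdjunctionTheorem} (there recorded as the identity $\widehat{f(a)} = C(\Spec(f))(\hat{a})$).

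The only step that is not a purely formal consequence of the adjunction is \emph{uniqueness}, and this is where I expect to spend the real effort. I would combine two facts: that $A$ embeds as the norm-dense sub-Riesz space $\epsilon_A(A)$ of $\hat{A}$ (Definition \ref{uniform_completion}), and that every unital Riesz homomorphism between Archimedean spaces with strong unit is continuous for the metrics of Theorem \ref{normed_space}. The latter estimate is the crux: a unital homomorphism $g\colon(A',u')\to(B',v')$ preserves absolute values, so $|a|\leq r u'$ forces $|g(a)| = g(|a|) \leq r v'$, whence $\|g(a)\|\leq\|a\|$ and $g$ is nonexpansive, hence continuous. Two continuous maps $\hat{A}\to\hat{B}$ that agree on the dense subspace $\epsilon_A(A)$ therefore coincide; since any homomorphism extending $f$ agrees with $\hat{f}$ on $\epsilon_A(A)$ by the meaning of ``extending'', the extension $\hat{f}$ is unique. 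The main obstacle is thus the continuity estimate, which is not isolated as a lemma in the text but follows readily from Theorem \ref{normed_space}; everything else reduces to functoriality and naturality already available from Yosida's adjunction.
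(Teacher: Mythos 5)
Your proof is correct and follows exactly the route the paper intends: Proposition \ref{DensenessCor} is stated there without proof, as a direct consequence of the Yosida adjunction (Theorem \ref{YosidaAdjunctionTheorem}), the identification $\hat{A}=C(\Spec(A))$ of Definition \ref{uniform_completion}, and the density of $\epsilon_A(A)$ in $\hat{A}$. Your unwinding --- $\eta_{\Spec(A)}$ as the homeomorphism, $\hat{f}\bydef C(\Spec(f))$ with the counit naturality square giving the extension property, and the nonexpansiveness of unital Riesz homomorphisms plus density giving uniqueness --- is precisely the argument the paper leaves implicit, and all its steps check out.
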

%The extension in the above corollary can be seen to be a unital Riesz homomorphism because $C(\Spec(f))$ is one, and $\hat{\blank}$ is a natural transformation. The extension is unique because continuous functions that agree on a dense subset are equal. 

%\input{background2}

\subsection{Dedekind Complete Riesz Spaces}\label{sec:dedekind}

$ \ $ \\

We conclude this section by discussing Dedekind--complete Riesz spaces. We refer to \cite[\S 4]{vulikh} for a detailed introduction. Dedekind--complete Riesz spaces, due to their order--completeness properties, play a role when fixed--point extensions of Riesz modal logic, based on the Knaster--Tarski theorem, are considered (see, e.g., \cite{MIO2012b,MioSimpsonFI2017,MIO2014a,Mio18}).

\begin{definition}
Let $(L,\sqcup,\sqcap)$ be a lattice. The lattice $L$ is \emph{complete} if for every subset $A\subseteq L$ there exist in $L$ both a least upper bound ($\bigsqcup\! A$) and a greatest lower bound ($\bigsqcap\! A$). The lattice $L$ is \emph{Dedekind--complete} if, for every \emph{bounded subset} $A\subseteq L$,  there exist in $L$ both $\bigsqcup\! A$ and $\bigsqcap\! A$.
\end{definition}

It follows from this definition that every complete lattice is Dedekind--complete, but the converse is not true. The real numbers $\mathbb{R}$ is an example of  Dedekind--complete lattice which is not complete since suprema and infima of unbounded sets do not exist in $\mathbb{R}$.

%A Dedekind--complete Riesz space is just a Riesz space whose underlying lattice is Dedekind complete. 

\begin{definition}[Dedekind--complete Riesz space]\label{background:dedekind}
The Riesz space $A$ is called \emph{Dedekind--complete} if its underlying lattice is Dedekind--complete. \end{definition}

The following is an important property of Dedekind--complete Riesz spaces (see, e.g., Theorem 25.1 of \cite{Luxemburg}).

\begin{theorem}
Every Dedekind--complete Riesz space $A$ is Archimedean and uniformly complete.
\end{theorem}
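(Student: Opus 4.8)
The plan is to treat the two conclusions separately, since the Archimedean property follows from a short order argument while uniform completeness requires extracting an explicit limit from Dedekind completeness. Throughout I use only that $A$ is a Riesz space whose bounded subsets admit suprema and infima.

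For the Archimedean part, suppose $a\in A$ is infinitely small, witnessed by some $b$ with $n|a|\leq |b|$ for all $n\in\mathbb{N}$. The set $\{n|a| : n\in\mathbb{N}\}$ is bounded below by $0$ and above by $|b|$, so by Dedekind completeness its supremum $s=\bigsqcup_n n|a|$ exists. Since $s\geq (n+1)|a| = n|a|+|a|$ for every $n$, the element $s-|a|$ is again an upper bound of the set; as $s$ is the \emph{least} upper bound this gives $s\leq s-|a|$, hence $|a|\leq 0$. Because $|a|\geq 0$ always holds, we conclude $|a|=0$, i.e.\ $a=0$, so $0$ is the only infinitely small element and $A$ is Archimedean.

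For uniform completeness I would fix $u\in A^+$ and a $u$-uniform Cauchy sequence $(a_n)$ and show it converges $u$-uniformly. First note that $(a_n)$ is order bounded: choosing $N$ with $|a_i-a_j|\leq u$ for all $i,j>N$ gives $a_{N+1}-u\leq a_n\leq a_{N+1}+u$ for $n>N$, and the finitely many earlier terms are absorbed by finite joins and meets. Hence every tail $\{a_n : n\geq m\}$ is bounded, and Dedekind completeness yields $c_m=\bigsqcup_{n\geq m}a_n$. The sequence $(c_m)$ is decreasing and bounded below, so $b=\bigsqcap_m c_m$ (the role of $\limsup a_n$) exists; this $b$ is my candidate limit. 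To verify $(a_n)\rightarrow_u b$, fix $\epsilon>0$ and take $N$ with $a_j-\epsilon u\leq a_i\leq a_j+\epsilon u$ for all $i,j>N$. For fixed $j>N$ and any $m>N$, the element $a_j+\epsilon u$ bounds $\{a_i : i\geq m\}$ from above, so $c_m\leq a_j+\epsilon u$ and thus $b\leq a_j+\epsilon u$. On the other side $c_m\geq a_m\geq a_j-\epsilon u$ for all $m>N$, and since $(c_m)$ is decreasing its infimum equals that of the tail $\{c_m : m>N\}$, giving $b\geq a_j-\epsilon u$. Combining, $|b-a_j|\leq \epsilon u$ for all $j>N$, which is exactly $(a_n)\rightarrow_u b$. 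As $u\in A^+$ was arbitrary, $A$ is uniformly complete.

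The main obstacle is the uniform completeness half, and specifically the bookkeeping of existence: Dedekind completeness only provides \emph{bounded} suprema, so the order boundedness of the Cauchy sequence is what legitimizes each $c_m$ and the infimum $b$. The second delicate point is transferring the two-sided bounds on the $c_m$ to bounds on $b=\bigsqcap_m c_m$, which relies on monotonicity of $(c_m)$ together with the fact that the infimum of a family lying above a fixed lower bound again lies above it. The Archimedean step, by contrast, is immediate once $\bigsqcup_n n|a|$ is seen to exist.
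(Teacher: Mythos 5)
Your proof is correct. Note that the paper does not actually prove this statement: it is quoted as background with a citation to \cite{Luxemburg} (Theorem 25.1), so there is no internal proof to compare against; what you have written is essentially the standard textbook argument that the citation points to. Both halves check out: the Archimedean part is the classical trick that $s=\bigsqcup_n n|a|$ would satisfy $s\leq s-|a|$, and the uniform-completeness part correctly builds the limit as the order-$\limsup$, $b=\bigsqcap_m \bigsqcup_{n\geq m} a_n$, with the two delicate points (order boundedness of a $u$-uniform Cauchy sequence, and the fact that a decreasing sequence has the same infimum as any of its tails) handled explicitly. One could shorten the last step slightly by noting that $|b-a_j|\leq\epsilon u$ follows directly from the two one-sided inequalities via $|x|=x\sqcup(-x)$, but as written the argument is complete and fills in, self-containedly, exactly what the paper outsources to the literature.
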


In particular, if the Dedekind--complete Riesz space $A$ has a strong unit $u_A$, the Yosida duality described in Section \ref{duality_section_background}, can be applied. Therefore every Dedekind--complete Riesz space $A$ with strong unit $u_A$ is isomorphic to the space of real--valued functions $C(X)$ of a unique (up to homeomorphism) compact Hausdorff space $X$.  To the order--theoretic property of Dedekind--completeness corresponds, via Yosida duality, a topological property of the dual space: $X$ is \emph{extremally disconnected}.

\begin{definition}\label{extremally:spaces}
A topological space is \emph{extremally disconnected} if the closure of every open set is clopen. An extremally disconnected space that is also compact and Hausdorff is called \emph{Stonean}. 
\end{definition} 
It is well--known that a Stonean space $X$ is (up to homeomorphism) the Stone dual of a unique (up to isomorphism) complete Boolean algebra. Hence we get (see, e.g., Chapter IV of \cite{vulikh}) the following:
\begin{proposition}
Let $A$ be a Dedekind--complete Riesz space with strong unit $u_A$. Then $A$ is isomorphic to $C(X)$ for a unique (up to homeomorphism) Stonean space  $X$.
\end{proposition}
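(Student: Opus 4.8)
The plan is to reduce the statement to Yosida duality together with the classical topological characterisation of extremal disconnectedness in terms of Dedekind completeness of the algebra of continuous functions. Two implications must be assembled: first that $A$ is represented as some $C(X)$, and second that the representing space $X$ is forced to be Stonean.

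First I would invoke the preceding theorem: since $A$ is Dedekind--complete, it is automatically Archimedean and uniformly complete, and by hypothesis it carries a strong unit $u_A$. Thus $(A,u_A)$ is an object of $\CAURiesz$, so the hypotheses of Yosida duality (Theorem \ref{YosidaDualityTheorem}) are satisfied. Applying it, I obtain a compact Hausdorff space $X = \Spec(A)$, unique up to homeomorphism, together with a unital Riesz isomorphism $A \cong C(X)$ sending $u_A$ to $\one_X$. Next I would transport the order structure across this isomorphism: as the map preserves the lattice operations and hence all bounded suprema and infima, Dedekind completeness of $A$ immediately yields that $C(X)$ is Dedekind--complete as well. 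At this point the problem is purely topological: I must show that Dedekind completeness of $C(X)$ forces $X$ to be extremally disconnected, so that, being already compact and Hausdorff, $X$ is Stonean in the sense of Definition \ref{extremally:spaces}.

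The main obstacle is exactly this last implication, which is the classical Nakano--Stone theorem. The idea of the argument: given an open set $U \subseteq X$, consider the family $\mathcal{F}$ of all $f \in C(X)$ with $\zero_X \leq f \leq \one_X$ that vanish on $X \setminus U$. This family is bounded above by $\one_X$, so by Dedekind completeness it admits a least upper bound $g \in C(X)$. Using Urysohn's lemma one shows $g$ equals $1$ on $U$, and hence on $\overline{U}$ by continuity, while the minimality of the supremum forces $g$ to vanish off $\overline{U}$; thus $g$ is the indicator function of $\overline{U}$. Since $g$ is continuous, $\overline{U}$ is clopen, and as $U$ was arbitrary, $X$ is extremally disconnected. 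Rather than reproduce this computation in full I would cite it (e.g.\ \cite[Ch.\ IV]{vulikh}), as it is standard.

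Finally, uniqueness of $X$ up to homeomorphism is inherited directly from the uniqueness clause of Yosida duality: any compact Hausdorff space representing $A$ is homeomorphic to $\Spec(A)$, and Stonean spaces are in particular compact Hausdorff, so uniqueness among all compact Hausdorff spaces yields it among Stonean ones a fortiori. The auxiliary remark that a Stonean space is the Stone dual of a unique complete Boolean algebra is not strictly needed for the proof; it merely records that the complemented elements of $C(X)$ (the idempotents under $\sqcup,\sqcap$) form a complete Boolean algebra whose Stone dual recovers $X$.
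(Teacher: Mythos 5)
Your proposal is correct and follows essentially the same route as the paper, which likewise chains the fact that Dedekind completeness implies the Archimedean property and uniform completeness (so Yosida duality applies, giving $A \cong C(\Spec(A))$ for a unique compact Hausdorff space) with the classical Nakano--Stone correspondence between Dedekind completeness of $C(X)$ and extremal disconnectedness of $X$, citing Chapter IV of \cite{vulikh} for the latter. The only difference is that you sketch the Nakano--Stone argument explicitly, whereas the paper leaves it entirely to the citation.
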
 

The following theorem, due to Yudin (see \cite[Thm IV.11.1]{vulikh}), is the Riesz space equivalent of the Dedekind-MacNeille completion theorem in the theory of lattices. It states that it is possible to embed an arbitrary unital Archimedean Riesz space $A$ into a \emph{essentially minimal} Dedekind--complete unital Riesz space, called the \emph{Dedekind--completion of $A$}, preserving all suprema and infima existing in $A$.

\begin{theorem}[Dedekind completion]\label{dedekind-completion-1}
For every Archimedean unital Riesz space $R$ there exists a Dedekind complete Archimedean and unital space $\overline{R}$, called the Dedekind completion of $R$, such that:
\begin{enumerate}
\item $R$ embeds in $\overline{R}$, so we can just write $R\subseteq \overline{R}$,
\item $R$ is order--dense in $\overline{R}$, i.e., for every $f < h\in \overline{R}$ there exists $g\in R$ such that $f<g<h$,
\item existing suprema and infima in $R$ are preserved in $\overline{R}$. This means that for every $A\subseteq R$ and $f=\bigsqcup A$ (sup existing and taken in $R$) then $f=\bigsqcup A$ in $\overline{R}$ too.
\item $\overline{R}$ is the smallest Dedekind complete space satisfying the properties above.
\end{enumerate}
\end{theorem}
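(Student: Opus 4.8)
The plan is to construct $\overline{R}$ directly as a completion by \emph{Dedekind cuts} of the underlying poset $(R,\leq)$ and then to transport the Riesz space structure onto it, the order-theoretic part being the classical Dedekind--MacNeille construction restricted to bounded subsets. Concretely, I would represent an element of $\overline{R}$ by a bounded Dedekind cut, namely a nonempty down-set $D\subseteq R$, bounded above, that coincides with the set of lower bounds of its set of upper bounds, ordered by inclusion. Sending $a\in R$ to the principal down-set $\{b\in R : b\leq a\}$ gives an order embedding, and by the standard cut calculus this ordered set is Dedekind complete, $R$ sits order-densely inside it (for $f<h$ in $\overline{R}$, the cut representation directly produces an intermediate element of $R$), and any supremum or infimum of a bounded family already existing in $R$ is computed by the same cut in $\overline{R}$. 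This establishes the order-theoretic content of items (1)--(3).

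The substantial work is to equip $\overline{R}$ with the vector-lattice operations extending those of $R$. The lattice operations $\sqcup,\sqcap$ are inherited from the complete lattice structure, so the heart of the matter is addition and scalar multiplication. I would define $x+y$ as the cut generated by $\{a+b : a\leq x,\ b\leq y\}$ and, for $r\geq 0$, $rx$ as the cut generated by $\{ra : a\leq x\}$, with $-x$ defined symmetrically from the upper cut; one then has to verify the full list of Riesz axioms of Figure \ref{axioms:of:riesz:spaces}, in particular that $+$ is associative, commutative, invertible, and translation-monotone, and that it distributes over the newly created suprema. This last point is exactly where the hypothesis that $R$ is Archimedean is indispensable: for a general lattice-ordered group the Dedekind--MacNeille completion need not even be a lattice-ordered group, because addition can fail to commute with suprema of the added elements, and it is the Archimedean property that rules this pathology out. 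I expect this compatibility check --- showing that the group operation extends to a group operation satisfying $x\leq y \Rightarrow x+z\leq y+z$ on all of $\overline{R}$ --- to be the main obstacle, and it is the step for which I would lean most heavily on Archimedeanness, via the characterisation in Theorem \ref{unit_archimedean_thm3}.

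Once $\overline{R}$ is known to be a Dedekind-complete Riesz space, two of the remaining properties come essentially for free. By the theorem recalled just above this statement (every Dedekind-complete Riesz space is Archimedean and uniformly complete), $\overline{R}$ is automatically Archimedean and uniformly complete, so no separate argument is needed there. For the unit I would observe that the completion is \emph{bounded}: every $x\in\overline{R}$ lies between two elements of $R$, so since $u$ is a strong unit of $R$ there is $n\in\mathbb{N}$ with $|x|\leq nu$, making $u$ a strong unit of $\overline{R}$ and $(\overline{R},u)$ unital. Finally, minimality (item (4)) follows from the universal property of the cut completion: if $S$ is any Dedekind-complete Riesz space containing $R$ order-densely and preserving its existing suprema and infima, then each cut $x\in\overline{R}$ equals $\bigsqcup\{a\in R : a\leq x\}$ computed in $S$, and this assignment yields an order- and operation-preserving embedding $\overline{R}\hookrightarrow S$ fixing $R$, so $\overline{R}$ is the smallest such space. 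As an alternative to the direct construction one could instead invoke Yosida duality, embedding $R$ densely in $C(\Spec(R))$ and realising $\overline{R}$ as the Dedekind completion computed inside that function space; but the cut construction is more elementary and keeps the role of the Archimedean hypothesis in plain view.
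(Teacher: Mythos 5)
The paper itself gives no proof of this statement: it is quoted as Yudin's theorem, with a citation to \cite[Thm IV.11.1]{vulikh}, and the classical proof behind that citation is precisely the bounded-cut (Dedekind--MacNeille) construction you outline. So your route is not an alternative to the paper's but a reconstruction of the proof of record, and its architecture is sound: the cut construction carries the order-theoretic content; the Archimedean hypothesis is indeed exactly what makes the additive structure extend (you are right that for general lattice-ordered groups the cut completion need not be a lattice-ordered group); every cut is sandwiched between elements of $R$, so $u$ remains a strong unit; Dedekind completeness gives Archimedeanness and uniform completeness of $\overline{R}$ for free; and minimality follows from the universal property of the cut completion.

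Two steps, however, are asserted rather than proved, and they are the actual mathematical content of the theorem. First, the verification that cut-addition is a group operation is only announced. The decisive point is $x + (-x) = 0$: writing $D$ for the cut of $x$ and $D^u$ for its set of upper bounds, one must show that any $z$ with $z \geq a - c$ for all $a \in D$, $c \in D^u$ satisfies $z \geq 0$. The argument is: fixing $c_0 \in D^u$, one gets $z + c_0 \in D^u$, hence by iteration $nz + c_0 \in D^u$ for every $n$, hence $n(-z) \leq c_0 - a_0$ for any fixed $a_0 \in D$, and the Archimedean property in the inequality form of Definition \ref{archimedean:def} forces $-z \leq 0$; this is what identifies the upper bounds of $\{a - c\}$ with the positive cone and the cut with $0$. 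Note that the tool here is this inequality form, not the separation-by-homomorphisms half of Theorem \ref{unit_archimedean_thm3} that you invoke: unital homomorphisms into $\mathbb{R}$ are not order-continuous, so they neither extend to cuts nor preserve the suprema that item (3) is about, and pointwise reasoning through them cannot certify identities in $\overline{R}$. Second, your claim that item (2) falls out ``directly'' from the cut representation is too quick: for cuts $f < h$ the representation yields some $g \in R$ with $g \leq h$ and $g \not\leq f$, and in a partial order $g \not\leq f$ does not imply $f < g$. Upgrading this to the strict sandwich $f < g < h$ genuinely uses the Riesz structure (for instance, one first produces $c \in R$ with $0 < c \leq h - f$ from order density and then works with both the supremum and infimum representations), and it deserves its own argument rather than being treated as part of the ``standard cut calculus''.
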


%%% Local Variables:
%%% mode: latex
%%% TeX-master: "ms.tex"
%%% End:

\section{Riesz Modal Logic, Syntax and Transition Semantics}
\label{logic_section}
In this section we formally introduce Riesz modal logic for Markov processes.

\begin{definition}[Syntax]\label{syntax_formulas}
The set of formulas $\texttt{Form}$ is generated by the following grammar:
$$
\phi, \psi ::= 0 \mid 1 \mid r\phi \mid \phi + \psi \mid \phi \sqcup \psi \mid \phi \sqcap \psi \mid \Diamond \phi \ \ \ \ \ \ \ \ \ \textnormal{where } r\in\mathbb{R}.
$$
%The real numbers $r$ appearing in a formula $\phi$ are called \emph{coefficients}.% We say that a formula is \emph{rational} if all of its coefficients are rational numbers.
\end{definition}

The semantics of a formula $\phi$, interpreted over a Markov process $\alpha:X\rightarrow \Rdnl(X)$ (see Definition \ref{markov_process_coalgebra}), is a continuous function $\sem{\phi}_{\alpha}:X\rightarrow\mathbb{R}$ defined as follows.
\begin{definition}[Semantics]\label{sematics:riesz:logic}
Let $\alpha:X\rightarrow \Rdnl(X)$ be a Markov process. The semantics (or interpretation) of a formula $\phi$ relative to the Markov process $\alpha$ is the continuous function $\sem{\phi}_{\alpha}\in C(X)$ defined by induction on $\phi$ as follows:
\begin{center}
$\sem{0}_{\alpha} (x) = 0 \ \ \ \ \ \sem{1}_{\alpha} (x) = 1$

$\ $\\

$ \sem{r\phi}_{\alpha}(x) = r \cdot\big(\sem{\phi}_{\alpha}(x)\big) \ \ \ \ \ \sem{\phi+\psi}_{\alpha}(x) = \sem{\phi}_{\alpha}(x) + \sem{\psi}_{\alpha}(x)$

$\ $\\

$\sem{\phi\sqcup\psi}_{\alpha}(x) = \max\big\{ \sem{\phi}_{\alpha}(x) ,  \sem{\psi}_{\alpha}(x) \big\}$ 

$\ $\\

$\sem{\phi\sqcap\psi}_{\alpha}(x) = \min\big\{\sem{\phi}_{\alpha}(x) , \sem{\psi}_{\alpha}(x)\big\}$

$\ $\\

$\sem{\Diamond\phi}_{\alpha}(x) =\displaystyle \int_{X}  \sem{\phi}_{\alpha} \diff \alpha(x) = \E_{\alpha(x)}(\sem{\phi}_{\alpha})$
\end{center}
\end{definition}

Hence $\sem{0}_{\alpha}$ and $\sem{1}_{\alpha}$ are the constant functions $\zero_X$ ($x\!\mapsto\! 0$) and $\one_X$ ($x\!\mapsto\! 1$), respectively. The connectives $\{r(\_), +, \sqcup, \sqcap\}$ correspond to the real vector space and lattice operations of $\mathbb{R}$ lifted to $C(X)$ pointwise (see examples and \ref{example2_subalgebra} and \ref{CAU_example}). The semantics of the formula $\Diamond\phi$ is the function that assigns to $x$ the expected value of $\sem{\phi}_{\alpha}$ with respect to the subprobability measure $\alpha(x)$.

\begin{example}\label{first:example:logic}
Consider the Markov process $\alpha:X\rightarrow \Rdnl(X)$ of Example \ref{logic:example1a}, having state space $X=\{x_1,x_2\}$ endowed with the discrete topology. The semantics $\sem{\phi}_\alpha$ of a formula $\phi$ is thus a real--valued function $ \sem{\phi}_\alpha: X\rightarrow\mathbb{R}$. Furthermore, since the state space is discrete, the interpretation of $\Diamond$ can be simply expressed as a weighted sum:

$$\sem{\Diamond\phi}_{\alpha}(x) =\displaystyle \sum_{y \in X} \big(  \sem{\phi}_{\alpha}(y) \cdot d_x(y) \big) $$
\noindent
where $d_x = \alpha(x)$, i.e., $d_x\in \mathcal{D}^{\leq 1}(X)$ is the subprobability distribution over $X$ assigned to $x$ by the transition function $\alpha$.

Now consider the formula $\Diamond1$. The formula $\Diamond 1$ can be understood as mapping each state $x\in X$ to the total mass of the probability distribution $\alpha(x)$. Therefore in this example we have $\sem{\Diamond 1}_\alpha (x_1) = \frac{5}{6}$ and $\sem{\Diamond 1}_\alpha (x_1) = \frac{1}{3}$.

Consider next the formula $-(\Diamond 1)+1$. Simple calculations show that $\sem{-(\Diamond 1)+1}_{\alpha}(x_1) = \frac{1}{6}$ and $\sem{-(\Diamond 1)+1}_{\alpha}(x_2) = \frac{2}{3}$. The formula  $-(\Diamond 1)+1$ assigns to each state the probability of terminating the computation at that state.

The semantics of the formula $\Diamond(\Diamond 1)$ can be calculated as follows:

$$\sem{\Diamond\Diamond 1}_\alpha(x_1) = \frac{1}{3}\big(  \sem{\Diamond 1}(x_1) \big)   + \frac{1}{2}\big(  \sem{\Diamond 1}(x_2)  \big) =  \frac{1}{3}\cdot\frac{5}{6} +  \frac{1}{2}\cdot\frac{1}{3} =  \frac{8}{18}$$
\noindent
and
$$\sem{\Diamond\Diamond 1}_\alpha(x_2) = \frac{1}{3}\big(  \sem{\Diamond 1}(x_1) \big)   + 0 \big(  \sem{\Diamond 1}(x_2)  \big) =  \frac{1}{3}\cdot\frac{5}{6} + 0  =  \frac{5}{18}$$
\noindent
Indeed the meaning of $\Diamond\Diamond 1$ is the function that assigns to each state the probability of making two computational steps (without halting) starting from that state. 
%Finally, consider the formula $\phi=(\Diamond 1) \sqcup (-(\Diamond_ 1+1)$. This assigns to a state $x$ the maximum between the valued assigned to $x$ by $\Diamond 1$ and $(-\Diamond 1+1)$. Hence $\sem{\phi}_\alpha(x_1) =\frac{5}{6}$ and  $\sem{\phi}_\alpha(x_2) =\frac{2}{3}$.
\end{example}

\begin{example}
Consider the Markov process $\alpha:X\rightarrow \Rdnl(X)$ having state space $\{x,y,z\}$ depicted as follows:

$$
\SelectTips{cm}{}
	\xymatrix @=20pt {
		\nodeC{x} \ar@{->}@(ul,ur)^{1}   & & \nodeC{y}  \ar@{->}[ll]_{\frac{1}{3}} \ar@{->}[rr]^{\frac{2}{3}} & &  \nodeC{z}   	}
$$
Consider the two formulas $\phi_1 = \Diamond( \psi_1 \sqcup \psi_2)$ and $\phi_2 = \Diamond(\psi_1) \sqcup \Diamond(\psi_2)$, where $\psi_1=\Diamond 1$ and  $\psi_2=(1-\Diamond1)$ and observe that $\sem{\psi_1}(x)=1= \sem{\psi_2}(z)$ and  $\sem{\psi_1}(z)=0= \sem{\psi_2}(x)$. 

The semantics of the formulas $\phi_1$ and $\phi_2$ at the state $y$ is calculated as follows:

\begin{center}
\begin{tabular}{l l l}
$\sem{\phi_1}_{\alpha} (y)$ & = & $ \frac{1}{3}\big( \sem{\psi_1 \sqcup \psi_2}_{\alpha}(x)   \big) + \frac{2}{3}\big(   \sem{\psi_1 \sqcup \psi_2}_{\alpha}(z) \big)$ \\
 & = & $ \frac{1}{3}\big( 1 \sqcup 0 ) + \frac{2}{3}\big( 0 \sqcup 1\big) $\\
  & = & $\frac{1}{3} + \frac{2}{3} $ \\
   & = & $1$
   \end{tabular} 
\end{center}
\noindent
and
\begin{center}
\begin{tabular}{l l l}
$\sem{\phi_2}_{\alpha} (y)$ & = & $\big(\frac{1}{3} \sem{\psi_1}_{\alpha}(x) + \frac{2}{3} \sem{\psi_1}_{\alpha}(z)\big) \sqcup  \big(\frac{1}{3} \sem{\psi_2}_{\alpha}(x) + \frac{2}{3} \sem{\psi_2}_{\alpha}(z)\big) $ \\
 & = & $ (\frac{1}{3} + 0) \sqcup (0 + \frac{2}{3}) $\\
  & = & $ \frac{2}{3}$. 
   \end{tabular} 
\end{center}
Hence this example shows that the distributivity law: 
$$
\Diamond(x \sqcup y ) = \Diamond(x) \sqcup \Diamond(y)
$$
generally fails in Riesz modal logic.
\end{example}

\begin{example}
\label{example:logic:3}
Consider the Markov process $\alpha$ of Example \ref{example:mp:2}, having state space $X=[0,1]$. The semantics of the formula $\Diamond 1$ is the continuous function $\sem{\Diamond 1}_\alpha: [0,1] \rightarrow \mathbb{R}$. Again, the formula $\Diamond 1$ maps each state $x$ to the total mass of the subprobability measure $\alpha(x)$. Hence, by expanding the definitions, it corresponds to the identity function:
$$
\sem{\Diamond 1}_\alpha (x) =  \displaystyle \int_{[0,1]} 1 \diff \alpha(x) =  \displaystyle \int_{[0,1] } 1 \cdot x \diff \delta_{x} = x
$$
Similarly,  the semantics of $\Diamond (\Diamond 1)$ is the quadratic function: 
\[
\sem{\Diamond\Diamond 1}_\alpha (x) =  \displaystyle \int_{[0,1]} \sem{\Diamond 1}_\alpha \diff \alpha(x) =  \displaystyle \int_{[0,1]} x \cdot x \diff \delta_{x} = x^2 \ .
\tag*{\qed}
\]
\end{example}

The fact that $\sem{\Diamond\phi}_{\alpha}$ is indeed continuous, for any formula $\phi$ and Markov process $\alpha$, is a direct consequence of the Riesz--Markov--Kakutani theorem, as we now prove   (Lemma \ref{continuity_formulas} below).

Recall from Section \ref{TopMeasRieszRepSubSect} that,  by the Riesz--Markov--Kakutani theorem, we have the correspondence
$$
\Rdnl(X) \simeq (X\stackrel{c}{\rightarrow}\mathbb{R}) \stackrel{l}{\rightarrow} \mathbb{R}  \ \ \ \ \ \ \  \ \ \ \ \mu \longleftrightarrow \E_{\mu}
$$
where we used the letters $c$ and $l$ as a reminder of when the space of continuous functions and the space of positive, linear and $\one_X$-decreasing functions are considered. 
%I commented out this because it does not seem used anywhere at the present stage.
%From now on, we write $\mu(f)$ instead of $\E_\mu(f)$ for a Radon measure $\mu$ and continuous function $f$ to ease notation. 
Therefore, each Markov process $\alpha:X\rightarrow \Rdnl(X)$ can be identified as the function:
$$
\alpha: X\stackrel{c}{\rightarrow} \big( (X\stackrel{c}{\rightarrow}\mathbb{R}) \stackrel{l}{\rightarrow} \mathbb{R} \big)
$$
where
$$
\alpha(x)(f)= \E_{\alpha(x)}(f) = \int_{X}f \diff \alpha(x).
$$

By swapping the arguments of $\alpha$ as a curried function, we obtain a positive linear map $C(X) \rightarrow C(X)$ (where $C(X) = X\stackrel{c}{\rightarrow}\mathbb{R}$) which, for clarity, we denote by $\Diamond_\alpha$:
\begin{equation}
\label{DiamondShortDefn}
\Diamond_{\alpha} \!:\!  (X\stackrel{c}{\!\rightarrow\!}\mathbb{R}) \rightarrow  (X\stackrel{c}{\!\rightarrow\!}\mathbb{R}) \textnormal{, where } \Diamond_{\alpha}(f)(x)=\alpha(x)(f)  
\end{equation}
To see that $\Diamond_\alpha(f)$ is indeed a continuous function, for any $f\! \in\! C(X)$,  let $(x_i)_{i \in I}$ be a net in $X$ converging to $x\! \in\! X$. We need to prove that
$\lim_{i \in I} \Diamond_\alpha(f)(x_i)  =  \Diamond_\alpha(f) \left(\lim_{i \in I} x_i\right)$.
This follows from the definition (\ref{DiamondShortDefn}) and from 
\begin{align*}
 \lim_{i \in I}\alpha(x_i)(f) = \left(\lim_{i \in I}\alpha(x_i)\right)(f) &= \alpha\left(\lim_{i \in I} x_i\right)(f) 
\end{align*}
where the first equality follows from the definition of the weak-* topology and the second from the continuity of $\alpha$.

The following proposition then follows.
\begin{proposition}\label{properties_modality}
Let $\alpha\!:\!X\!\rightarrow\! \Rdnl(X)$ be a Markov process and let $\Diamond_{\alpha}$ be defined as above. Then, for every $f,g\in C(X)$, the operator $\Diamond_{\alpha}$ has the following properties:
\begin{itemize}
\item (Linear) $\Diamond_{\alpha}(r f ) = r \Diamond_{\alpha}(f)$ and $\Diamond_{\alpha}(f+g) =\Diamond_{\alpha}(f) + \Diamond_{\alpha}(g)$,
\item (Positive) if $f\geq \zero_X$ then $\Diamond_{\alpha}(f)\geq \zero_X$,
\item ($\one_X$-decreasing) $\Diamond_{\alpha}(\one_X)\leq \one_X$.
\end{itemize}  
\end{proposition}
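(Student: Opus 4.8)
The plan is to derive all three properties pointwise from the corresponding properties of the individual functionals $\alpha(x)$. The crucial observation, already available from the setup, is that for each fixed state $x \in X$ the evaluation $\alpha(x)$ is a member of $\Rdnl(X)$, and by the Riesz--Markov--Kakutani identification recalled in Section \ref{TopMeasRieszRepSubSect} this set is precisely the collection of functionals $C(X) \to \mathbb{R}$ that are linear, positive, and $\one_X$-decreasing. Thus each $\alpha(x)$ already carries exactly the three properties we wish to lift to $\Diamond_\alpha$, and the whole argument reduces to checking that evaluating at each point and then quantifying over all points preserves them. The continuity of $\Diamond_\alpha(f)$, which is the only genuinely topological ingredient, has been established in the discussion preceding the statement, so here we may treat $\Diamond_\alpha(f)$ simply as an element of $C(X)$ and verify the algebraic and order-theoretic identities.

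First I would handle linearity. Fixing $f, g \in C(X)$ and $r \in \mathbb{R}$, I would expand $\Diamond_\alpha(rf)(x) = \alpha(x)(rf) = r\,\alpha(x)(f) = r\,\Diamond_\alpha(f)(x)$ using the linearity of the functional $\alpha(x)$, and similarly $\Diamond_\alpha(f+g)(x) = \alpha(x)(f+g) = \alpha(x)(f) + \alpha(x)(g) = \Diamond_\alpha(f)(x) + \Diamond_\alpha(g)(x)$. Since these equalities hold for every $x \in X$, and the Riesz space structure on $C(X)$ is defined pointwise (Example \ref{example2_subalgebra}), they upgrade to the identities $\Diamond_\alpha(rf) = r\,\Diamond_\alpha(f)$ and $\Diamond_\alpha(f+g) = \Diamond_\alpha(f) + \Diamond_\alpha(g)$ between functions.

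Next I would treat positivity and the $\one_X$-decreasing property in the same pointwise fashion. For positivity, if $f \geq \zero_X$, then for each $x$ the inequality $f \geq \zero_X$ feeds into the positivity of $\alpha(x)$ to give $\Diamond_\alpha(f)(x) = \alpha(x)(f) \geq 0$; as this holds for all $x$, we obtain $\Diamond_\alpha(f) \geq \zero_X$ in the pointwise order. For the last property, the $\one_X$-decreasing condition on $\alpha(x)$ gives $\Diamond_\alpha(\one_X)(x) = \alpha(x)(\one_X) \leq 1 = \one_X(x)$ for every $x$, hence $\Diamond_\alpha(\one_X) \leq \one_X$.

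I do not expect a real obstacle here: the content of the proposition is entirely transported from the functional-analytic facts about $\Rdnl(X)$, and the only subtlety worth flagging explicitly is the justification that each $\alpha(x)$ lies in $\Rdnl(X)$ and therefore inherits the three defining properties of that set. Once that identification is invoked, every step is a one-line pointwise computation, and the passage from pointwise statements to statements about elements of $C(X)$ is immediate because both the vector-space operations and the lattice order on $C(X)$ are defined coordinatewise.
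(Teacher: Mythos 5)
Your proposal is correct and follows essentially the same route as the paper: the paper states the proposition as an immediate consequence of the Riesz--Markov--Kakutani identification of $\Rdnl(X)$ with the linear, positive, $\one_X$-decreasing functionals on $C(X)$, together with the continuity of $\Diamond_\alpha(f)$ established just before the statement, which is exactly the pointwise argument you spell out. Your version merely makes explicit the one-line pointwise computations that the paper leaves implicit.
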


%Therefore the previous proposition states Markov processes $\alpha$ defined on a compact Hausdorff space $X$ and positive linear and $\one_X$--decreasing operators $T:C(X)\rightarrow C(X)$. This observation is important but of course by no means original nor novel. 

This discussion allows us to equivalently rephrase the definition of the semantics of Riesz modal logic formulas.

\begin{definition}[Semantics, rephrased]\label{def_semantics_2}
Let $\alpha\!:\!X\!\rightarrow\! \Rdnl(X)$ be a Markov process. The semantics $\sem{\phi}_{\alpha}\!\in\! C(X)$ of $\phi$ can be defined by induction on $\phi$ as follows:
\begin{center}
$\sem{0}_{\alpha}=\zero_X \ \ \ \ \ \sem{1}_{\alpha} = \one_X$

\vspace{1mm}

$ \sem{r\phi}_{\alpha}  = r\sem{\phi}_{\alpha} \ \ \  \sem{\phi+\psi}_{\alpha} = \sem{\phi}_{\alpha} + \sem{\psi}_{\alpha} $ 

\vspace{1mm}

$\sem{\phi\sqcup\psi}_{\alpha} = \sem{\phi}_{\alpha} \sqcup  \sem{\psi}_{\alpha} \ \ \ \  
\sem{\phi\sqcap\psi}_{\alpha} = \sem{\phi}_{\alpha} \sqcap  \sem{\psi}_{\alpha}$

\vspace{1mm}

$\sem{\Diamond\phi}_{\alpha} = \Diamond_{\alpha}(\sem{\phi}_{\alpha})$
\end{center}
\end{definition}

The following lemma now becomes obvious since $\Diamond_{\alpha}$ maps continuous functions to continuous functions. 

\begin{lemma}\label{continuity_formulas}
For every $\phi$ the function $\sem{\phi}_{\alpha}$ is continuous.
\end{lemma}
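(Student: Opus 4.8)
The plan is to proceed by structural induction on the formula $\phi$, following the inductive clauses of Definition \ref{def_semantics_2}, where at each stage the statement to be proved is that $\sem{\phi}_{\alpha}$ lies in $C(X)$. For the two base cases this is immediate: $\sem{0}_{\alpha}=\zero_X$ and $\sem{1}_{\alpha}=\one_X$ are constant functions and hence continuous.

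For the Riesz space connectives $r(\_)$, $+$, $\sqcup$ and $\sqcap$, I would invoke the induction hypothesis together with the fact that $C(X)$ is a Riesz subspace of $\R^X$ (Example \ref{example2_subalgebra}). Since $C(X)$ is closed under the pointwise scalar multiplication, addition, join and meet, each of $\sem{r\phi}_{\alpha}=r\sem{\phi}_{\alpha}$, $\sem{\phi+\psi}_{\alpha}=\sem{\phi}_{\alpha}+\sem{\psi}_{\alpha}$, $\sem{\phi\sqcup\psi}_{\alpha}=\sem{\phi}_{\alpha}\sqcup\sem{\psi}_{\alpha}$ and $\sem{\phi\sqcap\psi}_{\alpha}=\sem{\phi}_{\alpha}\sqcap\sem{\psi}_{\alpha}$ remains continuous whenever its immediate subformulas have continuous semantics. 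These cases are therefore routine closure properties of $C(X)$.

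The only clause carrying genuine content is the modality, $\sem{\Diamond\phi}_{\alpha}=\Diamond_{\alpha}(\sem{\phi}_{\alpha})$. Here the induction hypothesis gives $\sem{\phi}_{\alpha}\in C(X)$, and what is needed is that the operator $\Diamond_{\alpha}$ sends $C(X)$ into $C(X)$. This is exactly what the discussion preceding the lemma establishes: for any $f\in C(X)$ and any net $(x_i)$ converging to $x$ in $X$, the continuity of $\alpha$ and the definition of the weak-* topology on $\Rdnl(X)$ together yield $\Diamond_{\alpha}(f)(x_i)=\alpha(x_i)(f)\to\alpha(x)(f)=\Diamond_{\alpha}(f)(x)$, so $\Diamond_{\alpha}(f)$ is continuous. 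Consequently $\sem{\Diamond\phi}_{\alpha}\in C(X)$, and the induction closes.

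The point to emphasise is that the single nontrivial obstacle, namely the continuity of the modal case, has already been resolved before the statement of the lemma through the analysis of $\Diamond_{\alpha}$ as a curried form of $\alpha$ (equation \eqref{DiamondShortDefn}) and the application of the Riesz--Markov--Kakutani correspondence. Given that preparation, the lemma reduces to assembling the base cases, the closure properties of $C(X)$ under the Riesz operations, and this already-proved modal case into a straightforward structural induction, which is why it can be regarded as essentially immediate.
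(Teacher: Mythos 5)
Your proof is correct and takes essentially the same route as the paper: the paper establishes, in the discussion preceding the lemma, that $\Diamond_{\alpha}$ maps $C(X)$ into $C(X)$ (the same net/weak-* argument you cite), and then declares the lemma obvious given the rephrased semantics of Definition \ref{def_semantics_2}. You have simply written out explicitly the structural induction that the paper leaves implicit, with the closure of $C(X)$ under the pointwise Riesz operations handling the non-modal cases.
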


The following simple to prove proposition states that the the semantics of formulas is invariant under coalgebra morphisms.

\begin{proposition}\label{preservation_lemma}
Let $\alpha\!:\!X\!\rightarrow\! \Rdnl(X)$ and $\beta\!:\!Y\!\rightarrow\! \Rdnl(Y)$ be two Markov processes and let $\alpha\stackrel{f}{\rightarrow}\beta$ be a coalgebra morphism. For every formula $\phi$ the equality $\sem{\phi}_{\alpha} = \sem{\phi}_{\beta}\circ f$ holds, i.e.,  $\sem{\phi}_{\alpha}(x) = \sem{\phi}_{\beta}(f(x))$, for all $x\!\in\! X$.
\end{proposition}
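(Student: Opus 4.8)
The plan is to proceed by structural induction on the formula $\phi$, proving the equality $\sem{\phi}_{\alpha} = \sem{\phi}_{\beta} \circ f$ of continuous functions $X \to \R$. I would work with the rephrased semantics of Definition \ref{def_semantics_2}, since it expresses the modal case directly through the operators $\Diamond_{\alpha}$ and $\Diamond_{\beta}$, and recall that the coalgebra morphism $\alpha \stackrel{f}{\rightarrow} \beta$ is by definition a continuous map satisfying $\Rdnl(f) \circ \alpha = \beta \circ f$ (the commuting square \eqref{MarkovMorphDiag}).

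For the base cases $\phi = 0$ and $\phi = 1$, the semantics are the constant functions $\zero$ and $\one$ on each space, and precomposition with $f$ sends $\zero_Y$ to $\zero_X$ and $\one_Y$ to $\one_X$, so the equality is immediate. For the Riesz-algebraic cases ($r\phi$, $\phi + \psi$, $\phi \sqcup \psi$, $\phi \sqcap \psi$) all operations are defined pointwise, hence commute with precomposition by $f$; applying the induction hypothesis to the immediate subformulas then closes each case. For instance $\sem{\phi+\psi}_{\beta} \circ f = (\sem{\phi}_{\beta} + \sem{\psi}_{\beta}) \circ f = (\sem{\phi}_{\beta}\circ f) + (\sem{\psi}_{\beta}\circ f) = \sem{\phi}_{\alpha} + \sem{\psi}_{\alpha} = \sem{\phi+\psi}_{\alpha}$, using the induction hypothesis in the penultimate step.

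The only case requiring genuine work, and thus the main obstacle, is the modality $\phi = \Diamond\psi$, where the commutativity of the coalgebra square must be invoked. Fixing $x \in X$ and setting $g = \sem{\psi}_{\beta} \in C(Y)$, the induction hypothesis gives $\sem{\psi}_{\alpha} = g \circ f$. I would then compute
\[
\sem{\Diamond\psi}_{\alpha}(x) = \Diamond_{\alpha}(\sem{\psi}_{\alpha})(x) = \alpha(x)(g\circ f) = \Rdnl(f)(\alpha(x))(g),
\]
where the final step is exactly the definition \eqref{RdnlMapDefn} of the functorial action $\Rdnl(f)(F)(g) = F(g\circ f)$. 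The commuting square then yields $\Rdnl(f)(\alpha(x)) = (\beta \circ f)(x) = \beta(f(x))$, so the expression equals $\beta(f(x))(g) = \Diamond_{\beta}(\sem{\psi}_{\beta})(f(x)) = \sem{\Diamond\psi}_{\beta}(f(x))$, as required. The key insight making this ``simple to prove'' is that the definition of $\Rdnl(f)$ via precomposition is precisely tailored to absorb the precomposition by $f$ produced by the induction hypothesis, so that the naturality encoded in the coalgebra square does all the remaining work.
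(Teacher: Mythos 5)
Your proof is correct and takes essentially the same approach as the paper's: structural induction on $\phi$, with the only non-trivial case being $\Diamond\psi$, resolved by combining the coalgebra square $\Rdnl(f)\circ\alpha = \beta\circ f$ with the definition of $\Rdnl(f)$ as precomposition, and then the induction hypothesis. The paper merely runs the same computation in mirrored order (starting from $\sem{\Diamond\psi}_{\beta}(f(x))$ and rewriting toward $\E_{\alpha(x)}$, written in expectation notation rather than your curried $\alpha(x)(g)$ notation), so there is no substantive difference.
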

\begin{proof}
We simply need to unfold the definitions. Recall that a coalgebra morphism $\alpha\stackrel{f}{\rightarrow}\beta$ is a continuous function $f:X\rightarrow Y$ such that $\beta(f(x)) = \Rdnl(f)\big( \alpha(x))$ holds. By definition of the action of the Radon functor  $\Rdnl$ on morphisms (see Section \ref{TopMeasRieszRepSubSect}) we have that the probability measure $\beta(f(x))$, or equivalently  its corresponding expectation functional $\E_{\beta(f(x))}:C(Y)\rightarrow\mathbb{R}$, is definable as follows:
\begin{equation}\label{obs_1}
\E_{\beta(f(x))}(b)= \E_{\alpha(x)}(b\circ f)
\end{equation}
for all $b\in C(Y)$.  We prove the statement $
\sem{\phi}_{\alpha} = \sem{\phi}_{\beta}\circ f
$ by induction on the structure of $\phi$. The only non trivial case is that of $\phi=\Diamond\psi$. By definition we have:
$$
\sem{\Diamond\psi}_{\alpha}(x)= \E_{\alpha(x)}(\sem{\psi}_{\alpha})
\textnormal{ and }
\sem{\Diamond\psi}_{\beta}(f(x))= \E_{\beta(f(x))}(\sem{\psi}_{\beta})
$$
Therefore, by Equation \ref{obs_1} above, we obtain the equality 
 $
\sem{\Diamond\psi}_{\beta}(f(x))= \E_{\alpha(x)}(\sem{\psi}_{\beta}\circ f)
$.
The inductive hypothesis $\sem{\psi}_{\alpha}\!=\!\sem{\psi}_{\beta}\circ f$ on  $\psi$ then concludes the proof.
\end{proof}

\subsection{Semantic equivalence of formulas}
We now turn our attention to the set of valid equalities between modal Riesz formulas.

\begin{definition}[Equivalence of formulas] \label{sem:equivalence}Given a Markov process $\alpha:X\rightarrow \Rdnl(X)$, we say that two formulas $\phi$ and $\psi$ are $\alpha$-\emph{equivalent}, written $\phi\sim_{\alpha}\psi$, if it holds that $\sem{\phi}_{\alpha}=\sem{\psi}_{\alpha}$. Similarly, we say that two formulas are equivalent, written $\phi\sim\psi$, if for all $\alpha\in\Markov$ it holds that $\phi\sim_{\alpha} \psi$.
\end{definition}

It is clear, from the unital Riesz space structure of $(C(X),\one_X)$, that all Riesz spaces axioms hold true with respect to the equivalence relation $\sim$. For example $\phi + \psi \sim \psi + \phi$ and $(r+s)\phi\sim r\phi + s\psi$. It also follows from the previous discussion on the semantics of the formula $\Diamond\phi$ that 
\begin{itemize}
\item (Linearity) $r\Diamond\phi \sim \Diamond(r\phi)$ and $\Diamond (\phi+\psi) \sim \Diamond\phi + \Diamond \psi$
\item (Positivity) $\Diamond (\phi \sqcup 0) \sqcup 0 \sim \Diamond (\phi \sqcup 0)$
\item ($\one$-decreasing) $\Diamond(1) \sqcup 1 \sim 1$
\end{itemize}

One of the main goals of this work is to show that, in fact, this set of axioms (axioms of Riesz spaces with a positive element together with the axioms listed above for $\Diamond$) is \emph{complete} in the sense that any valid equality $\phi\sim\psi$ can be derived syntactically from these axioms using the inference rules of equational logic and the Archimedean rule. This is stated precisely as Theorem \ref{completeness_theorem_app} in Section \ref{section_applications}.

\subsection{Relation with other probabilistic logics in the literature}\label{sec:other:logics}

Other real-valued logics for expressing properties of Markov chains or similar systems (e.g., Markov decision processes, weighted systems, etc.) have an underlying\footnote{Often these logics extend their basic modal fragment with other operators (e.g., defined by fixed--point equations) which increase the overall expressive power.} basic real--valued modal logic which  differs from Riesz modal logic in the choice of the basic connectives. It turns out that most of such basic modal logics can be interpreted within Riesz modal logic.

For example, the real--valued modal logic of Panangaden (see \cite[\S 8.2]{PrakashBook}), which is particularly important because it characterizes the Kantorovich pseudo-metric on Markov processes\footnote{Here we are slightly abusing the terminology because the notion of Markov process of \cite[\S 8.2]{PrakashBook} differs from ours in that it allows analytic spaces (i.e., continuous images of Polish spaces) as state--spaces, rather than compact Hausdorff spaces, and measurable transition maps, rather than continuous ones. in Section \ref{other:models:section} we discuss how Markov processes in the sense of  \cite[\S 8.2]{PrakashBook}) can be embedded into Markov processes in our sense. But for the purpose of this paragraph, it is enough to compare the logic of Panangaden with Riesz modal logic on probabilistic models that fit both definitions: e.g., Markov processes in our sense having a Polish state--space.} has real-valued semantics of type $\sem{\phi}_{\alpha}\!:\!X\!\rightarrow\![0,1]$ with formulas defined by the syntax:
$$\phi, \psi ::= 1 \mid 1-\phi \mid \phi \sqcap \psi \mid \Diamond \phi\mid \phi\ominus r \ \ \ \ \ \ \ \ \textnormal{where }\!\in\![0,1]$$
and semantics of the arithmetic connectives given as $\sem{\phi\ominus r}_{\alpha}(x)= \max\{0, \sem{\phi}_\alpha(x) - r\}$. Therefore this logic can be directly interpreted in Riesz modal logic by defining $$\phi\ominus r = 0 \sqcup (\phi - r1).$$ 
 
 Similarly, the real--valued modal logic underlying the \L ukasiewicz modal $\mu$-calculus (see \cite{MioSimpsonFI2017} and \cite{MIO2014a}), which is important because this logic (once extended with fixed--point operators) is sufficiently expressive to interpret probabilistic CTL,  has also real-valued semantics of type $\sem{\phi}_{\alpha}\!:\!X\!\rightarrow\![0,1]$ with formulas defined by the syntax:
 $$\phi, \psi ::= 0 \mid 1 \mid r\phi \mid \phi\oplus \psi \mid \phi \odot \psi \mid  \phi \sqcup \psi \mid \phi \sqcap \psi \mid \Diamond \phi
\ \ \ \ \ \ \ \ \textnormal{where }\!\in\![0,1]$$
 and semantics of the arithmetic connectives given as: 
 $$\sem{\phi\oplus \psi}_{\alpha}(x)=\min\{ 1, \sem{\phi}_{\alpha}(x)+ \sem{\phi}_{\alpha}(x)\} \ \ \ \ \ \ \sem{\phi\odot \psi}_{\alpha}(x)=\max\{ 0, \sem{\phi}_{\alpha}(x)+ \sem{\phi}_{\alpha}(x)-1\}.$$
 \noindent 
Therefore, also this logic can be interpreted in Riesz modal logic by defining 
 $$\phi\oplus\psi = 1\sqcap(\phi +\psi)  \ \ \ \ \ \  \phi\odot\psi = 0\sqcup(\phi +\psi-1).$$

This implies that the extension of Riesz modal logic with fixed--point operators is also sufficiently expressive to  interpret probabilistic CTL.

%This implies that the axiomatization we obtain for Riesz modal logic in this article will be complete for these other logics too.

%%% Local Variables:
%%% mode: latex
%%% TeX-master: "ms.tex"
%%% End:

\section{Modal Riesz Spaces}
\label{section_modal}

In this section we introduce the notion of modal Riesz space. This will be the variety of algebras corresponding to Riesz modal logic for Markov processes.

\begin{definition}\label{modal_riesz_def}
A \emph{modal Riesz space} is a structure $(A,u,\Diamond)$ where $(A,u)$ is a Riesz space with designated positive element $u$ (Definition \ref{riesz_positive_def}) and $\Diamond\!:\!A\!\rightarrow\! A$ is a unary operation satisfying:
\begin{enumerate}
\item (Linearity) $\Diamond(a+b) = \Diamond(a) + \Diamond (b)$ and $\Diamond(r a) = r(\Diamond a)$, for all $r\!\in\!\mathbb{R}$
\item (Positivity) $\Diamond(a\sqcup 0)\geq 0$, 
\item ($u$-decreasing) $\Diamond(u)\leq u$.
\end{enumerate}
The full list of axioms in presented in Figure \ref{axioms:of:modal:riesz:spaces}. 
\end{definition}

Thus the class of modal Riesz spaces is a variety in the sense of universal algebra (because the inequalities here can be rewritten as equalities using the lattice operations). Homomorphisms of modal Riesz spaces are unital Riesz homomorphisms which further preserve the $\Diamond$ function (i.e., $f(\Diamond(a))\!=\!\Diamond(f(a)$). We say that $(A,u,\Diamond)$ is \emph{Archimedean} (resp. \emph{unital} and \emph{$u$-complete}) if $(A,u)$ is Archimedean (resp. unital and $u$-complete). We denote by $\CRiesz^{u}_{\Diamond}$ the category having modal Riesz spaces as objects and homomorphisms of modal Riesz spaces as morphisms. We also define $\URiesz_{\Diamond}$, $\AURiesz_{\Diamond}$ and $\CAURiesz_{\Diamond}$ to be the categories of unital, Archimedean and unital, $u$-complete Archimedean and unital modal Riesz spaces, respectively.
$$
\CAURiesz_{\Diamond} \!\hookrightarrow\! \AURiesz_{\Diamond} \!\hookrightarrow\! \URiesz_{\Diamond}\! \hookrightarrow\! \CRiesz^{u}_{\Diamond}
$$

\begin{figure}[h!]
\begin{mdframed}
\begin{center}
 \begin{enumerate}
\item  \textbf{Axioms of Riesz spaces:}
\begin{itemize}
\item Real Vector space: 
\begin{itemize}
\item Additive group: $x + (y + z) = (x + y) + z$, $x + y = y + x$, $x + 0 = x$, $x - x= 0$,
\item Axioms of scalar multiplication: $r_1(r_2 x) = (r_1\cdot r_2) x$, $1x = x$, $r(x+y) = (rx) + (ry)$, $(r_1 + r_2)x = (r_1 x) + (r_2 x)$,
\end{itemize}
\item Lattice axioms:    (associativity) $x \sqcup (y \sqcup z) = (x \sqcup y) \sqcup z$,  $x \sqcap (y \sqcap z) = (x \sqcap y) \sqcap z$, (commutativity) $z \sqcup y = y \sqcup z$, $z \sqcap y = y \sqcap z$,
(absorption) $z \sqcup (z \sqcap y) = z$, $z \sqcap (z \sqcup y) = z$, (idempotence) $x\sqcup x =x$,  $x\sqcap x =x$. 
\item Compatibility axioms: 
\begin{enumerate}
\item $(x \sqcap y) + z \leq  (y + z) $,
\item $r (x \sqcap y) \leq  ry$, for all scalars $r\geq 0$. 
\end{enumerate}
\end{itemize}
\item \textbf{Axiom of the positive element:} $0 \leq u$,
\item \textbf{Modal axioms:}
\begin{itemize}
\item Linearity: $\Diamond(r_1x + r_2y ) = r_1\Diamond(x) + r_2\Diamond(y)$,
\item Positivity: $0 \leq \Diamond (x\sqcup 0) $,
\item $u$-decreasing: $\Diamond 1 \leq 1$.
\end{itemize}
\end{enumerate}
\end{center}
\end{mdframed}
\caption{Equational axioms of modal Riesz spaces.}
\label{axioms:of:modal:riesz:spaces}
\end{figure}

\begin{remark}\label{remark_monotonicity}
Note that in the presence of linearity, positivity of $\Diamond$ is equivalent to monotonicity of $\Diamond$ (i.e., $a\leq b$ implies $\Diamond(a)\leq \Diamond(b)$). Clearly monotonicity implies positivity. In the other direction, assume $\Diamond$ is positive and let $a\leq b$. Note that $a\leq b\Leftrightarrow b-a\geq 0$ \cite[Thm 11.4]{Luxemburg}. Then by positivity $\Diamond(b-a)  \geq 0$. By linearity, $\Diamond(b)-\Diamond(a)\geq 0$ and this is equivalent to $\Diamond(b)\geq \Diamond(a)$.
\end{remark}

\begin{example}\label{trivial_example}
As a trivial example, note that every Riesz space $(A,u)$ can be given the structure of a modal Riesz space by taking, e.g., $\Diamond$ to be the constant $0$ function $\Diamond(a)=0$ or the identity function $\Diamond(a)=a$.
\end{example}

More interestingly, each Markov process gives rise to a modal Riesz space.

\begin{example}
\label{MarkovToRieszExample}
Let $\alpha:X\rightarrow\Rdnl(X)$ be a Markov process. As discussed in Section \ref{logic_section} we can view $\alpha$ as the operator $\Diamond_{\alpha}:C(X)\rightarrow C(X)$ acting on the unital Riesz space $(C(X),\one_X)$. By Proposition \ref{properties_modality} the operator $\Diamond_{\alpha}$ satisfies the required properties to make $(C(X), \one_X, \Diamond_{\alpha})$ a modal Riesz space. Furthermore, since $(C(X),\one_X)\in \CAURiesz$ we have that $(C(X), \one_X, \Diamond_{\alpha})\in \CAURiesz_{\Diamond}$.
\end{example}

Hence, to each Markov process $\alpha\!:\!X\!\rightarrow\! \Rdnl(X)$  corresponds the modal Riesz space $A_{\alpha}\!=\!(C(X), \one_X, \Diamond_{\alpha})\!\in\! \CAURiesz_{\Diamond}$.

By combining the Riesz--Markov--Kakutani representation theorem and Yosida's theorem we have in fact that this correspondence is bijective on isomorphism classes.
\begin{theorem}
\label{DiamondIsoTheorem}
For each $A\!=\!(A,u,\Diamond)\in \CAURiesz_{\Diamond}$, given a choice of isomorphism $A \cong C(X)$, there exists one and only one Markov process $\alpha\!\in\!\Markov$ such that $A \cong A_{\alpha}$.
\end{theorem}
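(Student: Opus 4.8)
The plan is to reduce the statement to reconstructing a Markov process from the transported modal operator, via a pointwise application of the Riesz--Markov--Kakutani theorem. Fix the given unital Riesz isomorphism $\phi\colon (A,u) \to (C(X),\one_X)$ and use it to transport the modal operator, setting $\Diamond' = \phi \circ \Diamond \circ \phi^{-1}\colon C(X) \to C(X)$. Since $\phi$ is a unital Riesz isomorphism and $\Diamond$ satisfies the modal axioms of Definition \ref{modal_riesz_def}, the operator $\Diamond'$ is again linear, positive and $\one_X$-decreasing (for the last point, $\phi^{-1}(\one_X)=u$, so $\Diamond'(\one_X)=\phi(\Diamond u)\leq \phi(u)=\one_X$). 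By construction $\phi$ is an isomorphism of modal Riesz spaces $(A,u,\Diamond) \cong (C(X),\one_X,\Diamond')$, so it suffices to exhibit a unique Markov process $\alpha$ with $\Diamond_\alpha = \Diamond'$; then $\phi$ itself witnesses $A \cong A_\alpha$.

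For each $x \in X$, define the functional $F_x \colon C(X) \to \mathbb{R}$ by $F_x(f) = \Diamond'(f)(x)$, i.e.\ $F_x = \delta_x \circ \Diamond'$ where $\delta_x$ is the point evaluation. Since point evaluation is a positive unital linear map, the linearity, positivity and $\one_X$-decreasing properties of $\Diamond'$ transfer directly to each $F_x$. By the Riesz--Markov--Kakutani theorem there is a unique Radon subprobability measure, which I call $\alpha(x) \in \Rdnl(X)$, with $F_x = \E_{\alpha(x)}$. This defines a map $\alpha \colon X \to \Rdnl(X)$, and it remains only to check continuity.

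The continuity of $\alpha$ is the one step requiring genuine care, and it follows from the defining property of the weak-* topology on $\Rdnl(X)$. Recall that this is the coarsest topology making every evaluation $T_f \colon \Rdnl(X) \to \mathbb{R}$, $T_f(F) = F(f)$, continuous; hence $\alpha$ is continuous if and only if $T_f \circ \alpha$ is continuous for each $f \in C(X)$. But $(T_f \circ \alpha)(x) = \alpha(x)(f) = F_x(f) = \Diamond'(f)(x)$, so $T_f \circ \alpha = \Diamond'(f)$, which lies in $C(X)$ precisely because $\Diamond'$ takes values in the continuous functions. Thus each $T_f \circ \alpha$ is continuous, $\alpha$ is a continuous map, and $\alpha$ is a genuine Markov process in the sense of Definition \ref{markov_process_coalgebra}. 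This argument is exactly the reverse of the forward computation carried out before Proposition \ref{properties_modality}, where continuity of $\Diamond_\alpha(f)$ was deduced from continuity of $\alpha$.

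Finally I would verify both halves of ``one and only one''. Unwinding the definition of $\Diamond_\alpha$ in \eqref{DiamondShortDefn} gives $\Diamond_\alpha(f)(x) = \alpha(x)(f) = F_x(f) = \Diamond'(f)(x)$, so $\Diamond_\alpha = \Diamond'$ and therefore $A_\alpha = (C(X),\one_X,\Diamond') \cong A$ via $\phi$, as required. For uniqueness, if $\beta \in \Markov$ (with the same underlying space $X$) also satisfies $\Diamond_\beta = \Diamond'$, then $\beta(x)(f) = \Diamond'(f)(x) = \alpha(x)(f)$ for every $f \in C(X)$ and every $x$; since $\alpha(x)$ and $\beta(x)$ are Radon subprobability measures inducing the same expectation functional, the uniqueness clause of the Riesz--Markov--Kakutani theorem forces $\alpha(x) = \beta(x)$, whence $\alpha = \beta$. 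The only nontrivial mathematical content beyond this bookkeeping is the pointwise use of Riesz--Markov--Kakutani together with the weak-* continuity argument, so I expect the continuity verification to be the main (though modest) obstacle.
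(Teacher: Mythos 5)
Your proposal is correct and follows essentially the same route as the paper's proof: transport $\Diamond$ along the fixed isomorphism to get a positive, linear, $\one_X$-decreasing operator on $C(X)$, apply the Riesz--Markov--Kakutani theorem pointwise to obtain $\alpha(x)$, and verify weak-* continuity of $\alpha$. The only (cosmetic) differences are that you check continuity via the universal property of the initial (weak-*) topology where the paper uses a net argument, and you spell out the uniqueness step, which the paper leaves implicit.
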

\begin{proof}
By Yosida's theorem (Theorem \ref{YosidaDualityTheorem}), $(A,u)$ is isomorphic to $(C(X),\one_X)$ for a unique (up to homeomorphism) compact Hausdorff space $X\!=\!\Spec(A)$. Fixing such an isomorphism and conjugating the original $\Diamond$ by the isomorphism, we get a positive linear $\one_X$-decreasing map $\Diamond\!:\! C(X)\stackrel{l}{\rightarrow} C(X)$:

$$
\Diamond: (X \stackrel{c}{\rightarrow} \mathbb{R})\stackrel{l}{\rightarrow}(X \stackrel{c}{\rightarrow} \mathbb{R})
$$
and by swapping the arguments as a curried function, we equivalently get a function which, for clarity, we denote by $\alpha_{\Diamond}$:\\
\begin{equation}
\label{DiamondAlphaDefEqn}
\alpha_{\Diamond}: X \stackrel{c}{\rightarrow} \big( (X\stackrel{c}{\rightarrow} \mathbb{R}) \stackrel{l}{\rightarrow}\mathbb{R}\big) \ \ \ \ \ \ \ \alpha_{\Diamond}(x)(f) = \Diamond(f)(x),
\end{equation}
By using the Riesz--Markov--Kakutani theorem, the space $\big( (X\stackrel{c}{\rightarrow} \mathbb{R}) \stackrel{l}{\rightarrow}\mathbb{R}\big)$ coincides with $\Rdnl(X)$. We can show that $\alpha_\Diamond$ is indeed continuous using the definition of continuity in terms of nets, as follows. Let $(x_i)_{i \in I}$ be a net converging to $x\! \in\! X$. 
Since $\Diamond(f)$ is a continuous function, for each $f\!\in\! C(X)$, we have 
$ \Diamond(f)(\lim_i x_i) = \lim_i (\Diamond(f)(x_i))$ and therefore, from the definition $\alpha_\Diamond$ we have
$$ \alpha_\Diamond(\lim_i x_i)(f) = \lim_i (\alpha_\Diamond(x_i)(f))$$
As this holds for all $f \in C(X)$, this shows that $\alpha_\Diamond(\lim_i x_i) = \lim_i \alpha_\Diamond(x_i)$, where the latter limit is with respect to the weak-* topology, and proves that $\alpha_\Diamond$ is continuous. 

Therefore we can see that $\alpha_{\Diamond}\!:\! X\! \rightarrow\! \Rdnl(X)$ is the unique Markov process corresponding to $(A,u,\Diamond)$.
\end{proof}

\begin{example}
For a fixed compact Hausdorff space $X$, let $\alpha\!:\! X \rightarrow \Rdnl(X)$  be the Markov process defined as $\alpha(x)\!=\!\delta_{x}$, for all $x\in X$, where $\delta_x \!\in\!  \Rdnl(X)$ is the \emph{Dirac measure} defined as $\delta_x(A) = 1$ if $x\!\in\! A$ and $\delta_x(A)\!=\!0$ otherwise, for all Borel sets $A\subseteq X$. More colloquially, $\alpha$ is the Markov process where each state $x\!\in\! X$ loops back to itself with probability $1$. Let $A_{\alpha}=(C(X), \one_X, \Diamond_{\alpha})$ be the modal Riesz space corresponding to $\alpha$. It is easy to check that $\Diamond_{\alpha}$ is just the identity map, i.e., $\Diamond_{\alpha}(f)=f$, for all $f\in C(X)$.
\end{example}

Hence there is a bijective correspondence between the (isomorphism classes of) objects of $\Markov$ and the objects of $ \CAURiesz_{\Diamond}$. It will be shown in the next section that this correspondence lifts to a \emph{duality} between the two categories.

%\begin{remark}
%We have seen that Markov processes can be identified with modal Riesz spaces in $ \CAURiesz_{\Diamond}$. However the properties of being Archimedean, unital and complete are not equationally (nor elementary) definable in the language of modal Riesz spaces. The larger category $\CRiesz^{u}_{\Diamond}$ contains many objects which do not have these nice properties. It is suggestive to consider these objects as \emph{nonstandard} Markov processes. Hence, when studying  Markov processes through the lens of algebra (the variety $\CRiesz^{u}_{\Diamond}$) one has always to consider the existence of such nonstandard models.
%\end{remark}

We now establish two useful propositions regarding modal Riesz spaces. The first establishes a simple but useful inequality that will be invoked several times.
The second states that the Riesz--ideal of infinitely small elements (see Definition \ref{archimedean:def}) of a modal Riesz space is closed under the ($\Diamond$) operation.

\begin{proposition}\label{useful_equality}
The following equality holds in all modal Riesz spaces: $ | \Diamond (x) | \leq \Diamond ( |x|)$.
\end{proposition}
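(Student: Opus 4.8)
The plan is to reduce the inequality to two one-sided bounds and then exploit the monotonicity of $\Diamond$. Recall that in any Riesz space the absolute value can be written as a join, namely $|\Diamond(x)| = \Diamond(x) \sqcup (-\Diamond(x))$ (the standard identity $|a| = a \sqcup (-a)$, see e.g. \cite{Luxemburg}). Since a join is a least upper bound, it therefore suffices to establish the two inequalities
\[
\Diamond(x) \leq \Diamond(|x|) \qquad \textnormal{and} \qquad -\Diamond(x) \leq \Diamond(|x|).
\]

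To obtain these, I would first record that $\Diamond$ is monotone: by Remark \ref{remark_monotonicity}, positivity together with linearity implies that $a \leq b$ entails $\Diamond(a) \leq \Diamond(b)$. Next I use the elementary Riesz-space facts $x \leq |x|$ and $-x \leq |x|$, which follow immediately from $|x| = x^+ + x^-$, $x = x^+ - x^-$ and $x^+, x^- \geq 0$. Applying monotonicity of $\Diamond$ to $x \leq |x|$ gives $\Diamond(x) \leq \Diamond(|x|)$, which is the first desired inequality.

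For the second inequality, applying monotonicity to $-x \leq |x|$ yields $\Diamond(-x) \leq \Diamond(|x|)$, and by the linearity axiom $\Diamond(-x) = -\Diamond(x)$, so $-\Diamond(x) \leq \Diamond(|x|)$. Combining the two bounds with the join characterization of the absolute value gives $|\Diamond(x)| = \Diamond(x) \sqcup (-\Diamond(x)) \leq \Diamond(|x|)$, as required.

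There is no serious obstacle here: the argument is entirely routine once monotonicity of $\Diamond$ is available. The only points worth stating carefully are the reduction via $|a| = a \sqcup (-a)$ (so that a pair of one-sided bounds suffices) and the appeal to Remark \ref{remark_monotonicity} to pass from the positivity axiom, which is phrased only for elements of the special form $a \sqcup 0$, to the full monotonicity of $\Diamond$ that the computation uses.
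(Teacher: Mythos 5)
Your proof is correct, but it follows a genuinely different route from the paper's. The paper decomposes $x = x^+ - x^-$ and computes in one chain: by linearity $\Diamond(x) = \Diamond(x^+) - \Diamond(x^-)$, with both terms positive by the positivity axiom, and then $|\Diamond(x^+) - \Diamond(x^-)| \leq \Diamond(x^+) + \Diamond(x^-) = \Diamond(x^+ + x^-) = \Diamond(|x|)$; the inequality in the middle is the lattice fact $|a-b| \leq a+b$ for positive $a,b$, which the paper glosses as ``by monotonicity and linearity''. You instead reduce the claim to the two one-sided bounds $\Diamond(x) \leq \Diamond(|x|)$ and $-\Diamond(x) \leq \Diamond(|x|)$ via the identity $|a| = a \sqcup (-a)$, and obtain each bound by applying monotonicity (Remark \ref{remark_monotonicity}) to $x \leq |x|$ and $-x \leq |x|$. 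Both arguments rest on exactly the same two axioms---positivity (equivalently, monotonicity) and linearity---so neither is more general, but yours has the small advantage that every step is an immediate application of monotonicity and the defining property of the join, with no auxiliary inequality about absolute values of differences of positive elements left implicit; the paper's version, in exchange, is more compact, delivering the result as a single computation.
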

\begin{proof}
We can express $x$ as the difference of two positive elements: $x = x^+  - x^-$.  Also, recall that $|x|=x^+ + x^-$. By monotonicity and linearity of $\Diamond$ we get:
$$
| \Diamond( x)| = | \Diamond ( x^+ - x^-) | \leq   | \Diamond ( x^+ + x^-) | =  | \Diamond ( x^+) + \Diamond( x^-) |$$
and using the fact that $\Diamond$ is positive, we obtain

$$
| \Diamond ( x^+) + \Diamond( x^-) | =  \Diamond (x^+) + \Diamond(x^-)  =  \Diamond ( x^+ + x^-) = \Diamond ( |x |) 
$$
as desired.
\end{proof}

\begin{proposition}\label{modal-riesz-ideals}
Let $(A,u,\Diamond)$ be a modal Riesz space and $a\in A$ an infinitely small element of $A$. Then $\Diamond(a)$ is also an infinitely small element of $A$.
\end{proposition}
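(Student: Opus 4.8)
The plan is to unwind the definition of infinitely small and transport the witness through $\Diamond$, using only monotonicity, linearity, and the inequality already established in Proposition \ref{useful_equality}. Since $a$ is infinitely small, by Definition \ref{archimedean:def} there is some $b\in A$ with $n|a|\leq |b|$ for every $n\in\mathbb{N}$. I claim that $c=\Diamond(|b|)$ witnesses that $\Diamond(a)$ is infinitely small.

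First I would bound $|\Diamond(a)|$ from above in terms of $\Diamond(|a|)$: this is exactly Proposition \ref{useful_equality}, which gives $|\Diamond(a)|\leq \Diamond(|a|)$. Multiplying by the nonnegative scalar $n$ (which preserves the order, by the compatibility axiom for scalar multiplication) yields $n|\Diamond(a)|\leq n\,\Diamond(|a|)$, and linearity of $\Diamond$ rewrites the right-hand side as $\Diamond(n|a|)$. Next I would apply monotonicity of $\Diamond$, which holds in every modal Riesz space by Remark \ref{remark_monotonicity}: from $n|a|\leq |b|$ we obtain $\Diamond(n|a|)\leq \Diamond(|b|)=c$. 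Chaining these inequalities gives $n|\Diamond(a)|\leq c$ for all $n\in\mathbb{N}$. Finally, since $|b|\geq 0$ and $\Diamond$ is positive, $c=\Diamond(|b|)\geq 0$, so $|c|=c$, and hence $n|\Diamond(a)|\leq |c|$ for all $n$, which is precisely the statement that $\Diamond(a)$ is infinitely small.

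The argument is essentially immediate once Proposition \ref{useful_equality} is in hand; the only real content is the observation that one should control $|\Diamond(a)|$ via $\Diamond(|a|)$ rather than attempting to work with $\Diamond(a)$ directly, since $\Diamond$ need not commute with the absolute value. I do not anticipate any genuine obstacle beyond keeping the monotonicity and linearity bookkeeping straight.
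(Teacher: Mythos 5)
Your proof is correct and follows essentially the same route as the paper's: both transport the witness $b$ through $\Diamond$ via monotonicity and linearity to get $n\,\Diamond(|a|)\leq\Diamond(|b|)$, use positivity to identify $\Diamond(|b|)$ with its absolute value, and invoke Proposition \ref{useful_equality} to control $|\Diamond(a)|$ by $\Diamond(|a|)$. The only difference is cosmetic: the paper first notes that $\Diamond(|a|)$ itself is infinitely small before concluding, whereas you chain the inequalities directly.
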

\begin{proof}
The assumption says that, for some $b\in B$ and for all $n\in\mathbb{N}$, the inequality $n |a| \leq |b|$ holds. By monotonicity of $\Diamond$ we obtain that $\Diamond(n |a|) \leq \Diamond(|b|)$. By linearity, $\Diamond(n|a|)=n\Diamond(|a|)$. Hence, 
$$
n \Diamond(|a|) \leq \Diamond(|b|)
$$
for all $n\in\mathbb{N}$. Furthermore, by positivity of $\Diamond$, we know that $\Diamond(|a|)$ and $\Diamond(|b|)$ are  positive elements, i.e., $\Diamond(|a|) = | \Diamond( |a|) |$ and $\Diamond(|b|) = | \Diamond( |b|) |$. Hence

$$
n |\Diamond(|a|)| \leq |\Diamond(|b|)|
$$
for all $n\in\mathbb{N}$, which means that $\Diamond(|a|)$ is an infinitely small element. 

We can conclude the proof by observing that $0\leq | \Diamond (a) | \leq \Diamond ( |a|)$ (see Proposition \ref{useful_equality}) since this implies
$$n | \Diamond(a) | \leq  n \Diamond(|a|)     \leq \Diamond(|b|)$$
i.e., that $\Diamond(a)$ is infinitely small.
\end{proof}

\subsection{Dedekind complete modal Riesz spaces}\label{dedekind:sec}

We have stated in Section \ref{sec:dedekind} as Theorem \ref{dedekind-completion-1} the fundamental fact that each Archimedean unital Riesz space $R$ can be embedded in a Dedekind complete unital Riesz space $\overline{R}$. 

In this section we extend this result by showing that modal Archimedean unital Riesz spaces can be embedded in Dedekind complete modal Riesz spaces. 
This is a direct consequence of a theorem of Kantorovich about the extension of positive linear operators on Riesz spaces.

\begin{theorem}[(Dedekind extension of modal Riesz spaces)]\label{completion:thm2}
Let $(R,\Diamond)$ be a Archimedean and unital modal Riesz space. Then there exists a Dedekind complete Archimedean and unital modal Riesz space $(\overline{R},\overline{\Diamond})$ such that:
\begin{enumerate}
\item $\overline{R}$ is the Dedekind completion of $R$ (from Theorem \ref{dedekind-completion-1}) so we view $R\subseteq \overline{R}$,
\item $\overline{\Diamond}$ extends $\Diamond$, i.e., $\Diamond(f)=\overline{\Diamond}(f)$ for all $f\in R$.
\end{enumerate}
\end{theorem}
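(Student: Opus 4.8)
The plan is to build $\overline{\Diamond}$ on top of the order-theoretic Dedekind completion $\overline{R}$ supplied by Theorem \ref{dedekind-completion-1}, by extending $\Diamond$ regarded as a positive linear operator. Recall that Theorem \ref{dedekind-completion-1} gives $\overline{R}$ Dedekind complete, Archimedean and unital, with $R$ order-dense in $\overline{R}$, sharing the same strong unit $u \in R$, and preserving all existing suprema and infima. Since every Dedekind complete Riesz space is automatically Archimedean and uniformly complete, the only thing left to produce is the map $\overline{\Diamond} : \overline{R} \to \overline{R}$; once it is linear, positive and $u$-decreasing, the pair $(\overline{R}, \overline{\Diamond})$ is automatically a Dedekind complete Archimedean unital modal Riesz space. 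Thus the problem reduces to extending the positive linear operator $\Diamond : R \to R \subseteq \overline{R}$ to a positive linear operator on all of $\overline{R}$ that still agrees with $\Diamond$ on $R$.

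The key structural observation is that $R$ is a \emph{majorizing} Riesz subspace of $\overline{R}$: because $u$ is a strong unit of $\overline{R}$ and $u \in R$, every $f \in \overline{R}$ satisfies $f \leq n u$ for some $n \in \mathbb{N}$, with $n u \in R$. This places us exactly in the hypotheses of Kantorovich's extension theorem for positive operators defined on a majorizing subspace with Dedekind complete codomain, which yields a positive linear $\overline{\Diamond} : \overline{R} \to \overline{R}$ extending $\Diamond$. Concretely, one expects the minimal such extension to be
\[
\overline{\Diamond}(f) = \bigsqcup \{\, \Diamond(g) \mid g \in R,\ g \leq f \,\},
\]
where the supremum exists by Dedekind completeness because the set is bounded above by $n\Diamond(u) \leq n u$ (using monotonicity and the $u$-decrease of $\Diamond$ together with $f \leq nu$). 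This formula plainly extends $\Diamond$: for $f \in R$ the element $\Diamond(f)$ itself is the largest member of the indexing set by monotonicity of $\Diamond$, so $\overline{\Diamond}(f) = \Diamond(f)$.

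It then remains to verify the three modal axioms for $\overline{\Diamond}$. Linearity and positivity are exactly the conclusion of Kantorovich's theorem. The $u$-decreasing axiom is immediate and needs no new argument: since $u \in R$ and $\overline{\Diamond}$ agrees with $\Diamond$ on $R$, we get $\overline{\Diamond}(u) = \Diamond(u) \leq u$. Hence conditions (1) and (2) of the statement hold, and $(\overline{R}, u, \overline{\Diamond})$ is of the required type.

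I expect the only real obstacle to be the \emph{additivity} of the extension, i.e.\ $\overline{\Diamond}(f_1 + f_2) = \overline{\Diamond}(f_1) + \overline{\Diamond}(f_2)$. The inequality $\geq$ is easy, since $g_1 \leq f_1$ and $g_2 \leq f_2$ with $g_i \in R$ give $g_1 + g_2 \leq f_1 + f_2$ and $\Diamond(g_1) + \Diamond(g_2) = \Diamond(g_1 + g_2)$. The reverse inequality requires decomposing an arbitrary $g \in R$ with $g \leq f_1 + f_2$ as a sum $g = g_1 + g_2$ with $g_i \in R$ and $g_i \leq f_i$, which is precisely the Riesz decomposition property combined with the order-density of $R$ in $\overline{R}$. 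This decomposition is the technical heart of Kantorovich's theorem, so in the write-up I would simply invoke that theorem rather than redo the argument by hand.
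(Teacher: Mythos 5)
Your proposal is correct and follows essentially the same route as the paper: both reduce the statement to Kantorovich's extension theorem for positive linear operators into a Dedekind complete space, and both dispose of the $u$-decreasing axiom with the same one-line observation that $\overline{\Diamond}(u)=\Diamond(u)\leq u$ because $u\in R$. The extra material you supply --- the explicit majorizing-subspace hypothesis and the candidate formula $\overline{\Diamond}(f)=\bigsqcup\{\Diamond(g)\mid g\in R,\ g\leq f\}$ --- is accurate but is precisely what the cited Kantorovich theorem already packages, so the paper simply omits it.
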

\begin{proof}
By Definition \ref{modal_riesz_def}, the operation $\Diamond:R\rightarrow R$ is positive, linear and $1$-decreasing. Kantorovich's theorem (see, e.g., Theorem X.3.1 and subsequent discussion in \S X.4.1 in \cite{vulikh}) states that any function $F:R\rightarrow R$ which is positive ($F(0)\geq 0$) and linear ($F(f+g)=F(f)+F(g)$ and $F(r f)=r F(f)$) can be extended to a positive and linear operator $\overline{F}:\overline{R}\rightarrow\overline{R}$ on the Dedekind completion of $R$. Thus we just need to verify that the resulting $\overline{\Diamond}$ is also $1$-decreasing ($\overline{\Diamond}(1)\leq 1$) and this is clear since $1\!\in\! R$ and therefore $\overline{\Diamond}(1)=\Diamond(1)$ and $\Diamond(1)\leq 1$ because $\Diamond$ is $1$-decreasing.
\end{proof}
\begin{remark}
The choice of $\overline{\Diamond}$ is, in general, not unique. 
\end{remark}
In other words $(R,\Diamond)$ embeds (preserving the modal operation) in $(\overline{R},\overline{\Diamond})$ and existing suprema and infima are preserved.

We denote with $\DAURiesz_\Diamond$ the category of Dedekind complete modal Riesz spaces. The result of Theorem \ref{completion:thm2} implies the following corollary.

\begin{corollary}The equational theories of $\DAURiesz_\Diamond$ spaces and $\AURiesz_\Diamond$ spaces coincide.
\end{corollary}

%As already observed, closed terms (i.e., without variables) $t$ in the language of modal Riesz spaces are exactly formulas $\phi$ of Riesz modal logic (see Definition \ref{syntax_formulas}). The following result from \cite{miolics2017} states that, on grand terms (i.e., formulas), the theories of $\AURiesz_\Diamond$ and $\CRiesz_\Diamond$ spaces coincide.

\subsection{Relation with other works in the literature}
\label{relation:MValgebras}

Following the celebrated theorem of Mundici, which states that the category of abelian lattice--ordered groups and  that of MV--algebras  are equivalent  (see, e.g., \cite{MundiciBook} for a detailed presentation), much work has focused on the study of MV--algebras and its variants.

In \cite{RMV2011} the authors have introduced \emph{Riesz MV--algebras}, which are MV--algebras endowed with the operation of scalar multiplication by reals  in  the unit interval $[0,1]$. They have proved that the categories of  Riesz spaces with strong unit ($\URiesz$) and that of Riesz MV--algebras are equivalent. %This means that, roughly speaking, the theories of Riesz MV--algebras and unital Riesz spaces are equivalent. 
We decided to develop the theory of our probabilistic modal logic on top of the language of Riesz spaces, rather than that of Riesz--MV algebras, because the operations of addition and scalar multiplication by reals are natural for expressing the axioms of the $\Diamond$ operator, whereas the operations of MV--algebras are arguably harder to understand and would result in less readable axioms. However, rephrasing the work presented in this paper using the language of Riesz--MV algebra should be, in principle, possible.

Flaminio and Montagna have extended the notion of MV--algebra to that of \emph{state MV--algebra} \cite{FM2009}. These algebras are MV--algebras extended with a modal operator $(\sigma$) satisfying certain axioms. Their main result is that the $\sigma$ modality can always be identified with a \emph{state on the MV--algebra} and this, in turn, can always be identified with an integration operation on the spectral representation of the underlying MV--algebra \cite{kroupa2006}. The similarities between state MV--algebras from \cite{FM2009} and modal Riesz spaces are, at the present moment, rather unclear. The two notions are unlikely to be equivalent (even via an equivalence of categories) because the $\sigma$ modality of state MV--algebras satisfies the axiom $\sigma(\sigma(x)) = \sigma(x)$ (see Lemma 3.3.G of \cite{FM2009}) while the equation $\Diamond(\Diamond x) =\Diamond x$ does not hold in modal Riesz spaces (see Example \ref{first:example:logic} in Section \ref{logic_section}).

The precise connection between state MV--algebras and modal Riesz spaces is an interesting topic for further research.

%%% Local Variables:
%%% mode: latex
%%% TeX-master: "ms.tex"
%%% End:

\section{Duality between Markov Processes and modal Riesz Spaces}
\label{section_duality}

In this section we extend the adjunction $(\eta,\epsilon): C\dashv \Spec$ between $\CHaus$ and $\AURiesz\op$ of Section \ref{duality_section_background} to one between $\Markov$ and $\AURieszD\op$ which becomes a duality when restricted to the subcategory $\CAURieszD\op$. The unit-counit adjunction is described by the quadruple
$(\eta^\Diamond,\epsilon^\Diamond): C^\Diamond \dashv \Spec^\Diamond$ consisting of the two functors:

\hspace{1cm}$C^\Diamond\! : \! \Markov \!\rightarrow\! \CAURieszD\op\! \hookrightarrow\!  \AURieszD\op$

\hspace{1cm}$\Spec^\Diamond\! :\! \AURieszD\op \!\rightarrow\! \Markov$

\noindent and the two natural transformations:

\hspace{1cm}$\eta^\Diamond: \id_{\Markov} \Rightarrow \Spec^\Diamond\circ C^\Diamond$

\hspace{1cm}$\epsilon^\Diamond: C^\Diamond\circ \Spec^\Diamond \Rightarrow \id_{\AURieszD\op}$\\

We start by defining the functor  $C^\Diamond\! : \! \Markov \!\rightarrow\! \CAURieszD$. On objects $\alpha:X\rightarrow\Rdnl(X)$ in $\Markov$, it is defined as $C^\Diamond(\alpha) \!=\! A_{\alpha}\!=\!(C(X),\one_X,\dia{\alpha})$, as in \eqref{DiamondShortDefn} and Proposition \ref{properties_modality}. On (coalgebra) maps $\alpha\! \stackrel{f}{\rightarrow}\!\beta$ between $\alpha\!:\!X\!\rightarrow\!\Rdnl(X)$ and $\beta\!:\!Y\rightarrow\!\Rdnl(Y)$ having underlying function $f\! :\! X\! \rightarrow\! Y$, we define $C^\Diamond(f)$ to be $C(f)$, where $C\! : \! \CHaus \!\rightarrow\! \CAURiesz\op$ is the functor described in Section  \ref{duality_section_background}.

We now turn our attention to the definition of the functor $\Spec^\Diamond\! :\! \AURieszD\op \!\rightarrow\! \Markov$. On objects $A=(A,u, \Diamond)$ belonging to $\CAURieszD$ the Markov process 
\[
\alpha_\Diamond : \Spec(A) \rightarrow \Rdnl(\Spec(A))
\]
can be defined as in \eqref{DiamondAlphaDefEqn} from Theorem \ref{DiamondIsoTheorem}. If instead $A$ just belongs to $\AURieszD$ we only have (Theorem \ref{YosidaDualityTheorem}) that $A$ is isomorphic, via the counit $\epsilon_A(a)\!=\!\hat{a}$, to a dense subspace of $C(\Spec(A))$. In this case, for each $J\!\in\! \Spec(A)$, we give a partial definition of the subprobability measure (seen as a linear functional) $\alpha_\Diamond(J)$ on all functions $\hat{a}\in C(\Spec(A))$ as in Theorem \ref{DiamondIsoTheorem}:
\begin{equation}
\label{ThetaDiamondDefEqn}
\alpha_\Diamond(J)(\hat{a}) = \widehat{\Diamond(a)}(J)
\end{equation}
We can then uniquely extend $\alpha(J)$ to the whole space $C(\Spec(A))$ by using the fact that $\epsilon_A$ is an isometry with dense image. On a morphism $f \!:\! (A,u_A,\Diamond_A) \rightarrow (B,u_B,\Diamond_B)$ we define $\Spec^\Diamond(f)$ as $\Spec(f)$, where $\Spec\! :\! \AURiesz\op \!\rightarrow\! \CHaus$ is the functor described in Section  \ref{duality_section_background}.

The unit $\eta^\Diamond: \id_{\Markov} \Rightarrow \Spec^\Diamond\circ C^\Diamond$ is defined exactly as the unit $\eta$ from Section \ref{duality_section_background}. That is, for all Markov processes $\alpha\!:\!X\!\rightarrow\!\Rdnl(X)$, we define $\eta^\Diamond_{\alpha}\!=\! \alpha \stackrel{\eta_X}{\rightarrow} \Spec^\Diamond(C^\Diamond(\alpha))$  having underlying function $\eta_X\!:\!X\rightarrow \Spec(C(X))$.

Similarly, the counit $\epsilon^\Diamond: C^\Diamond\circ \Spec^\Diamond \Rightarrow \id_{\AURieszD\op}$ is defined exactly as the counit $\epsilon$ from Section \ref{duality_section_background}. That is, for $A=(A,u,\Diamond)$ in $\AURieszD$ we define $\epsilon^{\Diamond}_{A}=\epsilon_{A}$.

The fact that all previous definitions are consistent, e.g., that $C^\Diamond$ indeed maps coalgebra morphisms to modal Riesz space morphisms or that $\eta^{\Diamond}$ is indeed a collection of coalgebra morphisms, are summarized by the following theorem.

%$\theta_\Diamond : \Spec(A) \rightarrow \Rdnl(\Spec(A))$ as
%\begin{equation}
%\label{ThetaDiamondDefEqn}
%\theta_\Diamond(J)(\hat{a}) = \widehat{\Diamond(a)}(J)
%\end{equation}
%where $a \in A$ and $J \in \Spec(A)$, and then use the fact that $\hat{\blank}$ has dense image (Theorem \ref{YosidaDualityTheorem}) to define $\theta_\Diamond(J)$ for all elements of $C(\Spec(A))$. 

%We then define $\Spec(A,\one,\Diamond) = (\Spec(A), \theta_\Diamond)$. On a morphism $f : (A,\one_A,\Diamond_A) \rightarrow (B,\one_B,\Diamond_B)$ we use the same definition of $\Spec(f)$ as for the underlying map of Riesz spaces. 

\begin{theorem}\label{main_theorem_paper}
As defined above, $C^\Diamond$ and $\Spec^\Diamond$ are functors and $\eta^\Diamond$ and $\epsilon^\Diamond$ are natural transformations. Furthermore $\Spec^\Diamond$ is a right adjoint to $C^\Diamond$, and restricts to an equivalence $\Markov \simeq \CAURieszD\op$.
\end{theorem}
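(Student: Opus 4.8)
The plan is to \emph{lift} the Yosida adjunction $(\eta,\epsilon): C \dashv \Spec$ of Theorem~\ref{YosidaAdjunctionTheorem} along the forgetful operations that discard the modal operator and the coalgebra structure. Since $C^\Diamond$ and $\Spec^\Diamond$ act as $C$ and $\Spec$ on underlying spaces and underlying continuous maps, and since $\eta^\Diamond=\eta$ and $\epsilon^\Diamond=\epsilon$ on components, almost all of the required structure---functoriality of the underlying assignments, the naturality squares for $\eta^\Diamond$ and $\epsilon^\Diamond$, and the two triangle identities---is inherited verbatim from Theorems~\ref{YosidaAdjunctionTheorem} and \ref{YosidaDualityTheorem}. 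What must genuinely be checked is only the compatibility of each ingredient with the \emph{extra} structure: that $C^\Diamond$ sends coalgebra morphisms to modal Riesz homomorphisms, that $\Spec^\Diamond$ sends modal homomorphisms to coalgebra morphisms, that each $\eta^\Diamond_\alpha$ is a coalgebra morphism, and that each $\epsilon^\Diamond_A$ is a modal homomorphism.

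The core is a single compatibility lemma, read in two dual directions. Given a coalgebra morphism $\alpha\stackrel{f}{\rightarrow}\beta$ between $\alpha:X\rightarrow\Rdnl(X)$ and $\beta:Y\rightarrow\Rdnl(Y)$, I would show that $C(f)$ preserves the modal operator, i.e.\ $C(f)(\dia{\beta}(b)) = \dia{\alpha}(C(f)(b))$ for all $b\in C(Y)$. Unfolding $C(f)(b)=b\circ f$ and the definition \eqref{DiamondShortDefn} of $\dia{\alpha}$, this is exactly the identity $\E_{\beta(f(x))}(b)=\E_{\alpha(x)}(b\circ f)$ of \eqref{obs_1}, which is equivalent to commutativity of the coalgebra square \eqref{MarkovMorphDiag}; this is precisely the content already extracted in Proposition~\ref{preservation_lemma}. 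Dually, for a modal homomorphism $g:(A,u_A,\Diamond_A)\rightarrow(B,u_B,\Diamond_B)$ I would verify that $\Spec(g)$ is the underlying map of a coalgebra morphism between $\alpha_{\Diamond_B}$ and $\alpha_{\Diamond_A}$. After transporting along the isometric embeddings $\epsilon_A,\epsilon_B$ of Theorem~\ref{YosidaDualityTheorem}, commutativity of the coalgebra square reduces, on the dense images $\epsilon_B(B)$, to the defining equation \eqref{ThetaDiamondDefEqn} together with the hypothesis that $g$ commutes with $\Diamond_A,\Diamond_B$; density and continuity then extend the identity to all of $C(\Spec(B))$. These two facts deliver functoriality of $C^\Diamond$ and $\Spec^\Diamond$.

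For the two natural transformations, the counit requires the least work: by the very definition \eqref{ThetaDiamondDefEqn} we have $\widehat{\Diamond(a)}(J)=\alpha_\Diamond(J)(\hat a)=\Diamond_{\alpha_\Diamond}(\hat a)(J)$ for every $J\in\Spec(A)$, which says precisely that $\epsilon^\Diamond_A=\epsilon_A$ preserves the modality and hence is a genuine morphism of $\AURieszD$. For the unit, I would check that $\eta^\Diamond_\alpha$, with underlying homeomorphism $\eta_X=N_\blank$, is a coalgebra morphism from $\alpha$ to $\Spec^\Diamond(C^\Diamond(\alpha))$: by \eqref{UsefulHatEquation} the transported structure map evaluates as $\alpha_{\Diamond_\alpha}(N_x)(\hat f)=\widehat{\dia{\alpha}(f)}(N_x)=\dia{\alpha}(f)(x)=\E_{\alpha(x)}(f)$, so it is simply $\alpha$ transported along $\eta_X$, whence the coalgebra square commutes. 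Naturality of $\eta^\Diamond$ and $\epsilon^\Diamond$, and the triangle identities defining the adjunction $C^\Diamond\dashv\Spec^\Diamond$, then follow from the corresponding statements of Theorem~\ref{YosidaAdjunctionTheorem}, since the underlying continuous maps and Riesz homomorphisms are unchanged.

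Finally, for the equivalence I would restrict to $\CAURieszD$. There $\eta_X$ is always an isomorphism in $\CHaus$ (Theorem~\ref{YosidaAdjunctionTheorem}) and $\epsilon_A$ is an isomorphism precisely because $A$ is uniformly complete (Theorem~\ref{YosidaDualityTheorem}); by the previous paragraph these isomorphisms respect the modal and coalgebra structure, so the adjunction restricts to an adjoint equivalence $\Markov\simeq\CAURieszD\op$. I expect the main obstacle to be not any single algebraic identity---each reduces to the definitions once stated---but the careful bookkeeping in the definition of $\Spec^\Diamond$ on objects of $\AURieszD$ that are \emph{not} uniformly complete. There $\alpha_\Diamond(J)$ is prescribed by \eqref{ThetaDiamondDefEqn} only on the dense subspace $\epsilon_A(A)\subseteq C(\Spec(A))$, and one must verify that its extension by isometry and density is a well-defined, positive, linear, $\one$-decreasing functional for each $J$, that $J\mapsto\alpha_\Diamond(J)$ is continuous into the weak-$*$ topology, and that all of this is stable under morphisms---exactly the place where the uniform-completeness hypotheses behind Theorem~\ref{DiamondIsoTheorem} are doing their work.
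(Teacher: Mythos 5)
Your proposal follows essentially the same route as the paper's proof: the same pointwise computation showing $C(f)$ intertwines $\dia{\beta}$ and $\dia{\alpha}$ (equivalently, the content of Proposition \ref{preservation_lemma}), the same density/naturality argument showing $\Spec^\Diamond$ sends modal homomorphisms to coalgebra morphisms, the same identification of the unit and counit with $N_\blank$ and $\hat{\blank}$ together with the observation that commutativity of diagrams and invertibility of morphisms are detected at the underlying $\CHaus$/$\AURiesz$ level, and the same restriction to $\CAURieszD$ for the equivalence. The one step you explicitly flag but do not carry out — well-definedness and weak-* continuity of $\alpha_\Diamond$ for objects of $\AURieszD$ that are not uniformly complete — is resolved in the paper by noting that the positive cone of $C(\Spec(A))$ is norm-closed (so positivity and unitality survive the extension by density) and that the weak-* topology induced by the dense subspace $\hat{\blank}(A)$ agrees with the usual weak-* topology on the norm-bounded, hence equicontinuous, set $\Rdnl(\Spec(A))$.
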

\begin{proof}
By Example \ref{MarkovToRieszExample} we have that if $(X,\alpha)$ is a Markov process, $(C(X),\dia{\alpha})$ is an object of $\CAURiesz_{\Diamond}$. We now show that if $f : X \rightarrow Y$ underlies a Markov process homomorphism $(X, \alpha) \rightarrow (Y, \beta)$, then $C^\Diamond(f)$ is a morphism in $\CAURiesz_{\Diamond}$ from $C(Y) \rightarrow C(X)$, \emph{i.e.} if the diagram \eqref{MarkovMorphDiag} commutes, then $C(f) \circ \dia{\beta} = \dia{\alpha} \circ C(f)$, as follows. We prove this by applying the left hand side to arbitrary elements $b \in C(Y)$ and $x \in X$:
\begin{align*}
C(f)(\dia{\beta}(b))(x) &= \dia{\beta}(b)(f(x))  \\
 &= \beta(f(x))(b) & \text{by \eqref{DiamondShortDefn}} \\
 &= \Rdnl(f)(\alpha(x))(b) & \text{by \eqref{MarkovMorphDiag}}\\
 &= \alpha(x)(b \circ f) & \text{by \eqref{RdnlMapDefn}} \\
 &= \alpha(x)(C(f)(b)) &  \\
 &= \dia{\alpha}(C(f)(b))(x) & \text{by \eqref{DiamondShortDefn}}.
\end{align*}
We then have that $C^\Diamond$ preserves the identity maps and composition because $C$ does so as a functor $\CHaus \rightarrow \CAURiesz\op$. 

We show that, for $(A,u,\Diamond) \in \AURieszD$, $(\Spec(A),\alpha_\Diamond)$ is a Markov process as follows. By the pointwiseness of the definitions, $\alpha_\Diamond(J)$ is positive and unital for all $J \in \Spec(A)$, at least on the subspace $\hat{\blank}(A) \subseteq C(\Spec(A))$. Its extension to $C(\Spec(A))$ is positive and unital because the positive cone in $C(\Spec(A))$ is (norm) closed. We show that $\alpha_\Diamond$ is a continuous map as follows. Let $(J_i)_{i \in I}$ be a net converging to an ideal $J$ in the hull-kernel topology of $\Spec(A)$. For each $a \in A$, we have that
\begin{align*}
\alpha_\Diamond\left(\lim_i J_i\right)(\hat{a}) &= \widehat{\Diamond(a)}\left(\lim_i J_i\right) = \lim_i \widehat{\Diamond(a)}(J_i) \\
 &= \lim_i \alpha_\Diamond(J_i)(\hat{a}).
\end{align*}
Therefore $\alpha_\Diamond$ is continuous in the weak-* topology defined by $\hat{\blank}(A) \subseteq C(\Spec(A))$. As $\hat{\blank}(A)$ is dense in $C(\Spec(A))$ and $\Rdnl(\Spec(A)) \subseteq C(\Spec(A))^*$ is norm-bounded and therefore equicontinuous, this topology agrees with the usual weak-* topology defined by $C(\Spec(A))$ on $\Rdnl(\Spec(A))$ \cite[III.4.5]{schaefer}. Therefore $\Spec(A,u,\Diamond)$ is always a Markov process. 

Let $f : (A,u_A,\Diamond_A) \rightarrow (B,u_B,\Diamond_B)$ be a morphism in $\AURieszD$. We want to show that $\Spec^\Diamond(f)$ is a morphism of Markov processes, \ie, $\alpha_{\Diamond_A} \circ \Spec(f) = \Rdnl(\Spec(f)) \circ \alpha_{\Diamond_B}$. We do this by proving that for all $J \in \Spec(B)$ and $a \in A$ that $\alpha_{\Diamond_A}(\Spec(f)(J))(\hat{a}) = \Rdnl(\Spec(f))(\alpha_{\Diamond_B}(J))(\hat{a})$, using the denseness of $\hat{\blank}(A) \subseteq C(\Spec(A))$. We have, writing ``nat'' to indicate the use of the naturality of $\hat{\blank}$ from Theorem \ref{YosidaDualityTheorem},
\begin{align*}
\Rdnl(\Spec(f))(\alpha_{\Diamond_B}(J))(\hat{a}) &= \alpha_{\Diamond_B}(J)(C(\Spec(f))(\hat{a})) \\
 &= \alpha_{\Diamond_B}(J)(\widehat{f(a)}) & \text{nat} \\
 &= \widehat{\Diamond_B(f(a))}(J) \\
 &= \widehat{f(\Diamond_A(a))}(J) \\
 &= C(\Spec(f))(\widehat{\Diamond_A(a)})(J) & \text{nat} \\
 &= \widehat{\Diamond_A(a)}(\Spec(f)(J)) \\
 &= \alpha_{\Diamond_A}(\Spec(f)(J))(\hat{a}).
\end{align*}
As in the case of $C^\Diamond$, the rest of the proof that $\Spec^\Diamond$ is a functor follows as in Theorem \ref{YosidaDualityTheorem} from the fact that $\Spec$ is a functor. 

We can finish the proof that this is a dual adjunction that restricts to a duality $\CAURieszD\op \simeq \Markov$ by proving that $N_\blank$ and $\hat{\blank}$, the unit and counit of the adjunction in Theorem \ref{YosidaDualityTheorem}, are a morphism of Markov processes and a modal Riesz homomorphism, respectively. The reason for this is that diagrams in $\Markov$ (respectively, in $\AURieszD$) commute iff their underlying diagrams in $\CHaus$ (respectively, in $\AURiesz$) commute, and morphisms in $\Markov$ (respectively, in $\AURieszD$) are isomorphisms iff their underlying morphisms in $\CHaus$ (respectively, in $\AURiesz$) are isomorphisms. 

We first show that $\hat{\blank}$ is a modal Riesz homomorphism, \ie, that if $(A,u,\Diamond)$ is an object of $\AURieszD$, $\widehat{\Diamond(a)} = \Diamond_{\alpha_{\Diamond}}(\hat{a})$ for all $a \in A$. Let $J \in \Spec(A)$:
\[
\Diamond_{\alpha_{\Diamond}}(\hat{a})(J) = \alpha_{\Diamond}(J)(\hat{a}) = \widehat{\Diamond(a)}(J).
\]

Now we want to show $N_\blank$ is a Markov morphism, \ie, that if $(X,\alpha)$ is a Markov process, $\Rdnl(N_\blank) \circ \alpha = \alpha_{\Diamond_\alpha} \circ N_\blank$. We use the fact that each $b \in C(\Spec(C(X)))$ is of the form $b = \hat{a}$ for some $a \in C(X)$ (Theorem \ref{YosidaDualityTheorem}) to reduce this to showing that $\Rdnl(N_\blank)(\alpha(x))(\hat{a}) = \alpha_{\Diamond_{\alpha}}(N_x)(\hat{a})$. Observe that
\begin{align*}
\alpha_{\Diamond_\alpha}(N_x)(\hat{a}) &= \widehat{\Diamond_\alpha(a)}(N_x) \\
 &= \Diamond_\alpha(a)(x) &\eqref{UsefulHatEquation} \\
 &= \alpha(x)(a) \\
 &= \alpha(x)(\hat{a} \circ N_\blank) & \eqref{UsefulHatEquation} \\
 &= \Rdnl(N_\blank)(\alpha(x))(\hat{a}).
\end{align*}
This concludes the proof.
\end{proof}

\section{Initial Algebra}
\label{final_coalgebra_section}

In this section we study the properties of the initial modal Riesz space in the category $\CAURiesz_{\Diamond}$ of Archimedean unital modal Riesz spaces. Its dual object is the final Markov process (the final coalgebra) in the category $\Markov$ of Markov processes.

%\subsection{Initial modal Riesz space}

%In this section we prove that the initial objects in the categories $\CRiesz^{u}_{\Diamond}$ and its subcategory $\AURiesz^{u}_{\Diamond}$ exist and coincide. In other words, we show that the initial object in $\CRiesz^{u}_{\Diamond}$ exists and is Archimedean and unital.

\subsection{Initial object of $\CRiesz^{u}_{\Diamond}$}

We first start by considering  the initial object, which we denote by $\mathbb{I}$, in the category $\CRiesz^{u}_{\Diamond}$ of all modal Riesz spaces, thus including non--Archimedean and non--unital spaces. 

Since $\CRiesz^{u}_{\Diamond}$ is a variety in the sense of universal algebra, the initial object exists and it can be constructed as a ground term algebra (free algebra of no generators) in a standard way, as follows.

Let $\Formulas$ be the set of terms without variables in the signature  $\{0,+,(r)_{r\in\mathbb{R}},\sqcap,\sqcup,1,\Diamond\}$ of modal Riesz spaces. Equivalently, $\Formulas$ is the set of formulas of Riesz modal logic.

 We define the equivalence relation $\equiv\ \subseteq \Formulas \times\Formulas$ as: $\phi\equiv \psi$ if and only if $\phi$ and $\psi$ are provably equal (in equational logic) from the axioms of modal Riesz spaces when interpreting the atomic formula $1\!\in\! \Formulas$ as the the constant $u$ in the language of modal Riesz spaces (Definition \ref{modal_riesz_def}):
$$
\phi\equiv \psi\ \ \   \Longleftrightarrow\ \ \   (\textnormal{Axioms of }\CRiesz^{u}_{\Diamond})\vdash \phi=\psi
$$
 
The collection of equivalence classes of $\equiv$ 
$$\Formulas/_{\equiv}=\{ [\phi]_{\equiv} \mid \textnormal{ $\phi$ is a Riesz modal logic formula}\}$$
is endowed with the structure of a modal Riesz space $\mathbb{I}$ as follows: 
$$\mathbb{I}=\langle \Formulas/_{\equiv}, 0^\mathbb{I}, +^\mathbb{I} , (r^\mathbb{I})_{r\in\mathbb{R}}, \sqcap^\mathbb{I}, \sqcup^\mathbb{I},  u^\mathbb{I}, \Diamond^\mathbb{I} \rangle$$
 where:
\begin{center} 
 $0^\mathbb{I} = [0]_{\equiv}$ $\ \ \ $ $u^\mathbb{I}=[1]_{\equiv}$  $\ \ \ $ 
 $\big([\phi]_{\equiv}+^\mathbb{I} [\psi]_{\equiv}\big) = [\phi+\psi]_{\equiv}$\\
 
 \vspace{2mm}
 
 $r^\mathbb{I}([\phi]_{\equiv}) = [r\phi]_{\equiv}$ $\ \ \ $ 
 $\big([\phi]_{\equiv}\sqcap^\mathbb{I} [\psi]_{\equiv}\big) = [\phi\sqcap \psi]_{\equiv}$\\
 
  \vspace{2mm}
 $\big([\phi]_{\equiv}\sqcup^\mathbb{I} [\psi]_{\equiv}\big) = [\phi\sqcup \psi]_{\equiv}$ $\ \ \ $ 
 $\Diamond^\mathbb{I}([\phi]_\equiv) = [\Diamond \phi]_{\equiv}$
\end{center}

\begin{proposition}\label{initial_object_modalrieszspace}
The modal Riesz space $\mathbb{I}$  is the initial object in the category $\CRiesz^{u}_{\Diamond}$ of modal Riesz  spaces. For each modal Riesz space $(A,u,\Diamond)$ there is a unique modal Riesz homomorphism $!_A : \mathbb{I} \rightarrow A$ defined inductively as:

\begin{center}
$!_A ( [0]_\equiv ) = 0_A  \ \ \ \ \ \ !_A ( [1]_\equiv ) = u$
\end{center}

\begin{center}
$!_A ( [\phi + \psi]_\equiv ) \ =\  !_A([\phi]_{\equiv})\ + \ !_A([\psi]_{\equiv}) \  \ \ \ \ \ \ \ \ \ !_A ( [r \phi]_\equiv  ) \ =\  r \big( !_A([\phi]_{\equiv})\big)  $
\end{center}

\begin{center}
$!_A ( [\phi \sqcup \psi]_\equiv ) \ =\  !_A([\phi]_{\equiv})\ \sqcup \ !_A([\psi]_{\equiv}) \  \ \ \ \ \ \ \ \ \ !_A ( [\phi \sqcap \psi]_\equiv ) \ =\  !_A([\phi]_{\equiv})\ \sqcap \ !_A([\psi]_{\equiv})  $
\end{center}

\begin{center}
$!_A ( [\Diamond \phi]_\equiv ) \ =\  \Diamond\big( !_A([\phi]_{\equiv})\big)$.
\end{center}

\end{proposition}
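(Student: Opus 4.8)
The plan is to treat this as the standard universal-algebra fact that, for any variety, the ground-term algebra modulo the congruence generated by the equational theory is the initial object. Concretely, I would proceed in three steps: first check that $\mathbb{I}$ is a well-defined object of $\CRiesz^{u}_{\Diamond}$; then construct the homomorphism $!_A$ and verify it satisfies the displayed recursion; and finally establish uniqueness by structural induction.

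First I would verify that the operations on $\Formulas/_{\equiv}$ are well-defined and that $\mathbb{I}$ satisfies the axioms. Well-definedness amounts to showing that $\equiv$ is a congruence, which is immediate from the fact that equational logic includes reflexivity, symmetry and transitivity together with the replacement (congruence) rule: if $\phi \equiv \phi'$ and $\psi \equiv \psi'$ then $\phi + \psi \equiv \phi' + \psi'$, and similarly for $\sqcup$, $\sqcap$, $r(\blank)$ and $\Diamond$. That $\mathbb{I}$ models every axiom of $\CRiesz^{u}_{\Diamond}$ follows because each axiom is trivially derivable, so every substitution instance $s = t$ yields $[s]_{\equiv} = [t]_{\equiv}$ in $\mathbb{I}$; here the constant $1 \in \Formulas$ plays the role of the distinguished element $u$, matching $u^{\mathbb{I}} = [1]_{\equiv}$.

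For existence of $!_A$, I would first define the interpretation map $\widehat{!}_A : \Formulas \rightarrow A$ by structural recursion, interpreting each term-former by the corresponding operation of $A$ and sending the constant $1$ to the distinguished unit $u$ of $A$; this is exactly the right-hand side of the displayed equations read on representatives. The key point is that this descends to the quotient: if $\phi \equiv \psi$, i.e.\ $\phi = \psi$ is provable from the axioms of $\CRiesz^{u}_{\Diamond}$, then by \emph{soundness} of equational logic the equation $\phi = \psi$ holds in every algebra satisfying those axioms, and $A$ is such an algebra; hence $\widehat{!}_A(\phi) = \widehat{!}_A(\psi)$. Therefore $!_A([\phi]_{\equiv}) \bydef \widehat{!}_A(\phi)$ is well-defined, and it is a modal Riesz homomorphism since $\widehat{!}_A$ commutes with every operation by construction.

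Uniqueness is then a routine induction: any homomorphism $g : \mathbb{I} \rightarrow A$ must send $[0]_{\equiv}$ to $0_A$ and $[1]_{\equiv} = u^{\mathbb{I}}$ to $u$ (morphisms in $\CRiesz^{u}_{\Diamond}$ preserve the distinguished positive element) and must commute with $+$, scalar multiplication, $\sqcup$, $\sqcap$ and $\Diamond$; since every element of $\mathbb{I}$ is $[\phi]_{\equiv}$ for some formula $\phi$, induction on the structure of $\phi$ forces $g([\phi]_{\equiv}) = {!_A}([\phi]_{\equiv})$, so $g = {!_A}$. The only genuinely non-bookkeeping step is the descent of $\widehat{!}_A$ through the quotient, and this is precisely where soundness of equational logic is used; everything else is the standard verification that the free algebra on no generators is initial.
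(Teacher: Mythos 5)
Your proof is correct and takes essentially the same route as the paper: the paper states this proposition without an explicit proof, relying on the standard universal-algebra fact that the ground-term algebra modulo provable equality is initial in a variety, and your three steps (congruence/well-definedness, descent of the interpretation map via soundness of equational logic, uniqueness by structural induction) are exactly the standard details being invoked. Nothing is missing; in particular you correctly handle the two points that need care here, namely that $1$ is interpreted as the distinguished element $u$ and that morphisms in $\CRiesz^{u}_{\Diamond}$ are required to preserve both $u$ and $\Diamond$.
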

%\begin{corollary}
%Let $s(x_1,\dots, x_n)$ and $t(x_1,\dots,x_n)$ be terms in the language of modal Riesz spaces.
%The equality $s(x_1,\dots, x_n)=t(x_1,\dots,x_n)$ holds in all modal Riesz spaces if and only if it holds in $\mathbb{I}$. \todo{Is this trivial? Looks like yes. But it's not a general fact. E.g., the free group of zero generator is the trivial group and it does not generate the variety.}
%\end{corollary}

\subsection{Initial object of $\URiesz_{\Diamond}$}

We  now observe that the positive element $[1]_{\equiv}$ is a strong unit of $\mathbb{I}$. % and then use this fact to show that $\mathbb{I}$ is Archimedean.

\begin{theorem}\label{strong_unit_theorem}
The element $[1]_{\equiv}$ is a strong unit of $\mathbb{I}$.
\end{theorem}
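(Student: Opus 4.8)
The plan is to verify directly the two defining conditions of a strong unit: that $[1]_{\equiv}$ is positive, and that it is \emph{absorbing}, i.e.\ that for every element $[\phi]_{\equiv}\in\mathbb{I}$ there is some $n\in\mathbb{N}$ with $|[\phi]_{\equiv}|\leq n\,[1]_{\equiv}$. Positivity is immediate: $[1]_{\equiv}=u^{\mathbb{I}}$ and the axiom $0\leq u$ of modal Riesz spaces holds in $\mathbb{I}$, so $[1]_{\equiv}\geq 0^{\mathbb{I}}$. For the absorbing condition, since the order of $\mathbb{I}$ is exactly provable inequality, it suffices to show by induction on the structure of the formula $\phi$ that there exists $n\in\mathbb{N}$ such that $|\phi|\leq n\cdot 1$ is derivable from the axioms of $\CRiesz^{u}_{\Diamond}$ (equivalently, holds in $\mathbb{I}$). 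All the manipulations below are consequences of the Riesz space axioms and are therefore valid in $\mathbb{I}$.

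First I would dispatch the base cases: $|0|=0\leq 1\cdot[1]_{\equiv}$ and $|1|=1\leq 1\cdot[1]_{\equiv}$ (using $1\geq 0$), so $n=1$ works for both. For the vector--lattice cases I would use the standard Riesz space estimates, writing $m,m_1,m_2$ for the bounds supplied by the inductive hypothesis. For $\phi+\psi$, the triangle inequality $|\phi+\psi|\leq|\phi|+|\psi|$ gives $n=m_1+m_2$. For $\phi\sqcup\psi$ and $\phi\sqcap\psi$, the estimates $|\phi\sqcup\psi|\leq|\phi|+|\psi|$ and $|\phi\sqcap\psi|\leq|\phi|+|\psi|$ again give $n=m_1+m_2$. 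For $r\phi$, using $|r\phi|\leq|r|\,|\phi|\leq |r|\,m\cdot 1$ together with monotonicity of scalar multiplication by $1\geq 0$, any natural number $n\geq|r|m$ (e.g.\ $n=\lceil|r|m\rceil$) suffices; this rounding is the only place where it matters that the witness must be a \emph{natural} number rather than an arbitrary real scalar.

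The only case that genuinely uses the modal structure, and the one I expect to be the crux, is $\phi=\Diamond\psi$. Here the inductive hypothesis gives $|\psi|\leq m\cdot 1$, and I would chain Proposition \ref{useful_equality}, monotonicity of $\Diamond$ (Remark \ref{remark_monotonicity}), linearity, and the $u$--decreasing axiom to obtain $|\Diamond\psi|\leq\Diamond(|\psi|)\leq\Diamond(m\cdot 1)=m\,\Diamond(1)\leq m\cdot 1$, where the middle inequality uses $|\psi|\leq m\cdot 1$ with monotonicity and the final one uses $\Diamond 1\leq 1$ together with $m\geq 0$. Thus $n=m$ works, closing the induction. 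Combining the absorbing property with positivity shows that $[1]_{\equiv}$ is a strong unit of $\mathbb{I}$.
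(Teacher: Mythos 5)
Your proof is correct and follows essentially the same route as the paper's: structural induction on $\phi$, with the $\Diamond$ case handled by the chain $|\Diamond\psi|\leq\Diamond(|\psi|)\leq m\,\Diamond(1)\leq m\cdot 1$ via Proposition \ref{useful_equality}, monotonicity, linearity, and the $u$-decreasing axiom. You merely spell out the lattice and scalar cases (which the paper dismisses as ``similar'') and add the explicit positivity check for $[1]_{\equiv}$, so there is nothing to correct.
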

\begin{proof}
We need to prove that for every formula $\phi$, there exists some $n\in\mathbb{N}$ such that the inequality $|\phi|\leq n 1$ is derivable from the axioms. This follows easily by induction on the structure of $\phi$ as follows. The base cases $\phi\!=\!0$ and $\phi\!=\!1$ are trivial. For the case $\phi=\phi_1 + \phi_2$ let us fix, using the inductive hypothesis, $n_1,n_2\!\in\!\mathbb{N}$ such that $|\phi_1|\leq n 1$ and $|\phi_2|\leq n_2 1$ respectively. Then the inequality $\phi_1+\phi_2\leq (n_1+n_2)1$ is easily derivable. The cases for $\phi=r\phi_1$, $\phi=\phi_1\sqcap \phi_2$ and $\phi=\phi_1\sqcup \phi_2$ are similar. For the case $\phi=\Diamond \phi_1$, we can use the inductive hypothesis to get a number $n_1$ such that $|\phi_1|\leq n_1 1$. Hence, by monotonicity of $\Diamond$, we get 
$ \Diamond ( |\phi_1| ) \leq \Diamond (n_1 1)$. Using the linearity of $\Diamond$ and the axiom $\Diamond 1 \leq 1$ we obtain 
$$ \Diamond ( |\phi_1| ) \leq \Diamond (n_1 1) = n_1 (\Diamond 1) \leq n_1 1$$
\noindent
We can now conclude using the fact that $| \Diamond \phi_1 | \leq \Diamond( |\phi_1 |)$ (see Proposition \ref{useful_equality}).
%
%
%We will prove that $|\Diamond \phi_1|\leq (2n_1)1$. 
%
%From the hypothesis $|\phi_1|\leq n_1 1$ we can use the monotonicity of $\Diamond$ (see Remark \ref{remark_monotonicity}), the linearity of $\Diamond$ and the the axiom $\Diamond(1)\leq 1$ to get:
%$$
%\Diamond(|\phi_1|) \leq \Diamond(n_1 1)= n_1 \Diamond(1) \leq n_1 1
%$$
 %It then suffices to show that $|\Diamond(\phi_1)| \leq 2\Diamond(|\phi_1|)$. Recall that $|x|=x^+ + x^-$,  $x^+=x\sqcup 0$ and $x^- = (-x)^+$. Then using the linearity of $\Diamond$ we get:
 %\begin{center}
% \begin{tabular}{l l l}
% $ |\Diamond(\phi_1)|$ &  $ =$ & $\Diamond(\phi_1)^+ + \Diamond(\phi_1)^-$\\
% & $ =$ & $ \Diamond(\phi_1)^+ + (-\Diamond(\phi_1))^+ $ \\
% & $=$ & $\Diamond(\phi_1)^+ + \Diamond(-\phi_1)^+$
%\end{tabular}
%\end{center}
%From the easily derivable (in)equalities $x\leq |x|$ and $|x|=|-x|$ and the monotonicity of $\Diamond$ we get:
 %\begin{center}
 %\begin{tabular}{l l l}
%$ \Diamond(\phi_1)^+ + \Diamond(-\phi_1)^+$ & $ \leq$ &$ \Diamond(|\phi_1|)^+ + \Diamond(|-\phi_1|)^+ $ \\
%& $=$ & $  \Diamond(|\phi_1|)^+  + \Diamond(|\phi_1|)^+ $ \\
%& $=$ & $ 2\Diamond(|\phi_1|)^+$
%\end{tabular}
%\end{center}
%By positivity of $\Diamond$ we have $\Diamond(|\phi_1|)\!\geq\! 0$ and therefore $\Diamond(|\phi_1|)^+ \!=\! \Diamond(|\phi_1|)$. Hence $ |\Diamond(\phi_1)|  \!\leq\! 2\Diamond(|\phi_1|)$ as desired.
\end{proof}

Hence $\mathbb{I}$ belongs to the subcategory $\URiesz_{\Diamond}$ and it is its initial object.

\begin{proposition}
The modal Riesz space $\mathbb{I}$  is the initial object in the category $\URiesz_{\Diamond}$ of modal unital Riesz  spaces.
\end{proposition}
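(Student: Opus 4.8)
The plan is to observe that this proposition follows almost immediately from the two results that precede it, together with a general categorical fact about initial objects and full subcategories. Recall that Proposition \ref{initial_object_modalrieszspace} established that $\mathbb{I}$ is the initial object of the ambient category $\CRiesz^{u}_{\Diamond}$, and that Theorem \ref{strong_unit_theorem} has just shown that the distinguished positive element $[1]_{\equiv}$ is in fact a strong unit of $\mathbb{I}$.

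First I would note that, by the definition of a unital modal Riesz space, the content of Theorem \ref{strong_unit_theorem} is exactly that $\mathbb{I}$ is unital; hence $\mathbb{I}$ is an object of the subcategory $\URiesz_{\Diamond}$. Next I would recall that $\URiesz_{\Diamond}$ is a \emph{full} subcategory of $\CRiesz^{u}_{\Diamond}$: both categories have homomorphisms of modal Riesz spaces as their morphisms, and $\URiesz_{\Diamond}$ is obtained merely by restricting the class of objects to the unital ones. In particular, for any two objects lying in $\URiesz_{\Diamond}$ the hom-sets computed in the two categories coincide.

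The key step is then the general fact that an initial object of a category which happens to lie in a full subcategory is automatically initial in that subcategory. Concretely, let $(A,u,\Diamond)$ be any object of $\URiesz_{\Diamond}$. Since $(A,u,\Diamond)$ is also an object of $\CRiesz^{u}_{\Diamond}$ and $\mathbb{I}$ is initial there, there is a unique modal Riesz homomorphism $!_A \colon \mathbb{I} \rightarrow A$ in $\CRiesz^{u}_{\Diamond}$, given explicitly by the inductive formulas of Proposition \ref{initial_object_modalrieszspace}. By fullness this same map is the unique morphism $\mathbb{I} \rightarrow A$ in $\URiesz_{\Diamond}$, so $\mathbb{I}$ is initial in $\URiesz_{\Diamond}$.

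I do not expect any genuine obstacle here: all the substantive work has already been carried out in Theorem \ref{strong_unit_theorem} (verifying that $[1]_{\equiv}$ is a strong unit), and what remains is purely formal. The only point worth stating carefully is the fullness of the inclusion $\URiesz_{\Diamond} \hookrightarrow \CRiesz^{u}_{\Diamond}$, which is what guarantees that both existence and uniqueness of the mediating morphism are inherited from the larger category.
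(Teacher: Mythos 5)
Your proof is correct and is exactly the argument the paper intends: the paper offers no separate proof, simply concluding from Proposition \ref{initial_object_modalrieszspace} and Theorem \ref{strong_unit_theorem} that ``$\mathbb{I}$ belongs to the subcategory $\URiesz_{\Diamond}$ and it is its initial object,'' which is precisely your observation that an initial object lying in a full subcategory is initial there. Your write-up merely makes explicit the fullness of the inclusion $\URiesz_{\Diamond} \hookrightarrow \CRiesz^{u}_{\Diamond}$, which the paper leaves implicit.
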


\subsection{Initial object of $\AURiesz_{\Diamond}$}\label{sec:AURiesz}

In \cite{MFM2017} it was claimed (Theorem  VI.3  in \cite{MFM2017}) that $\mathbb{I}$ enjoys the Archimedean property and it is, therefore, the initial object in the category $\AURiesz_{\Diamond}$ of Archimedean unital modal Riesz spaces. The proof, however, contains a fatal mistake. At the present moment, we do not know if $\mathbb{I}$ is Archimedean or not. \\

\emph{Open Problem:} Does the modal Riesz space $\mathbb{I}$ satisfy the Archimedean property?\\

\noindent
However we are able to construct explicitly the initial object $\mathbb{I}_a$ of $\AURiesz_{\Diamond}$ in such a way that If $\mathbb{I}$ is Archimedean, then $\mathbb{I}=\mathbb{I}_a$ (as claimed in \cite{MFM2017}); otherwise, $\mathbb{I}\neq\mathbb{I}_a$.

The Archimedean modal Riesz space $\mathbb{I}_a$ is obtained by quotienting $\mathbb{I}$ by the congruence relation $\approx$ corresponding to the ideal  $\textnormal{Inf}_{\mathbb{I}}$ of infinitely small elements in $\mathbb{I}$: 

$$ [\phi]_\equiv \approx [\psi]_\equiv \Longleftrightarrow   \big( | [\phi]_\equiv - [\psi]_\equiv |\big) \in  \textnormal{Inf}_{\mathbb{I}}$$

Recall from Theorem \ref{unit_archimedean_thm3} that:

\begin{definition}
An element $a=[\phi]_\equiv$ in $\mathbb{I}$ is infinitely small if, for every $n\in\mathbb{N}$ it holds that $na\leq [1]_\equiv$, i.e., by definition of the equivalence relation $\equiv$, if $\vdash n \phi \leq 1$ is derivable by the axioms of modal Riesz spaces. We denote with $\textnormal{Inf}_{\mathbb{I}}$ the collection of  infinitely small elements of  $\mathbb{I}$.
\end{definition} 

The set $\textnormal{Inf}_{\mathbb{I}}$ is a Riesz--ideal of  $\mathbb{I}$ (see Proposition \ref{infinitesimals:are:ideals}) and thus closed under all Riesz operations. Furthermore, by Proposition \ref{modal-riesz-ideals}, $\textnormal{Inf}_{\mathbb{I}}$ is also closed under the $\Diamond$ operation. This implies that the quotient algebra $\mathbb{I}/_\approx$ is a well--defined modal Riesz space which we denote with  $\mathbb{I}_a$:
$$
\mathbb{I}_a = \mathbb{I}/_\approx.
$$
\noindent Note that if $\mathbb{I}$ is Archimedean (see open problem above), then the only infinitely small element is the zero--element ($\textnormal{Inf}_{\mathbb{I}} = \{  [0]_{\equiv} \}$) and therefore $\mathbb{I} = \mathbb{I}_a$.

\begin{theorem}
The modal Riesz space $\mathbb{I}_a$ is the initial object in $\AURiesz_{\Diamond}$. 
\end{theorem}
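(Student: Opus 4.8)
The plan is to reduce the initiality of $\mathbb{I}_a$ in $\AURiesz_{\Diamond}$ to the already-established initiality of $\mathbb{I}$ in $\URiesz_{\Diamond}$, exploiting that the quotient map $q\colon\mathbb{I}\rightarrow\mathbb{I}_a$ (sending $[\phi]_\equiv$ to $[\phi]_\approx$) is a surjective modal Riesz homomorphism whose kernel is exactly $\textnormal{Inf}_{\mathbb{I}}$. Two things must be shown: first, that $\mathbb{I}_a$ genuinely lies in $\AURiesz_{\Diamond}$, i.e.\ that it is unital and Archimedean; and second, that for every $(A,u,\Diamond)\in\AURiesz_{\Diamond}$ there is a unique morphism $\mathbb{I}_a\rightarrow A$.

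For the universal property, let $(A,u,\Diamond)\in\AURiesz_{\Diamond}$. Since $A$ is in particular a unital modal Riesz space, initiality of $\mathbb{I}$ in $\URiesz_{\Diamond}$ gives a unique homomorphism $!_A\colon\mathbb{I}\rightarrow A$. By Corollary \ref{hom:map:inf2inf} the image $!_A(\textnormal{Inf}_{\mathbb{I}})$ consists of infinitely small elements of $A$; as $A$ is Archimedean this forces $!_A(\textnormal{Inf}_{\mathbb{I}})=\{0\}$, so $\textnormal{Inf}_{\mathbb{I}}\subseteq\ker(!_A)$. Hence the congruence $\approx$ is contained in the kernel congruence of $!_A$, and the universal property of quotient algebras yields a unique modal Riesz homomorphism $\overline{!_A}\colon\mathbb{I}_a\rightarrow A$ with $\overline{!_A}\circ q=!_A$. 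For uniqueness, any morphism $g\colon\mathbb{I}_a\rightarrow A$ in $\AURiesz_{\Diamond}$ gives $g\circ q\colon\mathbb{I}\rightarrow A$, which by initiality of $\mathbb{I}$ in $\URiesz_{\Diamond}$ must equal $!_A$; since $q$ is surjective, $g$ is determined.

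It remains to check that $\mathbb{I}_a\in\AURiesz_{\Diamond}$. Unitality is immediate: $q([1]_\equiv)$ is the image of the strong unit $[1]_\equiv$ (Theorem \ref{strong_unit_theorem}) under the surjective monotone map $q$, hence is itself a strong unit of $\mathbb{I}_a$. The Archimedean property is the only genuine obstacle, and I expect it to be the crux of the argument: note that although it is left open whether $\mathbb{I}$ itself is Archimedean, the quotient by \emph{all} infinitely small elements is Archimedean regardless. I would isolate this as the general fact that, for any unital Riesz space $(B,u)$, the quotient $B/\textnormal{Inf}(B)$ is Archimedean, and then instantiate it at $B=\mathbb{I}$.

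Concretely, let $\bar x=q(x)$ be infinitely small in $\mathbb{I}_a$, taking a positive representative $x\geq 0$. By Theorem \ref{unit_archimedean_thm3} this means $n\bar x\leq q([1]_\equiv)$ for all $n$, which, using that the quotient order satisfies $\bar y\leq\bar z\iff (y-z)^+\in\textnormal{Inf}_{\mathbb{I}}$, translates to $(nx-[1]_\equiv)^+\in\textnormal{Inf}_{\mathbb{I}}$ for all $n$. Fixing $k$ and taking $n=2k$, the positive infinitely small element $d=(2kx-[1]_\equiv)^+$ satisfies $d\leq[1]_\equiv$, whence from $2kx-[1]_\equiv\leq d$ we obtain
\[
2kx \;\leq\; [1]_\equiv + d \;\leq\; 2[1]_\equiv,
\]
so $kx\leq[1]_\equiv$ for every $k$. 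Thus $x\in\textnormal{Inf}_{\mathbb{I}}$ and $\bar x=0$, showing $\mathbb{I}_a$ has no nonzero infinitely small element. This completes the verification that $\mathbb{I}_a\in\AURiesz_{\Diamond}$ and hence that it is the initial object of $\AURiesz_{\Diamond}$.
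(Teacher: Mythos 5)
Your proof is correct, and its core---the universal-property argument---is exactly the paper's: the unique morphism $!_A : \mathbb{I} \rightarrow A$ from Proposition \ref{initial_object_modalrieszspace} kills $\textnormal{Inf}_{\mathbb{I}}$ because unital homomorphisms send infinitesimals to infinitesimals (Corollary \ref{hom:map:inf2inf}) and $A$ is Archimedean, so it factors through the quotient map $q$, and uniqueness follows from surjectivity of $q$. Where you genuinely go beyond the paper is in verifying that $\mathbb{I}_a$ actually belongs to $\AURiesz_{\Diamond}$. The paper never proves this: it announces $\mathbb{I}_a$ as ``the Archimedean modal Riesz space'' when defining the quotient, and later relies on the Archimedean property (e.g., inside the proof of Proposition \ref{proposition-logic-archimedean}) without argument. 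Your lemma---that for any Riesz space with strong unit the quotient by the ideal of infinitesimals is Archimedean---is precisely the missing justification, and your proof of it is sound: writing $u$ for $[1]_{\equiv}$, from $n\bar{x} \leq q(u)$ for all $n$ you get $(2kx - u)^{+} \in \textnormal{Inf}_{\mathbb{I}}$ via the description of the quotient order, hence $(2kx-u)^{+} \leq u$ by Theorem \ref{unit_archimedean_thm3}, hence $2kx \leq 2u$, hence $kx \leq u$ for every $k$, so $x \in \textnormal{Inf}_{\mathbb{I}}$ and $\bar{x} = 0$. The only point to phrase carefully is the ``positive representative'': strictly, one reduces to $|\bar{x}| = q(|x|)$, using that $\bar{x} = 0$ iff $|\bar{x}| = 0$ and that $|\bar x|$ is infinitely small whenever $\bar x$ is; this is harmless. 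In short: same route as the paper for initiality, plus a correct, self-contained proof of the Archimedean-ness of $\mathbb{I}_a$ that the paper leaves implicit---a worthwhile addition, since without it the theorem statement itself (membership of $\mathbb{I}_a$ in $\AURiesz_{\Diamond}$) is unjustified.
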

\begin{proof}
For each modal Riesz space $(A,u,\Diamond)$, we have the unique modal Riesz map $!_A : \mathbb{I} \rightarrow A$ defined inductively in Proposition \ref{initial_object_modalrieszspace}. 

We have the map $[\blank] : \mathbb{I} \rightarrow \mathbb{I}_a$ taking each element to its equivalence class modulo difference by an infinitesimal: 
$
[a] \mapsto [a]/_{\approx}
$.

Given an Archimedean modal Riesz space $(A,u,\Diamond)$, and an infinitesimal element $a \in \mathbb{I}$, we have $!_A(a) = 0$ because infinitesimals map to infinitesimals under unital Riesz homomorphisms (Proposition \ref{hom:map:inf2inf}) and the only infinitesimal in $A$ is $0$, because $A$ is Archimedean by assumption. Therefore the map $!_A : \mathbb{I} \rightarrow A$ factorizes as $!'_A \circ [\blank]$ where $!'_A : \mathbb{I}_a \rightarrow A$, \emph{i.e.} $!_A'$ is well-defined on $\approx$-equivalence classes. 

To show that $!'_A$ is the unique modal Riesz homomorphism from $\mathbb{I}_a$ to $A$, assume the existence of another $f : \mathbb{I}_a \rightarrow A$. Then $f \circ [\blank] : \mathbb{I} \rightarrow A = !_A$. As $[\blank]$ is surjective, this implies $f = !'_A$, proving that $\mathbb{I}_a$ is initial.
\end{proof}

We now give another characterization of $\mathbb{I}_a$ which, being completely syntactic, is sometimes more convenient to work with.

We define the equivalence relation $\equiv_a\ \subseteq \Formulas \times\Formulas$ as: $\phi\equiv_a \psi$ if and only if $\phi$ and $\psi$ are provably equal from the axioms of modal Riesz spaces and the Archimedean infinitary rule $\mathbb{A}$ (see Definition \ref{archimedean:def}):

$$
\phi\equiv_a \psi\ \ \   \Longleftrightarrow\ \ \   (\textnormal{Axioms of }\CRiesz^{u}_{\Diamond}) + \mathbb{A} \vdash \phi=\psi
$$
\noindent
The collection of equivalence classes of $\equiv_a$ 
$$\Formulas/_{\equiv_a}=\{ [\phi]_{\equiv_a} \mid \textnormal{ $\phi$ is a Riesz modal logic formula}\}$$
is endowed with the structure of a modal Riesz space in the same way used for defining $\mathbb{I}$.

\begin{proposition}\label{proposition-logic-archimedean}
The two modal Riesz spaces $\mathbb{I}_a$ and  $\Formulas/_{\equiv_a}$ are isomorphic.
\end{proposition}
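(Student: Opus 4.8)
The plan is to realize both $\mathbb{I}_a$ and $\Formulas/_{\equiv_a}$ as quotients of the term algebra $\Formulas$ and to prove that their two kernel congruences coincide as relations on $\Formulas$; since in both cases the modal Riesz space operations are the ones induced representative-wise from $\Formulas$, equality of the congruences yields that the two quotient algebras are (canonically) the same, hence isomorphic. First I would write $\sim_{\mathbb{I}_a}$ for the kernel of the canonical surjection $q\colon \Formulas \to \mathbb{I}_a$, $\phi\mapsto [[\phi]_\equiv]_\approx$. Because a composite of quotient maps is again a quotient map, $\mathbb{I}_a \cong \Formulas/_{\sim_{\mathbb{I}_a}}$, so it suffices to show $\sim_{\mathbb{I}_a}\, =\, \equiv_a$. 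Unfolding the definitions, $\phi \sim_{\mathbb{I}_a} \psi$ holds iff $[\phi]_\equiv \approx [\psi]_\equiv$, i.e. iff $|[\phi]_\equiv - [\psi]_\equiv| = [\,|\phi-\psi|\,]_\equiv \in \textnormal{Inf}_{\mathbb{I}}$, which by the unital description of infinitesimals means exactly that for every $n\in\mathbb{N}$ the inequality $n|\phi-\psi|\leq 1$ is derivable from the axioms of $\CRiesz^{u}_{\Diamond}$.

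For the inclusion $\equiv_a\,\subseteq\,\sim_{\mathbb{I}_a}$ I would argue by induction on $\equiv_a$-derivations, using that $\equiv_a$ is by construction the least congruence on $\Formulas$ that contains every ground instance of the axioms of $\CRiesz^{u}_{\Diamond}$ and is closed under the Archimedean rule $\mathbb{A}$. Now $\sim_{\mathbb{I}_a}$ is a congruence (being the kernel of a homomorphism), it contains all ground axiom-instances since $\mathbb{I}_a$ satisfies those axioms, and it is closed under $\mathbb{A}$: if all premises $n|\chi|\leq |\theta|$ of an instance of $\mathbb{A}$ lie in $\sim_{\mathbb{I}_a}$, then $q(\chi)$ is an infinitely small element of $\mathbb{I}_a$, and as $\mathbb{I}_a$ is Archimedean (it is an object of $\AURiesz_{\Diamond}$ by the preceding theorem) we get $q(\chi)=0$, i.e. the conclusion $\chi = 0$ holds modulo $\sim_{\mathbb{I}_a}$. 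Minimality of $\equiv_a$ then forces $\equiv_a\,\subseteq\,\sim_{\mathbb{I}_a}$. For the reverse inclusion, suppose $\phi\sim_{\mathbb{I}_a}\psi$, so all the inequalities $n|\phi-\psi|\leq 1$ are derivable in $\CRiesz^{u}_{\Diamond}$ and hence a fortiori in $\CRiesz^{u}_{\Diamond}+\mathbb{A}$; a single application of $\mathbb{A}$ (with $a:=\phi-\psi$ and $b:=1$) derives $\phi-\psi=0$, that is $\phi\equiv_a\psi$. Combining the two inclusions gives $\sim_{\mathbb{I}_a}\,=\,\equiv_a$ and therefore $\mathbb{I}_a\cong \Formulas/_{\equiv_a}$.

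The hard part will be the inclusion $\equiv_a\,\subseteq\,\sim_{\mathbb{I}_a}$, and within it the closure of $\sim_{\mathbb{I}_a}$ under the infinitary rule $\mathbb{A}$. The genuine content here is that an $\equiv_a$-derivation may invoke $\mathbb{A}$ repeatedly and in nested fashion, so it is not evident a priori that $\phi\equiv_a\psi$ collapses to the single infinitesimality condition defining $\sim_{\mathbb{I}_a}$; the induction through the Archimedean model $\mathbb{I}_a$ handles all such derivations uniformly. The one external ingredient this relies on is that $\mathbb{I}_a$ is Archimedean, i.e. that quotienting $\mathbb{I}$ by its ideal $\textnormal{Inf}_{\mathbb{I}}$ of infinitesimals yields an Archimedean space; this is precisely what places $\mathbb{I}_a$ in $\AURiesz_{\Diamond}$ and is already available from the preceding theorem, so it may be invoked directly.
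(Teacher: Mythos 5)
Your proof is correct and takes essentially the same route as the paper's: your easy inclusion ($\sim_{\mathbb{I}_a}\,\subseteq\,\equiv_a$ via a single application of $\mathbb{A}$ with $x:=\phi-\psi$, $y:=1$) is exactly the paper's first direction, and your least-congruence argument for $\equiv_a\,\subseteq\,\sim_{\mathbb{I}_a}$ is the same argument as the paper's induction on derivations, resting on the same key external fact that $\mathbb{I}_a$ is Archimedean. If anything, phrasing the hard direction as ``the kernel congruence contains the axioms and is closed under all rules, including $\mathbb{A}$, hence contains the least such congruence'' handles nested applications of the infinitary rule more transparently than the paper's proof, which only spells out the case where the derivation ends with an application of $\mathbb{A}$.
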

\begin{proof}
The isomorphism maps the element $[\phi]_{\equiv_a}$ of $\Formulas/_{\equiv_a}$ to the element $\big[ [\phi]_\equiv \big ]_{\approx}$ of  $\mathbb{I}_a$.

We need to show that this map preserves equivalence classes, i.e., that $\phi \equiv_a \psi$ holds if and only if $[\phi]_\equiv \approx [\psi]_\equiv$.

So, first assume that $[\phi]_\equiv \approx [\psi]_\equiv$. This means, by definition, that $  | [\phi]_\equiv  - [\psi]_\equiv | \in  \textnormal{Inf}_{\mathbb{I}}$. This in turn means that:
$$
\textnormal{For all $n\in\mathbb{N}$ it holds that }  (\textnormal{Axioms of }\CRiesz^{u}_{\Diamond}) \vdash  n| \phi - \psi | \leq 1.
$$
Now applying the Archimedean axiom on these premises, we obtain a proof of $| \phi - \psi |=0$, i.e., $(\textnormal{Axioms of }\CRiesz^{u}_{\Diamond}) + \mathbb{A}  \vdash | \phi - \psi | = 0$ and, from this, it is easy (see Theorem III.3.2(e) of \cite{vulikh}) to obtain a proof of $\phi = \psi$. Thus, by definition, we have that $\phi \equiv_a \psi$ as desired.

Now, for the other direction,  assume that $\phi \equiv_a \psi$, i.e., that $(\textnormal{Axioms of }\CRiesz^{u}_{\Diamond}) + \mathbb{A} \vdash \phi = \psi$. We reason, by induction on the structure of the proof, on the applications of the Archimedean rule. For the base case, if the proof does not use the Archimedean rule then we in fact have $\phi\equiv \psi$ and therefore $[\phi]_\equiv \approx [\psi]_\equiv$. Now assume instead that the proof is concluded by application of the Archimedean rule as follows:
\begin{center}
$\infer[\mathbb{A}]
                        { |\phi_1| = 0 }            {  |\phi_1 |\leq |\gamma| \ \ \ \  2|\phi_1|\leq  |\gamma|\ \ \ \  3|\phi_1|\leq  |\gamma| \ \ \ \ \dots \ \ \ \  n|\phi_1|\leq  |\gamma| \ \ \ \ \dots  }$
\end{center}                        
for some formulas $\gamma$ (in this case $\phi=|\phi_1|$ and $\psi=0$). Then, by induction, we have that for each $n\in\mathbb{N}$ the inequality  $n|\phi_1| \leq  |\gamma|$ holds in $\mathbb{I}_a$, i.e..
$
\big[ n[\phi_1 ]_\equiv \big]_{\approx} \leq \big[  [\gamma ]_\equiv \big]_{\approx} 
$. Since $\mathbb{I}_a$ is Archimedean, it then follows that $\big[ [\phi_1 ]_\equiv \big]_{\approx}  = \big[ [0]_\equiv\big]_\approx$, as desired.
\end{proof}

\subsection{Initial object of $\CAURiesz_{\Diamond}$}

We are now finally ready to construct the initial object of the category  $\CAURiesz_{\Diamond}$ of Archimedean unital and complete modal Riesz spaces. This is the uniform completion (see Definition \ref{uniform_completion}) $C^{\Diamond}(\Spec^{\Diamond}((\mathbb{I}_a)))$ of $\mathbb{I}_a$, which we denote by $\hat{\mathbb{I}}_a$ for convenience. 

\begin{proposition}\label{initiality}
The uniformly complete modal Riesz space $\hat{\mathbb{I}}_a$ is the the initial object in the category $\CAURiesz_{\Diamond}$.
\end{proposition}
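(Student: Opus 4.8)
The plan is to realise $\hat{\mathbb{I}}_a$ as the reflection of the initial object $\mathbb{I}_a$ along the uniform completion and to verify the universal property of an initial object directly. Fix an arbitrary object $(A,u_A,\Diamond)$ of $\CAURiesz_{\Diamond}$. Since $\CAURiesz_{\Diamond}$ is a \emph{full} subcategory of $\AURiesz_{\Diamond}$, it suffices to produce a \emph{unique} modal Riesz homomorphism $\hat{\mathbb{I}}_a \to A$. By the initiality of $\mathbb{I}_a$ in $\AURiesz_{\Diamond}$ (proved just above), there is a unique modal homomorphism $g : \mathbb{I}_a \to A$. Recall that $\hat{\mathbb{I}}_a = C^\Diamond(\Spec^\Diamond(\mathbb{I}_a))$ is itself an object of $\CAURiesz_{\Diamond}$, and that by Definition \ref{uniform_completion} the counit $\epsilon_{\mathbb{I}_a}$ embeds $\mathbb{I}_a$ as a norm-dense sub-modal-Riesz space of $\hat{\mathbb{I}}_a$; that this embedding preserves $\Diamond$ is exactly part of Theorem \ref{main_theorem_paper}. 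The whole argument then hinges on extending $g$ along this dense embedding.

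First I would isolate two contraction estimates. Every unital Riesz homomorphism is norm-nonexpansive: if $|a|\le r\cdot 1$ then $|g(a)| = g(|a|) \le g(r\cdot 1) = r\,u_A$, so $\|g(a)\|\le\|a\|$ by Theorem \ref{normed_space}, whence $g$ is uniformly continuous. Likewise the modal operator of any modal Riesz space is norm-nonexpansive: using Proposition \ref{useful_equality} together with positivity and $u$-decreasingness, $|a|\le r u$ gives $|\Diamond a| \le \Diamond|a| \le \Diamond(ru) = r\,\Diamond u \le r u$, so $\|\Diamond a\|\le\|a\|$. Since $A$ is uniformly complete it is complete as a metric space (cf.\ the characterisation of uniform convergence via the norm around Theorem \ref{normed_space}), so the uniformly continuous map $g$, viewed on the dense subspace $\epsilon_{\mathbb{I}_a}(\mathbb{I}_a)$, extends uniquely to a continuous map $\bar g : \hat{\mathbb{I}}_a \to A$.

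Next I would check that $\bar g$ is a morphism of $\CAURiesz_{\Diamond}$, i.e.\ that it preserves all operations. The operations $+,\ \sqcup,\ \sqcap,\ r(\blank)$ are the continuous vector-lattice operations, and $\Diamond$ is continuous by the contraction estimate above; hence for each operation the two continuous maps expressing ``$\bar g$ commutes with it'' agree on the dense subspace $\epsilon_{\mathbb{I}_a}(\mathbb{I}_a)$, where they reduce to the corresponding identity for the homomorphism $g$, and therefore agree on all of $\hat{\mathbb{I}}_a$. In particular $\bar g(\Diamond x) = \Diamond(\bar g(x))$, both sides being continuous in $x$ and coinciding on $\mathbb{I}_a$. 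For uniqueness, any morphism $h:\hat{\mathbb{I}}_a \to A$ of $\CAURiesz_{\Diamond}$ restricts along the embedding to a modal homomorphism $\mathbb{I}_a \to A$, which must be $g$ by initiality of $\mathbb{I}_a$; as $h$ is continuous and agrees with $\bar g$ on the dense subspace, we get $h=\bar g$. Thus $\hat{\mathbb{I}}_a$ is initial.

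The one genuinely delicate point is the passage from $g$ to the extension $\bar g$, which is why I would prove the two nonexpansiveness estimates before anything else: they are precisely what lets density do the work, both for existence (extend $g$, then transport each algebraic identity from the dense subspace) and for uniqueness. Conceptually, the same content can be summarised as: uniform completion $A \mapsto C^\Diamond(\Spec^\Diamond(A))$ (with unit $\epsilon_A$, cf.\ Proposition \ref{DensenessCor}) is left adjoint to the inclusion $\CAURiesz_{\Diamond}\hookrightarrow\AURiesz_{\Diamond}$, and a left adjoint preserves initial objects; the explicit verification above is merely this adjunction written out for the single object $\mathbb{I}_a$.
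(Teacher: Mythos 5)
Your proof is correct, but it takes a genuinely different route from the paper's. The paper argues entirely categorically on the dual side: since $\Spec^\Diamond$ is a right adjoint (Theorem \ref{main_theorem_paper}) it preserves terminal objects, so the initial object $\mathbb{I}_a$ of $\AURieszD$ (terminal in $\AURieszD\op$) is sent to the final coalgebra $\Spec^\Diamond(\mathbb{I}_a)$ of $\Markov$; then, because $C^\Diamond$ restricts to an equivalence $\Markov \simeq \CAURieszD\op$, the object $C^\Diamond(\Spec^\Diamond(\mathbb{I}_a)) = \hat{\mathbb{I}}_a$ is initial in $\CAURieszD$. You instead stay on the algebraic side and verify the universal property by hand: the unique morphism $g : \mathbb{I}_a \rightarrow A$ given by initiality in $\AURieszD$ is norm-nonexpansive (your two estimates, the first of which the paper itself uses implicitly in Lemma \ref{SurjectiveUniformlyCompleteLemma}, the second resting on Proposition \ref{useful_equality}), hence extends uniquely along the dense isometric modal embedding $\epsilon_{\mathbb{I}_a} : \mathbb{I}_a \hookrightarrow \hat{\mathbb{I}}_a$ to a morphism $\bar{g}$, with preservation of the operations and uniqueness both transported across density. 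The ingredients you need from the duality machinery are thus only that the counit is a dense isometric embedding preserving $\Diamond$ (Theorem \ref{main_theorem_paper} together with Proposition \ref{DensenessCor}) and the metric characterization of uniform completeness near Theorem \ref{normed_space}; you never pass through $\Markov$ at all. What each approach buys: the paper's argument is a few lines given Theorem \ref{main_theorem_paper} and yields as a byproduct the identification of $\Spec^\Diamond(\mathbb{I}_a)$ as the final coalgebra, which is exactly what Section \ref{final:section} exploits; yours is more elementary and self-contained, and in effect establishes the stronger, reusable statement that uniform completion is a reflection of $\AURieszD$ onto $\CAURieszD$ (left adjoint to the inclusion), so it would prove initiality of the completion of \emph{any} initial object, independently of the coalgebraic duality. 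The only points you leave tacit are the injectivity of $\epsilon_{\mathbb{I}_a}$ (which holds because $\mathbb{I}_a$ is Archimedean, so the counit is an isometric isomorphism onto its image) and the norm-continuity of the lattice operations used in the density argument; both are standard and do not constitute gaps.
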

\begin{proof}
The functor $\Spec^\Diamond$, being a right adjoint, preserves limits, so it preserves terminal objects. 
Since $\mathbb{I}_a$ is initial in $\AURiesz_{\Diamond}$, this means that $\Spec^\Diamond(\mathbb{I}_a)$ is the final coalgebra of $\Markov$. Since, restricted  $\CAURieszD\op$ the functor $C^\Diamond$ is an equivalence of categories, it does preserve terminal objects. Therefore $ C^\Diamond(\Spec^\Diamond(\mathbb{I}_a))$ is the initial object of $\CAURieszD$ (terminal object of $\CAURieszD\op$). The counit map $\epsilon^\Diamond_{\mathbb{I}_a}$ is such that $C^\Diamond(\Spec^\Diamond(\mathbb{I}_a))$ is isomorphic to the completion of $\mathbb{I}_a$.
\end{proof}

From Proposition \ref{DensenessCor} we get that the two  modal Riesz spaces $\mathbb{I}_a$ and $\hat{\mathbb{I}}_a$ are related by the following fact. 
\begin{proposition}\label{link_proposition}
The modal Riesz space $\mathbb{I}_a$ embeds as a dense subalgebra of $\hat{\mathbb{I}}_a$ and the spectrum $\Spec(\mathbb{I}_a)$ is homeomorphic to the spectrum $\Spec(\hat{\mathbb{I}}_a)$. In particular, there is a one-to-one correspondence between maximal ideals in $\mathbb{I}_a$ and $\hat{\mathbb{I}}_a$. 
\end{proposition}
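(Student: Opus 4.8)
The plan is to obtain the statement as a direct corollary of the identification $\hat{\mathbb{I}}_a = C^\Diamond(\Spec^\Diamond(\mathbb{I}_a))$ from Proposition \ref{initiality}, combined with the general denseness facts of Proposition \ref{DensenessCor}. First I would observe that $\mathbb{I}_a$ is an object of $\AURiesz_{\Diamond}$, so in particular its underlying unital Riesz space lies in $\AURiesz$. By Definition \ref{uniform_completion}, the underlying unital Riesz space of $\hat{\mathbb{I}}_a$, namely $C(\Spec(\mathbb{I}_a))$, is precisely the uniform completion $\widehat{\mathbb{I}_a}$ of that Riesz space, and the counit $\epsilon_{\mathbb{I}_a}$ realizes $\mathbb{I}_a$ as a norm-dense sub-Riesz space, which moreover is an isometric isomorphism onto its image by Theorem \ref{YosidaAdjunctionTheorem}.

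The only point requiring care is that this Yosida embedding is an embedding of \emph{modal} Riesz spaces, \ie\ that it commutes with $\Diamond$; otherwise we would only have a dense sub-Riesz space, not a dense sub\emph{algebra} in $\CAURiesz_{\Diamond}$. This, however, is exactly what was verified inside the proof of Theorem \ref{main_theorem_paper}: there it is shown that the map $\hat{\blank}$ (which is $\epsilon^\Diamond_{\mathbb{I}_a}$) satisfies $\widehat{\Diamond(a)} = \Diamond_{\alpha_\Diamond}(\hat{a})$ for all $a$, so that $\hat{\blank}$ is a modal Riesz homomorphism. Combining this with the isometric-isomorphism-onto-dense-image property above yields the first assertion: $\mathbb{I}_a$ embeds as a dense subalgebra of $\hat{\mathbb{I}}_a$, the modal structure $\Diamond_{\alpha_\Diamond}$ of $\hat{\mathbb{I}}_a$ restricting to the original $\Diamond$ on the image of $\mathbb{I}_a$.

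For the second assertion I would apply the first part of Proposition \ref{DensenessCor} with $A = \mathbb{I}_a$: since the underlying unital Riesz space of $\hat{\mathbb{I}}_a$ is $\widehat{\mathbb{I}_a}$, that proposition provides a homeomorphism $\Spec(\mathbb{I}_a) \cong \Spec(\widehat{\mathbb{I}_a}) = \Spec(\hat{\mathbb{I}}_a)$. Finally, the ``in particular'' clause is immediate from the construction of $\Spec$ together with Theorem \ref{MaxIdealTheorem}: the points of the spectrum of a unital Riesz space are precisely its maximal ideals, so the homeomorphism of spectra restricts on underlying point sets to a bijection, \ie\ a one-to-one correspondence between the maximal ideals of $\mathbb{I}_a$ and those of $\hat{\mathbb{I}}_a$.

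The main (and essentially only) obstacle is the verification that the Yosida embedding respects the modal operator $\Diamond$; but since this computation has already been carried out inside the proof of Theorem \ref{main_theorem_paper}, the present proposition reduces to an assembly of facts already established, and no genuinely new argument or calculation is needed.
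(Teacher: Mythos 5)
Your proof is correct and follows essentially the same route as the paper, which presents this proposition as an immediate consequence of Proposition \ref{DensenessCor} together with the definition of the uniform completion $\hat{\mathbb{I}}_a = C^\Diamond(\Spec^\Diamond(\mathbb{I}_a))$. Your explicit verification that the Yosida embedding $\epsilon_{\mathbb{I}_a}$ also preserves the modal operator $\Diamond$ (by citing the computation $\widehat{\Diamond(a)} = \Diamond_{\alpha_\Diamond}(\hat{a})$ from the proof of Theorem \ref{main_theorem_paper}) makes explicit a point the paper leaves implicit, since Proposition \ref{DensenessCor} by itself only concerns the underlying unital Riesz spaces.
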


One may wonder if this universal completion step is really necessary. Indeed, if $\mathbb{I}_a$ is uniformly complete then $\mathbb{I}_a=\hat{\mathbb{I}}_a$. In the rest of this section we show that the completion is indeed necessary because $\mathbb{I}_a$  is not uniformly complete (Theorem \ref{uniform-incompleteness:robert}) and, therefore, does not belong to $\CAURiesz_{\Diamond}$.

\subsubsection{Uniform Incompleteness Proof}\label{subsection:uniform:incompleteness}
We use Example \ref{example:logic:3} here. Specifically, we apply Proposition \ref{properties_modality} to it, obtaining a modal Riesz space $(C([0,1]), \one_{[0,1]}, \Diamond_\alpha)$. We can see that, for each $a \in C([0,1])$,
\[
\Diamond_\alpha(a)(x) = \int_X a \diff \alpha(x) = \int_X a \cdot x \diff \delta_x = x \cdot a(x),
\]
which is to say, if we write $x$ for the identity function rather than a variable, $\Diamond_\alpha(a) = x a$. It is convenient to write it like this because we will be reasoning about the action of $\Diamond_\alpha$ in terms of polynomials, in order to characterize the image of $!_{C([0,1])}$. 

We first need a lemma about the action of $\sqcup$ and $\sqcap$ on polynomials.
\begin{lemma}
\label{PiecewisePolyMeetLemma}
Let $[r,s]$ be a closed interval in $\mathbb{R}$ (so $s \geq r$), and let $a,b : [r,s] \rightarrow \mathbb{R}$ be polynomial functions. There exists a strictly increasing finite (possibly empty) sequence $(x_i)_{i=1}^n$ of numbers in $[r,s]$ such that, defining $x_0 = r$ and $x_{n+1} = s$, $a \sqcup b$ is equal to a polynomial on the interval $[x_i,x_{x+1}]$ for all $i \in \{0, \ldots, n \}$. 
\end{lemma}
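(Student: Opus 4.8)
The plan is to reduce the statement to an analysis of the sign of the difference polynomial $p = a - b$. Since $a$ and $b$ are continuous, the join $a \sqcup b$ is the pointwise maximum $\max(a,b)$, and for each $x \in [r,s]$ we have $(a \sqcup b)(x) = a(x)$ whenever $p(x) \geq 0$ and $(a \sqcup b)(x) = b(x)$ whenever $p(x) \leq 0$. Thus the only points at which the ``active'' polynomial can switch are the zeros of $p$, and the essential quantitative input is simply that a nonzero real polynomial has only finitely many roots.

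First I would dispose of the degenerate case: if $p$ is the zero polynomial, then $a = b$ on $[r,s]$, so $a \sqcup b = a$ is a polynomial on the whole interval and the empty sequence ($n = 0$, with $x_0 = r$ and $x_1 = s$) witnesses the claim. Otherwise $p$ is a nonzero polynomial and hence has finitely many real roots; let $x_1 < \cdots < x_n$ enumerate those lying in the open interval $(r,s)$, and set $x_0 = r$, $x_{n+1} = s$. On each open subinterval $(x_i, x_{i+1})$ the polynomial $p$ has no zeros, so by the intermediate value theorem it has constant sign there; accordingly $a \sqcup b$ coincides with the polynomial $a$ throughout that subinterval when $p > 0$, and with $b$ when $p < 0$.

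It then remains to verify that this identification extends from the open subinterval to the closed subinterval $[x_i, x_{i+1}]$, and this boundary bookkeeping is the one point requiring care (rather than any genuine difficulty). At an interior partition point $x_i$ with $1 \le i \le n$ we have $p(x_i) = 0$, hence $a(x_i) = b(x_i)$, so the value of $a \sqcup b$ there agrees with either polynomial and in particular with the one chosen on the adjacent subinterval. At the outer endpoints $r$ and $s$, continuity of $p$ together with the absence of roots in the relevant half-open subinterval forces $p$ to retain its sign up to the endpoint (and if the endpoint is itself a root, the two polynomials again agree there), so $a \sqcup b$ still coincides with the selected polynomial. Hence $a \sqcup b$ equals a single polynomial on each $[x_i, x_{i+1}]$, which is exactly the required piecewise-polynomial description with finitely many pieces.
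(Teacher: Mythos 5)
Your proof is correct and follows essentially the same route as the paper's: both reduce to the sign of the difference polynomial $a-b$, split off the case where it vanishes identically, use finiteness of the roots of a nonzero polynomial to build the partition, and apply the intermediate value theorem to get a constant sign (hence a single active polynomial) on each subinterval. The only difference is cosmetic bookkeeping — you enumerate roots in the open interval $(r,s)$ and treat the endpoints separately, whereas the paper lists all roots in $[r,s]$ and absorbs endpoint roots into (possibly degenerate) subintervals.
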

\begin{proof}
If $a = b$, then $a \sqcup b = a$, so we take $n =0$ and the sequence to be empty, and we are finished. So for the rest of the proof we assume that $a \neq b$. Therefore $a - b \neq 0$, so it has finitely many real roots in $[r,s]$, which we form into a strictly increasing sequence (without multiplicity) $(x_i)_{i=1}^n$. By the intermediate value theorem, if $(a-b)(x) > 0$ for one $x \in [x_i,x_{i+1}]$, then $(a-b)(x) \geq 0$ for all $x \in [x_i,x_{x+1}]$, otherwise there would be a root in between. Therefore $a(x) \geq b(x)$ and therefore $(a \sqcup b)(x) = a(x)$ for all $x \in [x_i,x_{i+1}]$. If $(a-b)(x) < 0$, then we have $(a \sqcup b)(x) = b(x)$ for all $x \in [x_i,x_{i+1}]$. It cannot happen that $(a - b)(x) = 0$ for $x \in (x_i,x_{i+1})$, so we are finished. 
\end{proof}

Using this, we can characterize the image of $!_{C([0,1])} : \mathbb{I} \rightarrow C([0,1])$, which, as $[\blank] : \mathbb{I} \rightarrow \mathbb{I}_a$ is surjective, characterizes the image of $!'_{C([0,1])} : \mathbb{I}_a \rightarrow C([0,1])$ too. For short, we will write $!$ and $!'$ for these maps.

\begin{lemma}
\label{PiecePolyLemma}
The image of the map $!_{C([0,1])} : \mathbb{I} \rightarrow C([0,1])$ consists of functions $a \in C([0,1])$ such that there exists a finite (possibly empty) strictly increasing sequence of numbers $(x_i)_{i=1}^n$ in $[0,1]$, such that if we take $x_0 = 0$ and $x_{n+1} = 1$, then for all $i \in \{0,\ldots,n\}$, $a$ agrees with a polynomial on $[x_i,x_{i+1}]$. We call such functions \emph{piecewise polynomial}.
\end{lemma}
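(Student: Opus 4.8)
The plan is to prove the two inclusions separately, exploiting the fact that since $\mathbb{I}$ is the initial modal Riesz space (Proposition \ref{initial_object_modalrieszspace}), the image $!(\mathbb{I})$ is exactly the smallest sub-modal-Riesz-space of $C([0,1])$, \ie\ the closure of $\{0,\one_{[0,1]}\}$ under $+$, scalar multiplication, $\sqcup$, $\sqcap$ and $\Diamond_\alpha$. Write $P$ for the set of continuous piecewise polynomial functions.

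For the inclusion $!(\mathbb{I}) \subseteq P$, I would check that $P$ is itself such a sub-algebra. It contains the constants $0$ and $\one_{[0,1]}$, and is closed under addition and scalar multiplication (refine to a common partition and operate piecewise) and under $\Diamond_\alpha$, since by Example \ref{example:logic:3} we have $\Diamond_\alpha(a) = x\,a$ and multiplying a polynomial by $x$ again yields a polynomial. Closure under $\sqcup$ follows from Lemma \ref{PiecewisePolyMeetLemma} applied on each interval of a common refinement, and closure under $\sqcap$ then follows via $a \sqcap b = -\big((-a)\sqcup(-b)\big)$. Minimality of the image gives $!(\mathbb{I}) \subseteq P$.

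For the reverse inclusion I would first note that every polynomial lies in the image: $\one_{[0,1]}$ is the image of $1$, so $x = \Diamond_\alpha(\one_{[0,1]})$ is in the image, hence so is each monomial $x^n = \Diamond_\alpha^{\,n}(\one_{[0,1]})$, and, by linear combination, every polynomial. Given an arbitrary $a \in P$ with breakpoints $x_1 < \cdots < x_n$ and polynomial pieces $p_0,\ldots,p_n$, I would use the telescoping identity $a = p_0 + \sum_{i=1}^n h_i$, where $h_i$ equals $0$ on $[0,x_i]$ and $d_i := p_i - p_{i-1}$ on $[x_i,1]$; continuity of $a$ forces $d_i(x_i)=0$, so each $h_i$ is a genuine continuous function, and it suffices to place each single-breakpoint function $h_i$ in the image.

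The crux is this last step, and I expect it to be the main obstacle, since a single $\sqcup$ or $\sqcap$ does \emph{not} suffice when the neighbouring pieces $p_{i-1},p_i$ happen to cross away from $x_i$. The idea is to \emph{mask} $d_i$ with a ramp. Writing $R_i = (x - x_i)\sqcup 0$ for the ramp based at $x_i$ (which is in the image) and factoring $d_i = (x-x_i)\,e_i$ with $e_i$ a polynomial, I choose $K_i \geq \max_{[x_i,1]} |e_i|$, so that $|d_i| \leq K_i R_i$ on $[x_i,1]$ while $R_i \equiv 0$ on $[0,x_i]$. Then
\[
h_i = \big(d_i \sqcap (K_i R_i)\big) \sqcup (-K_i R_i),
\]
because on $[0,x_i]$ the right-hand side is $(d_i \sqcap 0)\sqcup 0 = 0$, whereas on $[x_i,1]$ clamping $d_i$ between $-K_i R_i$ and $K_i R_i$ returns $d_i$ exactly, as $|d_i| \leq K_i R_i$ there. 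Since $d_i$, $R_i$ and the scalars all lie in the image, so does each $h_i$, and therefore so does $a$. The telescoping and the ``all polynomials are in the image'' observations are routine; the only real insight needed is this two-sided clamp against $\pm K_i R_i$, which repairs precisely the crossings that defeat a naive single lattice operation.
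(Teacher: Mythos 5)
Your proof is correct, and it establishes strictly more than the paper's own proof does. On the containment $!(\mathbb{I}) \subseteq P$ the two arguments coincide in substance: the paper proceeds by structural induction on the elements of $\mathbb{I}$ (base cases $0$, $1$; closure under scalar multiplication, $+$, $\sqcup$ via Lemma \ref{PiecewisePolyMeetLemma}, and $\Diamond$ via $\Diamond_\alpha(a)=xa$), which is exactly your verification that the continuous piecewise polynomial functions form a sub-modal-Riesz-space of $(C([0,1]),\one_{[0,1]},\Diamond_\alpha)$ containing the generators, packaged through initiality rather than term induction; your reduction of $\sqcap$ to $\sqcup$ by $a \sqcap b = -((-a)\sqcup(-b))$ even covers a case the paper's induction silently omits. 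The genuine difference is the reverse containment: the paper never proves that every continuous piecewise polynomial lies in the image, even though the phrase ``consists of'' in the statement promises an equality. Its induction yields only the inclusion of the image into the piecewise polynomials, which is the direction actually used in Theorem \ref{uniform-incompleteness:robert} (to see that $x \mapsto e^x$ is not in the image); for the density half of that theorem the paper only needs the image to contain the constants and $x = \ !(\Diamond 1)$, not all of $P$. Your telescoping decomposition $a = p_0 + \sum_i h_i$ with $d_i = p_i - p_{i-1} = (x-x_i)e_i$ (the factorization being available precisely because continuity forces $d_i(x_i)=0$), together with the two-sided clamp $h_i = (d_i \sqcap K_i R_i)\sqcup(-K_i R_i)$ against the ramp $R_i = (x-x_i)\sqcup 0$, is a correct, self-contained proof of this missing direction: on $[0,x_i]$ the ramp vanishes and the clamp collapses to $0$, while on $[x_i,1]$ the bound $|d_i| \leq K_i R_i$ makes the clamp act as the identity on $d_i$. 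So your approach delivers the literal statement of the lemma --- an exact description of the image --- whereas the paper's proof, read strictly, delivers only the one inclusion that its application requires.
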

\begin{proof}
The proof is by induction on the structure of the elements of $\mathbb{I}$. 
\begin{itemize}
\item $0$ and $1$: We have $!(0) = 0$ and $!(1) = 1$, which are both polynomials on $[0,1]$. 
\item Scalar multiplication: Let $r \in \mathbb{R}$ be a scalar, $a \in \mathbb{I}$ an element such that there exists a finite strictly increasing sequence $(x_i)_{i=1}^n$ such that $!(a)|_{[x_i,x_{i+1}]}$ is a polynomial for all $i \in \{0,\ldots n\}$. Then
\[
!(ra)|_{[x_i,x_{i+1}]} = r!(a)|_{[x_i,x_{i+1}]}
\]
which is therefore a polynomial. 
\item $+$: Let $a, b \in \mathbb{I}$ such that there exist strictly increasing finite sequences $(x_i)_{i=1}^n$ and $(y_j)_{j=1}^m$ such that $!(a)$ is a polynomial on each $[x_i,x_{i+1}]$ and $!(b)$ is a polynomial on each $[y_j,y_{j+1}]$. We can enumerate the set $\{x_i\}_{i=1}^n \cup \{y_j\}_{j=1}^m$ in increasing order to obtain a strictly increasing finite sequence $(z_k)_{k=1}^p$ such that both $!(a)$ and $!(b)$ are polynomials on each $[z_k,z_{k+1}]$. Then $!(a+b) = !(a) + !(b)$ will also be equal to a polynomial on each $[z_k,z_{k+1}]$. 
\item $\sqcup$: Let $a,b\in \mathbb{I}$ be elements such that $!(a)$ and $!(b)$ are piecewise polynomial. As in the previous case, form a sequence $(x_i)_{i=1}^n$ such that both $!(a)$ and $!(b)$ are equal to polynomials on each $[x_i,x_{i+1}]$. By Lemma \ref{PiecewisePolyMeetLemma}, there exists a strictly increasing sequence $(y_{i,j})_{j=1}^{n_i}$ such that $!(a) \sqcup !(b)$ is a polynomial on $[y_{i,j},y_{i,j+1}]$ for each $j \in \{ 0,\ldots,n_i\}$. We can therefore enumerate $\{x_i\}_{i=1}^n \cup \bigcup_{i=1}^n \{y_{i,j}\}_{j=1}^{n_i}$ as $(z_k)_{k=1}^p$, and then $!(a)\sqcup !(b)$ is a polynomial on each $[z_k,z_{k+1}]$, as required. 
\item $\Diamond$: Let $a \in \mathbb{I}$ be an element such that $!(a)$ is piecewise polynomial, with sequence $(x_i)_{i =1}^n$. As $!(\Diamond(a))=  xa$, we have that on each $[x_i,x_{i+1}]$ is $xa$, which is a polynomial.
  \qedhere
\end{itemize}
\end{proof}

It is therefore clear that there are elements of $C([0,1])$ outside the image of $!$, and therefore $!'$, such as $x \mapsto e^x : [0,1] \rightarrow \mathbb{R}$. 

We will need the following fact about injective unital Riesz homomorphisms as well. 

\begin{lemma}
\label{InjectiveIsometryLemma}
Let $f : (A,u) \rightarrow (B,v)$ be an injective unital Riesz homomorphism between Archimedean unital Riesz spaces. Then $f$ is an isometry, \emph{i.e.} $\|f(a)\|_v = \|a\|_u$. 
\end{lemma}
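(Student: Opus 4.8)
The plan is to work directly from the explicit formula for the norm given in Theorem \ref{normed_space}, namely $\|a\|_u = \inf\{r \in \mathbb{R} \mid |a| \leq ru\}$ and likewise $\|f(a)\|_v = \inf\{r \in \mathbb{R} \mid |f(a)| \leq rv\}$. The strategy is to show that the two sets over which these infima are taken actually coincide; equality of the norms is then immediate. Throughout I will use that a Riesz homomorphism is monotone and preserves the lattice operations and $0$, so that $f(|a|) = |f(a)|$ and $f(b \sqcup 0) = f(b) \sqcup 0$, together with unitality $f(u) = v$.

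One inclusion is easy and does not require injectivity. If $|a| \leq ru$ for some real $r$, then applying the monotone map $f$ and using $f(|a|) = |f(a)|$ and $f(ru) = r f(u) = rv$ yields $|f(a)| \leq rv$. Hence every $r$ satisfying $|a| \leq ru$ also satisfies $|f(a)| \leq rv$, which already gives $\|f(a)\|_v \leq \|a\|_u$.

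For the reverse inclusion, which is where injectivity is essential, suppose $|f(a)| \leq rv$. I would consider the positive element $(|a| - ru)^+ = (|a| - ru) \sqcup 0$ of $A$. Since $f$ preserves $+$, scalar multiplication, $\sqcup$ and $0$, and since $f(|a|) = |f(a)|$ and $f(u) = v$, one computes $f\big((|a| - ru)^+\big) = (|f(a)| - rv)^+$, which equals $0$ because $|f(a)| \leq rv$. Injectivity of $f$ then forces $(|a| - ru)^+ = 0$, i.e. $|a| - ru \leq 0$, i.e. $|a| \leq ru$. Thus every $r$ satisfying $|f(a)| \leq rv$ also satisfies $|a| \leq ru$, and so $\|a\|_u \leq \|f(a)\|_v$.

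Combining the two inclusions shows the defining sets coincide and hence $\|f(a)\|_v = \|a\|_u$. The one genuine insight — the \emph{hard part} — is recognising that injectivity should be applied not to $a$ itself but to the truncated positive element $(|a| - ru)^+$; this is precisely the device that converts the order-theoretic statement ``$f$ is injective'' into the quantitative fact that $f$ \emph{reflects} the inequalities $|a| \leq ru$. Notably no appeal to the Archimedean hypothesis or to attainment of the infimum is needed for this argument, although one could alternatively first observe that in an Archimedean unital space the infimum is attained (so that $|a| \leq \|a\|_u\, u$) and then run the same computation with the single value $r = \|f(a)\|_v$.
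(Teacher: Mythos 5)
Your proof is correct and follows essentially the same route as the paper's: both reduce the claim to showing that $f$ reflects the inequalities $|a| \leq ru$, so that the two sets defining the infima in $\|a\|_u$ and $\|f(a)\|_v$ coincide. The only cosmetic difference is how injectivity is turned into order-reflection --- the paper checks directly that $f$ is an order embedding via the characterization $a \leq b \Leftrightarrow a \sqcap b = a$, whereas you apply injectivity to the positive part $(|a|-ru)^+$; these are interchangeable one-line arguments.
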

\begin{proof}
First observe that injective Riesz homomorphisms are order embeddings, \emph{i.e.} 
\[
f(a) \leq f(b) \Leftrightarrow f(a) \sqcap f(b) = f(a) \Leftrightarrow f(a \sqcap b) = f(a) \Leftrightarrow a \sqcap b = a \Leftrightarrow a \leq b.
\]
Therefore, using the definition of the norm,
\begin{align*}
\|f(a)\|_v &= \inf \{ r \in \mathbb{R}_{\geq 0} \mid |f(a)| \leq rv \} \\
 &= \inf \{ r \in \mathbb{R}_{\geq 0} \mid f(|a|) \leq f(ru) \} \\
 &= \inf \{ r \in \mathbb{R}_{\geq 0} \mid |a| \leq ru \} \\
 &= \|a\|_u.
   \tag*{\qedhere}
\end{align*}
\end{proof}

The following lemma uses a result about completeness of unital Archimedean Riesz spaces (Corollary \ref{QuotientCompleteCor}) that we could only find a convoluted proof of, which is in appendix \ref{QuotientCompleteSubsection}. 

\begin{lemma}
\label{SurjectiveUniformlyCompleteLemma}
Let $f : (A,u) \rightarrow (B,v)$ be a unital Riesz homomorphism between Archimedean Riesz spaces with strong units. If $(A,u)$ is uniformly complete and $f$ has dense image, then $f$ is surjective. 
\end{lemma}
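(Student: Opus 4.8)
The plan is to factor $f$ through its kernel, reduce to the injective case handled by Lemma \ref{InjectiveIsometryLemma}, and thereby exhibit the image $f(A)$ as a complete — hence closed — subspace of $B$; density of the image then forces surjectivity. The only genuinely nontrivial ingredient is the preservation of uniform completeness under the relevant quotient, which is precisely the appendix result.

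First I would record that $f$ is norm-non-increasing: if $|a| \leq ru$ then $|f(a)| = f(|a|) \leq f(ru) = rv$, so $\|f(a)\|_v \leq \|a\|_u$, and in particular $f$ is continuous. Since $B$ is Archimedean, $\|\cdot\|_v$ is a genuine norm (Theorem \ref{normed_space}), so $\{0\}$ is closed in $B$ and the ideal $J = f^{-1}(\{0\})$ is a uniformly closed ideal of $A$. Next I would form the quotient $A/J$ with distinguished element $[u]$: because $u$ is a strong unit of $A$, for each $[a]$ there is $n$ with $|[a]| = [|a|] \leq [nu] = n[u]$, so $[u]$ is again a strong unit. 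The map $f$ then factors as $\bar f \circ q$, where $q : A \rightarrow A/J$ is the quotient homomorphism and $\bar f : A/J \rightarrow B$ is an \emph{injective} unital Riesz homomorphism with $\bar f(A/J) = f(A)$.

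Here is where I would invoke the appendix result: by Corollary \ref{QuotientCompleteCor}, the quotient of the uniformly complete Archimedean unital Riesz space $A$ by the closed ideal $J$ is again Archimedean and uniformly complete, hence complete as a metric space in the norm induced by $[u]$. Then, since $\bar f : A/J \rightarrow B$ is an injective unital homomorphism between Archimedean unital Riesz spaces, Lemma \ref{InjectiveIsometryLemma} shows it is an isometry. An isometric image of a complete metric space is complete, and therefore closed; thus $f(A) = \bar f(A/J)$ is closed in $B$. As $f$ has dense image, $f(A)$ is both dense and closed, whence $f(A) = B$ and $f$ is surjective.

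The main obstacle — and the whole reason the statement is nontrivial — is the appeal to Corollary \ref{QuotientCompleteCor}, i.e.\ that quotienting a uniformly complete Archimedean unital Riesz space by a closed ideal preserves both the Archimedean property and uniform completeness. Once that is granted, the argument is pure bookkeeping with the strong-unit norm, the universal property of the quotient, and the elementary topological fact that a complete subset of a metric space is closed.
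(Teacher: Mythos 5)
Your proposal is correct and follows essentially the same route as the paper's own proof: factor $f$ through the closed kernel $J = f^{-1}(0)$, invoke Corollary \ref{QuotientCompleteCor} for uniform completeness of $(A/J,[u])$, apply Lemma \ref{InjectiveIsometryLemma} to the injective induced map, and conclude surjectivity from density plus completeness. The only cosmetic difference is that you phrase the last step as ``an isometric image of a complete space is closed, and dense plus closed gives everything,'' whereas the paper runs the equivalent Cauchy-sequence argument explicitly.
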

\begin{proof}
As $f$ is unital, it is bounded, and therefore continuous. So the ideal $J = f^{-1}(0)$ (an ideal by \cite[Theorem 18.3 (iii)]{Luxemburg}) is norm-closed. By Corollary \ref{QuotientCompleteCor}, $(A/J,[u])$ is a uniformly complete Archimedean Riesz space with strong unit $[u]$. The map $\tilde{f} : (A/J,[u]) \rightarrow (B,v)$ is injective, and so is an isometry by Lemma \ref{InjectiveIsometryLemma}. Because the image of $f$ is dense, so is the image of $\tilde{f}$, which means that for each $b \in B$ there is a sequence $(a_i)_{i \in \mathbb{N}}$ in $A/J$ such that $\tilde{f}(a_i) \to b$ as $i \to \infty$. Therefore $(\tilde{f}(a_i))_{i \in \mathbb{N}}$ is $\|\blank\|_v$-Cauchy, so $(a_i)_{i \in \mathbb{N}}$ is $\|\blank\|_{[u]}$-Cauchy, and therefore converges to an element $a \in A/J$. By continuity $\tilde{f}(a) = b$, so $\tilde{f}$, and therefore $f$, is surjective. 
\end{proof}

We can now prove the final result.
\begin{theorem}\label{uniform-incompleteness:robert}
$\mathbb{I}_a$ is not uniformly complete.
\end{theorem}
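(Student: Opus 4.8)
The plan is to argue by contradiction: assuming $\mathbb{I}_a$ were uniformly complete, I would derive that the canonical comparison map into $C([0,1])$ is surjective, which contradicts the piecewise-polynomial characterization of its image already established in Lemma \ref{PiecePolyLemma}.

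Concretely, I start from the modal Riesz space $(C([0,1]), \one_{[0,1]}, \Diamond_\alpha)$ arising via Proposition \ref{properties_modality} from the Markov process $\alpha$ of Example \ref{example:mp:2}, for which $\Diamond_\alpha(a)(x) = x\,a(x)$, as computed in Subsection \ref{subsection:uniform:incompleteness}. Since $\mathbb{I}_a$ is initial in $\AURiesz_{\Diamond}$ and this space lies in $\CAURiesz_{\Diamond}$, hence in $\AURiesz_{\Diamond}$, there is a unique modal Riesz homomorphism $!' : \mathbb{I}_a \rightarrow C([0,1])$, whose image is exactly the set of piecewise polynomial functions on $[0,1]$ by Lemma \ref{PiecePolyLemma}.

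Next I would invoke Lemma \ref{SurjectiveUniformlyCompleteLemma}. To apply it I must check its hypotheses: that $(\mathbb{I}_a,[1])$ and $(C([0,1]),\one_{[0,1]})$ are Archimedean with strong units — the former has strong unit $[1]$ by Theorem \ref{strong_unit_theorem}, since the strong unit descends to the quotient $\mathbb{I}_a$; that $!'$ is a unital Riesz homomorphism — immediate, as it is even a modal Riesz homomorphism; and that $!'$ has dense image. Density holds because the image already contains all polynomials, which are uniformly dense in $C([0,1])$ by the Weierstrass approximation theorem. Under the (to-be-contradicted) assumption that $\mathbb{I}_a$ is uniformly complete, Lemma \ref{SurjectiveUniformlyCompleteLemma} then forces $!'$ to be surjective.

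This is the desired contradiction: the image of $!'$ is the proper subset of piecewise polynomial functions, since it omits, for instance, $x \mapsto e^x$, which is not piecewise polynomial. Hence $\mathbb{I}_a$ cannot be uniformly complete. I expect no genuine obstacle at this stage: the real work has been front-loaded into Lemma \ref{PiecePolyLemma} (the image characterization) and Lemma \ref{SurjectiveUniformlyCompleteLemma} (dense image together with completeness forces surjectivity). Given those, the only points requiring care are verifying that the strong unit $[1]$ survives the passage from $\mathbb{I}$ to $\mathbb{I}_a$ and that density of the image is secured, both of which are routine.
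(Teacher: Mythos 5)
Your proof is correct and takes essentially the same route as the paper's: both combine the image characterization of Lemma \ref{PiecePolyLemma} with Lemma \ref{SurjectiveUniformlyCompleteLemma} (density plus uniform completeness forces surjectivity) and then exhibit $x \mapsto e^x$ as a continuous function that is not piecewise polynomial, hence not in the image. The only immaterial difference is the density step: you use the classical Weierstrass approximation theorem, noting that the image contains all polynomials (via $!'(\Diamond^n 1) = x^n$ and linearity), whereas the paper invokes the lattice form of the Stone--Weierstrass theorem; both are valid.
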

\begin{proof}
By Lemma \ref{PiecePolyLemma} and the lattice form of the Stone-Weierstrass theorem \cite[Lemma 45.2]{Luxemburg}, $!' : \mathbb{I}_a \rightarrow C([0,1])$ has dense image. If $\mathbb{I}_a$ were uniformly complete, $!'$ would be surjective (Lemma \ref{SurjectiveUniformlyCompleteLemma}). However, by Lemma \ref{PiecePolyLemma}, $!'$ is not surjective, \emph{e.g.} $x \mapsto e^x$ is not in the image of $!'$ because it is not equal to a polynomial on any nontrivial interval. 
\end{proof}

\section{Final Coalgebra}\label{final:section}

We have identified in the previous section the initial object $\hat{\mathbb{I}}_a$ in the category $\CAURiesz_{\Diamond}$. From the duality between $\CAURiesz_{\Diamond}$ and $\Markov$ we can infer that the dual object of $\hat{\mathbb{I}}_a$ is the final coalgebra in $\Markov$. We denote this Markov process by $\alpha_{\mathbf{F}}:\mathbf{F}\rightarrow \Rdnl(\mathbf{F})$.

Recall, from Theorem \ref{DiamondIsoTheorem}, that its state space $\mathbf{F}$ 
is the compact Hausdorff space consisting of the collection of maximal ideals in $\hat{\mathbb{I}}_a$ endowed with the hull-kernel topology:
$$\mathbf{F}=\Spec(\hat{\mathbb{I}}_a).$$

Therefore we can study the topological structure of the final coalgebra $\alpha_\mathbb{F}$ of $\Markov$ by studying the (hull-kernel topology of) the collection of maximal ideals in $\hat{\mathbb{I}}_a$. By Proposition \ref{link_proposition}, we can equivalently study the maximal ideals in $\mathbb{I}_a$:
\begin{equation}
\mathbf{F}=\Spec(\hat{\mathbb{I}}_a) = \Spec(\mathbb{I}_a)
\end{equation}
To illustrate this method, we now show that the compact Hausdorff space $\mathbf{F}$ is Polish, i.e., separable and (completely) metrizable.

Recall from Definition \ref{normed_space} that $\mathbb{I}_a$, being unital and Archimedean, is a normed space with norm $\|\_\|$ and compatible metric $d$. We first establish the following useful property of $\mathbb{I}_a$ by elementary syntactic arguments.

\begin{lemma}\label{density_rationals}
For each formula $\phi$ there exists a formula $\psi$ having only rational coefficients such that
$d( [\phi]_{\equiv_a}, [\psi]_{\equiv_a})\leq\epsilon$.
\end{lemma}
\begin{proof}
By Proposition \ref{proposition-logic-archimedean}, we need to prove $d( [\phi]_{\equiv_a}, [\psi]_{\equiv_a})\leq\epsilon$, i.e., that the inequality $ |  [\phi]_{\equiv_a} -  [\psi]_{\equiv_a} | \leq [\epsilon 1]_{\equiv_a}$ holds in $\mathbb{I}_a$. This means we need to derive the inequality $
|\phi - \psi| \leq \epsilon 1 
$ form the axioms of modal Riesz spaces and the Archimedean rule. Our proof does not make, in fact, any use of the Archimedean rule.

The proof goes by induction on the modal-depth $m(\phi)$ of $\phi$ defined inductively by $m(0)\!=\!m(1)\!=\!0$, $m(\Diamond\phi)=1+m(\phi)$ and $m(\phi_1+\phi_2)= \max\{ m(\phi_1),m(\phi_2)\}$ and similarly for all other connectives. The base case $m(\phi)=0$ is trivial, as $\phi$ can be identified with a real number $r_{\phi}\in\mathbb{R}$ and we can choose $\psi$ to be $s1$ for some rational $s$ such that $|r-s|<\epsilon$.

Suppose now that $m(\phi)=k+1$. We have to consider all separate cases. The most interesting is the case $\phi\!=\!\Diamond\phi_1$. We can pick, by inductive hypothesis on $\phi_1$, a formula $\psi_1$ with rational coefficients such that $d(\phi_1,\psi_1) < \epsilon$, i.e., $[|\phi_1 -\psi_1|]_{\equiv_a} \leq [\epsilon 1]_{\equiv_a}$. Then, using the inequality $ | \Diamond (x) | \leq \Diamond ( |x|)$ from Proposition \ref{useful_equality}, it is simple to show that:
$$|\Diamond\phi_1 - \Diamond\psi_1|  = |\Diamond( \phi_1 - \psi_1)| \leq \Diamond(| \phi_1 - \psi_1 | )  \leq \Diamond(\epsilon 1) = \epsilon\Diamond(1)\leq \epsilon 1$$

All other cases involving the other connectives follow easily form the inductive hypothesis. For example, if $\phi\!=\!\phi_1+ \phi_2$ then, by inductive hypothesis, we can pick $\psi_1$ and $\psi_2$ such that $|\phi_i - \psi_i | <\frac{1}{2}\epsilon$. It then clearly follows that $\psi=\psi_1+\psi_2$ has the required property.
\end{proof}

Note that the set of formulas $\psi$ having rational coefficients is countable. Hence $\mathbb{I}_a$ is \emph{separable} as a metric space.

\begin{corollary}
The unital Archimedean algebra $\mathbb{I}_a$ is separable as a metric space.
\end{corollary}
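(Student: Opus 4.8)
The plan is to deduce separability directly from the density statement just established, so that the corollary becomes an essentially immediate consequence of Lemma \ref{density_rationals}. First I would pin down the relevant countable set. If one restricts the scalar-multiplication coefficients in the grammar of Definition \ref{syntax_formulas} to range over $\mathbb{Q}$ rather than $\mathbb{R}$, then the rational-coefficient formulas are exactly the finite terms over a countable signature: the constants $0$ and $1$, the countably many unary operations $q(\blank)$ for $q \in \mathbb{Q}$, the unary operation $\Diamond$, and the binary operations $+,\sqcup,\sqcap$. The set of finite terms over a countable alphabet is countable, and passing to $\equiv_a$-equivalence classes can only decrease cardinality, so
$$
D = \{\, [\psi]_{\equiv_a} \mid \psi \text{ is a formula with rational coefficients} \,\}
$$
is a countable subset of $\mathbb{I}_a$.

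Next I would invoke Lemma \ref{density_rationals}: for every formula $\phi$ and every $\epsilon > 0$ there is a rational-coefficient formula $\psi$ with $d([\phi]_{\equiv_a},[\psi]_{\equiv_a}) \leq \epsilon$. Since every element of $\mathbb{I}_a$ is of the form $[\phi]_{\equiv_a}$ for some $\phi$, this is precisely the assertion that $D$ is dense in the metric space $(\mathbb{I}_a, d)$, where $d$ is the metric induced by the norm $\|\blank\|$ of Theorem \ref{normed_space}. A metric space that admits a countable dense subset is separable by definition, which gives the claim.

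I do not expect any genuine obstacle here, since all of the real work sits in Lemma \ref{density_rationals}, whose induction on modal depth already handles the only delicate connective, $\Diamond$, via the inequality $|\Diamond(x)| \leq \Diamond(|x|)$ of Proposition \ref{useful_equality}. The sole additional point is the countability count for $D$, which is a routine application of the fact that finite words over a countable alphabet form a countable set; no appeal to the Archimedean rule or to uniform completeness is needed.
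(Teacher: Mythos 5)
Your proof is correct and follows essentially the same route as the paper: the paper likewise observes that the set of rational-coefficient formulas is countable and concludes separability directly from Lemma \ref{density_rationals}, treating the corollary as an immediate consequence. Your only addition is spelling out the countability count (finite terms over a countable signature) and the passage to $\equiv_a$-classes, which the paper leaves implicit.
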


%It is known that for unital Archimedean spaces, the notion of convergence in the norm $||\_ ||$ coincides with (lattice--)order convergence (see \cite[Theorems 16.2 and 62.4]{Luxemburg}).  Hence, since the set of formulas having rational coefficients is countable, we obtained the following corollary of proposition \ref{density_rationals}.

%As a corollary, we get the following property about the topology of $\mathbf{F}$.

As a corollary of the previous proposition, we get the following interesting property regarding the topology of $\mathbf{F}$. 

\begin{theorem}\label{corollary:polish}
Let $(A,u)$ be an Archimedean unital Riesz space which is separable as a metric space in its norm. Then $\Spec(A)$ is Polish. Therefore $\Spec(\mathbb{I}_a)$ is Polish.
\end{theorem}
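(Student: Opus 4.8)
The plan is to reduce the statement to a classical fact about spaces of continuous functions, using Yosida duality as the bridge. By Yosida's theorem (Theorem~\ref{YosidaAdjunctionTheorem}, together with Definition~\ref{uniform_completion}) the counit $\epsilon_A$ isometrically embeds $A$ as a norm--dense Riesz subspace of $C(\Spec(A))$; equivalently, $C(\Spec(A))=\hat{A}$ is the norm--completion of $A$. First I would observe that separability passes to this completion: if $D\subseteq A$ is a countable norm--dense subset, then since $\epsilon_A$ is an isometry the image $\epsilon_A(D)$ is countable and dense in $\epsilon_A(A)$, which is in turn dense in $C(\Spec(A))$. By transitivity of density, $\epsilon_A(D)$ is dense in $C(\Spec(A))$, so $C(\Spec(A))$ is a \emph{separable} Banach space.

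The key step is the implication that, for a compact Hausdorff space $X$, separability of $C(X)$ forces $X$ to be metrizable. I would prove this by an embedding argument. Fix a countable dense family $\{f_n\}_{n\in\mathbb{N}}\subseteq C(\Spec(A))$, rescaling so that each $f_n$ takes values in $[0,1]$. Since $C(X)$ separates the points of a compact Hausdorff space by Urysohn's lemma, and $\{f_n\}$ is dense, the family $\{f_n\}$ itself separates points: given $J\neq J'$ in $\Spec(A)$, choose $g\in C(\Spec(A))$ with $g(J)\neq g(J')$ and approximate $g$ by some $f_n$ closely enough that $f_n(J)\neq f_n(J')$. Consequently the map $e:\Spec(A)\to[0,1]^{\mathbb{N}}$ defined by $e(J)=(f_n(J))_{n\in\mathbb{N}}$ is continuous (each coordinate being continuous) and injective. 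As $\Spec(A)$ is compact and $[0,1]^{\mathbb{N}}$ is Hausdorff, $e$ is a homeomorphism onto its image; since $[0,1]^{\mathbb{N}}$ is metrizable (a countable product of metrizable spaces), so is $\Spec(A)$.

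Finally, a compact metrizable space is automatically Polish, being separable and complete in any compatible metric. This establishes the general statement for every separable Archimedean unital Riesz space $(A,u)$. The concluding claim then follows by applying it to $A=\mathbb{I}_a$, which is separable as a metric space by the corollary immediately preceding the theorem; hence $\Spec(\mathbb{I}_a)$, which is the state space $\mathbf{F}$ of the final coalgebra, is Polish.

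The only nontrivial ingredient is the metrizability step; everything else is routine bookkeeping with the Yosida duality and the transitivity of denseness. I expect the main obstacle to be carrying out the point--separation and the product embedding cleanly: specifically, justifying that a \emph{dense} subset of $C(X)$ still separates points, and invoking compactness to upgrade the continuous injection $e$ into a homeomorphism onto its image.
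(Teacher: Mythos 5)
Your proposal is correct and follows the same route as the paper: Yosida's theorem gives the isometric dense embedding $\epsilon_A(A)\subseteq C(\Spec(A))$, hence $C(\Spec(A))$ is norm--separable, hence $\Spec(A)$ is metrizable, and a compact metrizable space is Polish. The only difference is that where the paper cites a standard reference for the step ``$C(X)$ separable $\Rightarrow$ $X$ metrizable'', you prove it inline via the embedding $J\mapsto(f_n(J))_n$ into $[0,1]^{\mathbb{N}}$ — which is exactly the textbook proof and is carried out correctly (just note that you should extract the point--separating countable family from the dense set \emph{before} rescaling, since rescaling destroys density but preserves separation of points).
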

\begin{proof}
The space $(A,u)$ embeds densely in $C(\Spec(A))$, so $C(\Spec(A))$ is separable in norm. This implies that the compact space $\Spec(A)$ is metrizable \cite[V.6.6 Theorem]{conway}. Every compact metric space is Polish \cite[Chapter I, \S 4.2 Proposition]{Kechris}.
\end{proof}

%%% Local Variables:
%%% mode: latex
%%% TeX-master: "ms.tex"
%%% End:

\section{Applications of Duality to Riesz modal Logic}
\label{section_applications}

In this section we use the duality theorem and the characterization of the initial modal Riesz space $\hat{\mathbb{I}}_a$ and its dual, the final Markov process $\alpha_\mathbf{F}$, to prove basic results about the Riesz modal logic.

We start by proving that the proof system presented in Figure \ref{figure:full:axiomatisation} for deriving equalities between Riesz modal logic formulas is sound and complete 
with respect to the semantic equivalence relation  $(\sim$) of Definition \ref{sem:equivalence}.

\begin{figure}[h!]
\begin{mdframed}
\begin{center}
 \begin{enumerate}
\item  \textbf{Axioms of Riesz spaces:}
\begin{itemize}
\item Real Vector space: 
\begin{itemize}
\item Additive group: $x + (y + z) = (x + y) + z$, $x + y = y + x$, $x + 0 = x$, $x - x= 0$,
\item Axioms of scalar multiplication: $r_1(r_2 x) = (r_1\cdot r_2) x$, $1x = x$, $r(x+y) = (rx) + (ry)$, $(r_1 + r_2)x = (r_1 x) + (r_2 x)$,
\end{itemize}
\item Lattice axioms:    (associativity) $x \sqcup (y \sqcup z) = (x \sqcup y) \sqcup z$,  $x \sqcap (y \sqcap z) = (x \sqcap y) \sqcap z$, (commutativity) $z \sqcup y = y \sqcup z$, $z \sqcap y = y \sqcap z$,
(absorption) $z \sqcup (z \sqcap y) = z$, $z \sqcap (z \sqcup y) = z$, (idempotence) $x\sqcup x =x$,  $x\sqcap x =x$. 
\item Compatibility axioms: 
\begin{itemize}
\item $(x \sqcap y) + z \leq  (y + z) $,
\item $r (x \sqcap y) \leq  ry$, for all scalars $r\geq 0$. 
\end{itemize}
\end{itemize}
\item \textbf{Axiom of the positive element:} $0 \leq u$,
\item \textbf{Modal axioms:}
\begin{itemize}
\item Linearity: $\Diamond(r_1x + r_2y ) = r_1\Diamond(x) + r_2\Diamond(y)$,
\item Positivity: $0 \leq \Diamond (x\sqcup 0) $,
\item $u$-decreasing: $\Diamond 1 \leq 1$.
\end{itemize}
\item \textbf{Archimedean rule:}
\begin{center}
$\infer[\mathbb{A}]
                        {x = 0 }            {  |x|\leq | y| \ \ \ \  2|x|\leq |y| \ \ \ \  3|x|\leq |y| \ \ \ \ \dots \ \ \ \  n|x|\leq |y|  \ \ \ \ \dots  }$
\end{center}
\end{enumerate}
\end{center}
\end{mdframed}
\caption{Sound and Complete Proof System for the Riesz modal logic.}
\label{figure:full:axiomatisation}
\end{figure}

\begin{theorem}\label{completeness_theorem_app}
Let $\phi,\psi\in\texttt{Form}$ be two modal Riesz logic formulas. Then
$$
\phi \sim \psi \ \ \ \ \Longleftrightarrow \ \ \ \  (\textnormal{Axioms of }\CRiesz^{u}_{\Diamond}) + \mathbb{A}\vdash \phi=\psi.
$$
\end{theorem}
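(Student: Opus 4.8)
The plan is to reduce the theorem to the algebraic facts already proved for the syntactic initial algebra $\mathbb{I}_a$. The crucial observation is that for any Markov process $\alpha\colon X\to\Rdnl(X)$ the semantics map $\sem{\blank}_\alpha\colon\Formulas\to C(X)$ is exactly the composite of the quotient $\phi\mapsto[\phi]_\equiv$ with the unique morphism $!_{A_\alpha}\colon\mathbb{I}\to A_\alpha$ from the initial object of $\CRiesz^{u}_{\Diamond}$ (Proposition~\ref{initial_object_modalrieszspace}) into $A_\alpha=(C(X),\one_X,\Diamond_{\alpha})$ (Example~\ref{MarkovToRieszExample}); this is immediate by comparing the inductive clauses of Definition~\ref{def_semantics_2} with those defining $!_{A_\alpha}$. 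Since $A_\alpha$ lies in $\CAURiesz_{\Diamond}$ it is Archimedean, so $!_{A_\alpha}$ factors through the Archimedean quotient $\mathbb{I}_a$, giving $\sem{\phi}_\alpha=!'_{A_\alpha}([\phi]_{\equiv_a})$ once $\mathbb{I}_a$ is identified with $\Formulas/_{\equiv_a}$ via Proposition~\ref{proposition-logic-archimedean}.

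For the direction $(\Leftarrow)$ I would argue soundness directly. Every axiom of $\CRiesz^{u}_{\Diamond}$ holds in each $A_\alpha$ by Proposition~\ref{properties_modality}, and the infinitary Archimedean rule $\mathbb{A}$ is sound in $A_\alpha$ precisely because $A_\alpha$ is Archimedean. Hence $(\textnormal{Axioms of }\CRiesz^{u}_{\Diamond})+\mathbb{A}\vdash\phi=\psi$ means $\phi\equiv_a\psi$, i.e.\ $[\phi]_{\equiv_a}=[\psi]_{\equiv_a}$ in $\mathbb{I}_a$, whence $\sem{\phi}_\alpha=!'_{A_\alpha}([\phi]_{\equiv_a})=!'_{A_\alpha}([\psi]_{\equiv_a})=\sem{\psi}_\alpha$ for every $\alpha$, which is exactly $\phi\sim\psi$.

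For the direction $(\Rightarrow)$ it suffices to exhibit a single Markov process distinguishing the two formulas whenever $\phi\not\equiv_a\psi$; the natural choice is the final Markov process $\alpha_{\mathbf{F}}\colon\mathbf{F}\to\Rdnl(\mathbf{F})$, whose associated algebra is $A_{\alpha_{\mathbf{F}}}=\hat{\mathbb{I}}_a$, the initial object of $\CAURiesz_{\Diamond}$ (Proposition~\ref{initiality}). By uniqueness of morphisms out of the initial object $\mathbb{I}_a$ of $\AURiesz_{\Diamond}$, the map $!'_{\hat{\mathbb{I}}_a}\colon\mathbb{I}_a\to\hat{\mathbb{I}}_a$ coincides with the canonical dense embedding $\epsilon_{\mathbb{I}_a}$ of $\mathbb{I}_a$ into its uniform completion, which is an injective (indeed isometric) unital Riesz homomorphism by Proposition~\ref{link_proposition} and Definition~\ref{uniform_completion}. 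Consequently $\sem{\phi}_{\alpha_{\mathbf{F}}}=\sem{\psi}_{\alpha_{\mathbf{F}}}$ forces $[\phi]_{\equiv_a}=[\psi]_{\equiv_a}$, i.e.\ $\phi\equiv_a\psi$. Taking contrapositives, $\phi\sim\psi$ implies $\phi\sim_{\alpha_{\mathbf{F}}}\psi$, hence $(\textnormal{Axioms of }\CRiesz^{u}_{\Diamond})+\mathbb{A}\vdash\phi=\psi$.

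The step I expect to be the real content is the identification of the semantics into the final coalgebra with the injective embedding $\mathbb{I}_a\hookrightarrow\hat{\mathbb{I}}_a$: this is where the duality of Theorem~\ref{main_theorem_paper}, the explicit description of the initial algebras, and the density and injectivity of the Yosida completion are all used simultaneously. Everything else, namely the factorization of $!_{A_\alpha}$ through $\mathbb{I}_a$ and the soundness of the rule $\mathbb{A}$, is routine once the Archimedean property of each $A_\alpha$ is invoked.
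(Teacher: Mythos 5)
Your proof is correct and takes essentially the same route as the paper: soundness because each $A_\alpha=(C(X),\one_X,\Diamond_\alpha)$ is an Archimedean modal Riesz space, and completeness by evaluating on the final Markov process $\alpha_{\mathbf{F}}$, whose algebra is $\hat{\mathbb{I}}_a$, and reading off $\phi\equiv_a\psi$. You in fact make explicit a step the paper's proof leaves implicit, namely that the semantics over $\alpha_{\mathbf{F}}$ factors as the quotient $\Formulas\to\mathbb{I}_a$ followed by the map $!'_{\hat{\mathbb{I}}_a}$, which by initiality of $\mathbb{I}_a$ in $\AURiesz_\Diamond$ coincides with the injective (isometric) Yosida embedding $\epsilon_{\mathbb{I}_a}\colon\mathbb{I}_a\to\hat{\mathbb{I}}_a$; this injectivity is exactly what converts semantic equality on $\alpha_{\mathbf{F}}$ into $[\phi]_{\equiv_a}=[\psi]_{\equiv_a}$.
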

\begin{proof}
Direction $(\Leftarrow)$ (soundness). We know that to each Markov process $\alpha\!:\!X\!\rightarrow\!\Rdnl(X)$ corresponds the Archimedean unital modal Riesz space $A_{\alpha}\!=\!(C(X),\one_X,\Diamond_{\alpha})$ and, by Definition \ref{def_semantics_2}, that $\sem{\phi}_{\alpha}\!=\!\sem{\psi}_{\alpha}$ holds if the equality $\phi \!=\!\psi$ holds in $A_{\alpha}$. The assumption $(\textnormal{Axioms of }\CRiesz^{u}_{\Diamond}) + \mathbb{A} \vdash \phi=\psi$ means that $\phi\!=\!\psi$ is true in all Archimedean modal Riesz spaces and in particular in all $A_{\alpha}$.

Direction $(\Rightarrow)$ (completeness). Assume $\phi\!\sim\!\psi$ holds, i.e., the equality $\sem{\phi}_\alpha \!=\! \sem{\psi}_{\alpha}$ holds for all Markov processes $\alpha\!:\!X\!\rightarrow\!\Rdnl(X)$. In particular the equality holds on the final Markov process $\alpha_{\mathbf{F}}\!:\!\mathbf{F}\!\rightarrow\!\Rdnl(\mathbf{F})$. By duality, we have $C(\mathbf{F}) \simeq \hat{\mathbb{I}}_a$.

%and for all formulas $\phi\!\in\!\texttt{Form}$, we have the correspondence:
%$$ \sem{\phi}_{\alpha_{\mathbb{F}}}\longleftrightarrow [\phi]_{\equiv_A} \ \ \  \textnormal{ and } \ \ \ \sem{\psi}_{\alpha_{\mathbb{F}}}\longleftrightarrow [\psi]_{\equiv_A}$$
Therefore $\sem{\phi}_{\alpha_{\mathbf{F}}} = \sem{\psi}_{\alpha_{\mathbf{F}}}$ means $[\phi]_{\equiv} \approx [\psi]_{\equiv}$ or, equivalently by Proposition \ref{proposition-logic-archimedean}, that $\phi\equiv_a \psi$. By definition of the equivalence relation $\equiv_a$  this means:
$$(\textnormal{Axioms of }\CRiesz^{u}_{\Diamond}) + \mathbb{A} \vdash \phi=\psi$$
and the proof is completed.
\end{proof}

The following theorem is another simple consequence of the machinery based on duality.
\begin{theorem}\label{behavioral_characterization}
Let $x,y\in\mathbf{F}$ two points in the final coalgebra. If $x\neq y$ then there exists a formula $\phi$ such that that $\sem{\phi}_{\alpha_{\mathbf{F}}}(x)\neq\sem{\phi}_{\alpha_{\mathbf{F}}}(y)$.\end{theorem}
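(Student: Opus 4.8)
The plan is to reduce the statement to two facts already in place: distinct points of $\mathbf{F} = \Spec(\hat{\mathbb{I}}_a)$ are distinct maximal ideals, and $\mathbb{I}_a$ is a norm-dense sub-Riesz space of $\hat{\mathbb{I}}_a$ (Proposition \ref{link_proposition} and Definition \ref{uniform_completion}). The key preliminary observation, already used implicitly in the proof of Theorem \ref{completeness_theorem_app}, is that the semantics of a formula on the final coalgebra is nothing but its Yosida transform. Concretely, $A_{\alpha_{\mathbf{F}}} = (C(\mathbf{F}), \one_{\mathbf{F}}, \Diamond_{\alpha_{\mathbf{F}}})$ is isomorphic to $\hat{\mathbb{I}}_a$, and under this isomorphism the assignment $\phi \mapsto \sem{\phi}_{\alpha_{\mathbf{F}}}$ factors through $\mathbb{I}_a$ as the composite of the dense embedding $\mathbb{I}_a \hookrightarrow \hat{\mathbb{I}}_a$ with the counit isomorphism $a \mapsto \hat{a}$ of Yosida duality. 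Hence, viewing each point $x \in \mathbf{F}$ as a maximal ideal and letting $f_x : \hat{\mathbb{I}}_a \to (\mathbb{R},1)$ be the unique unital homomorphism with kernel $x$ (Theorem \ref{MaxIdealTheorem}), I would record the identity $\sem{\phi}_{\alpha_{\mathbf{F}}}(x) = \widehat{[\phi]_{\equiv_a}}(x) = f_x([\phi]_{\equiv_a})$.

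With this in hand, the argument is short. Given $x \neq y$ in $\mathbf{F}$, the corresponding maximal ideals differ, so the homomorphisms $f_x$ and $f_y$ are distinct and therefore disagree on some element of $\hat{\mathbb{I}}_a$. Since $f_x$ and $f_y$ are unital homomorphisms between Archimedean unital Riesz spaces, they are norm-bounded and hence norm-continuous; as $\mathbb{I}_a$ is norm-dense in $\hat{\mathbb{I}}_a$, two continuous functionals agreeing on all of $\mathbb{I}_a$ would agree on its closure $\hat{\mathbb{I}}_a$. Consequently $f_x$ and $f_y$ must already differ on some element $[\phi]_{\equiv_a}$ of $\mathbb{I}_a$, and for this formula $\phi$ we obtain $\sem{\phi}_{\alpha_{\mathbf{F}}}(x) = f_x([\phi]_{\equiv_a}) \neq f_y([\phi]_{\equiv_a}) = \sem{\phi}_{\alpha_{\mathbf{F}}}(y)$, as desired.

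I do not expect a serious obstacle; the only delicate point is to ensure the separating element can be chosen among the semantics of actual formulas, that is, in the dense copy $\mathbb{I}_a$ rather than merely in its completion $\hat{\mathbb{I}}_a$, which is exactly what the norm-continuity-plus-density step above secures. An essentially equivalent route, which I would mention as an alternative, works entirely inside $C(\mathbf{F})$: since $\mathbf{F}$ is compact Hausdorff it is completely regular, so $C(\mathbf{F})$ separates points and there is $g \in C(\mathbf{F})$ with $g(x) \neq g(y)$; because the point evaluations $\mathrm{ev}_x, \mathrm{ev}_y$ are norm-continuous and the functions $\sem{\phi}_{\alpha_{\mathbf{F}}}$ are norm-dense in $C(\mathbf{F})$ (by the density of $\mathbb{I}_a$), some $\sem{\phi}_{\alpha_{\mathbf{F}}}$ must itself separate $x$ and $y$.
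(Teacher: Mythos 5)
Your proposal is correct, and its overall architecture matches the paper's proof: both arguments first produce a separating element of $\hat{\mathbb{I}}_a \cong C(\mathbf{F})$ and then use the norm-density of $\mathbb{I}_a$ in $\hat{\mathbb{I}}_a$, together with norm-continuity, to replace that element by the semantics of an actual formula. The one genuine difference lies in where the initial separation comes from. The paper invokes the topological fact that points of the compact Hausdorff space $\mathbf{F}$ are separated by continuous functions (Urysohn), picks such an $f \in C(\mathbf{F})$, and approximates it closely enough by some $g \in \mathbb{I}_a$; your main route instead gets the separation algebraically: distinct points of $\mathbf{F} = \Spec(\hat{\mathbb{I}}_a)$ are distinct maximal ideals, so the associated unital homomorphisms $f_x, f_y : \hat{\mathbb{I}}_a \rightarrow (\mathbb{R},1)$ of Theorem \ref{MaxIdealTheorem} are distinct, and since unital Riesz homomorphisms are norm-contractive, two of them agreeing on the dense subspace $\mathbb{I}_a$ would agree everywhere. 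This buys a slightly more self-contained argument (no appeal to Urysohn, only to the ideal--character correspondence already in play), while the paper's version is shorter given that complete regularity of $\mathbf{F}$ is free. Your ``alternative route'' at the end is, essentially word for word, the paper's proof. A further merit of your write-up is that you state explicitly the identity $\sem{\phi}_{\alpha_{\mathbf{F}}}(x) = \widehat{[\phi]_{\equiv_a}}(x) = f_x([\phi]_{\equiv_a})$, which the paper uses silently both here and in Theorem \ref{completeness_theorem_app}; to close that last gap one should observe that the semantics-induced map and the Yosida counit are both modal Riesz homomorphisms from $\mathbb{I}_a$ to $C^\Diamond(\alpha_{\mathbf{F}})$, hence coincide by initiality of $\mathbb{I}_a$ in $\AURieszD$.
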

\begin{proof}
The space $\mathbf{F}=\Spec(\mathbb{I}_a)$ is compact Hausdorff. Therefore points can be separated by continuous functions meaning that $x\!\neq\! y$ if and only if there exists a continuous function $f\!\in\! C(\mathbf{F})$ such that $f(x)\!\neq\! f(y)$. 
By duality we have that $C(\mathbf{F}) \simeq \hat{\mathbb{I}}_a$. Furthermore we know by Proposition \ref{link_proposition} that $\mathbb{I}_a$ is a dense subalgebra of $\hat{\mathbb{I}}_a$. Hence, by choosing a sufficiently close approximation of $f$, we obtain a function $g\!\in\!\mathbb{I}_a\subseteq C(\mathbf{F})$ such that $g(x)\!\neq\! g(y)$. Now $g\!=\![\phi]_{\equiv_a}$ for some formula $\phi\in\texttt{Form}$ and this is the desired separating formula.
\end{proof}

By combining Theorem \ref{behavioral_characterization} above with Proposition \ref{preservation_lemma} we then get the following corollary which states that modal Riesz logic formulas characterize behavioural equivalence. Recall that two states of a Markov process $\alpha$ are called \emph{behaviourally equivalent} if $\eta(x)=\eta(y)$ where $\alpha\stackrel{\eta}{\rightarrow}\alpha_{\mathbf{F}}$ is the unique coalgebra morphism from $\alpha$ to the final coalgebra.

\begin{corollary}\label{corollary:bisimulation}
Let $\alpha:X\rightarrow \Rdnl(X)$ be a Markov process and $x,y\in X$. Then $x$ and $y$ are behaviourally equivalent if and only if $\sem{\phi}_{\alpha}(x)= \sem{\phi}_{\alpha}(y)$ for all formulas $\phi$.
\end{corollary}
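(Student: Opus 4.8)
The plan is to derive both directions from the single naturality statement of Proposition~\ref{preservation_lemma}, instantiated at the unique coalgebra morphism $\eta$ into the final Markov process: writing $\alpha \stackrel{\eta}{\rightarrow}\alpha_{\mathbf{F}}$ for this morphism, Proposition~\ref{preservation_lemma} gives the key identity
\[
\sem{\phi}_{\alpha}(z) = \sem{\phi}_{\alpha_{\mathbf{F}}}(\eta(z))
\]
for every formula $\phi$ and every state $z\in X$. Recalling that $x$ and $y$ are behaviourally equivalent precisely when $\eta(x)=\eta(y)$, the whole corollary becomes a matter of relating equality of the points $\eta(x),\eta(y)\in\mathbf{F}$ to equality of all the real values $\sem{\phi}_{\alpha_{\mathbf{F}}}$ at those points.

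First I would treat the forward implication, which is immediate. If $x$ and $y$ are behaviourally equivalent then $\eta(x)=\eta(y)$, so for every $\phi$ the displayed identity yields $\sem{\phi}_{\alpha}(x)=\sem{\phi}_{\alpha_{\mathbf{F}}}(\eta(x))=\sem{\phi}_{\alpha_{\mathbf{F}}}(\eta(y))=\sem{\phi}_{\alpha}(y)$, as required.

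For the converse I would argue by contraposition. Suppose $x$ and $y$ are \emph{not} behaviourally equivalent, i.e.\ $\eta(x)\neq\eta(y)$ as points of $\mathbf{F}$. Then Theorem~\ref{behavioral_characterization} supplies a formula $\phi$ with $\sem{\phi}_{\alpha_{\mathbf{F}}}(\eta(x))\neq\sem{\phi}_{\alpha_{\mathbf{F}}}(\eta(y))$, and pulling this back through the displayed identity gives $\sem{\phi}_{\alpha}(x)\neq\sem{\phi}_{\alpha}(y)$. Hence if the values of all formulas agree at $x$ and $y$, the two states must be behaviourally equivalent.

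There is essentially no obstacle at this level: the substantive work has already been done upstream, in Proposition~\ref{preservation_lemma} (invariance of the semantics under coalgebra morphisms) and in Theorem~\ref{behavioral_characterization} (which separates distinct points of $\mathbf{F}$ using the Hausdorffness of $\Spec(\mathbb{I}_a)$ together with the density of $\mathbb{I}_a$ in $\hat{\mathbb{I}}_a$). The only point deserving care is to invoke the \emph{existence and uniqueness} of $\eta$ as the canonical morphism into the final coalgebra $\alpha_{\mathbf{F}}$, so that ``behaviourally equivalent'' is interpreted through exactly this map; once that is fixed, both directions are a direct application of the two cited results.
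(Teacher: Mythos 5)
Your proof is correct and follows exactly the paper's route: the paper derives this corollary by combining Proposition~\ref{preservation_lemma} (instantiated at the unique morphism $\alpha\stackrel{\eta}{\rightarrow}\alpha_{\mathbf{F}}$) with Theorem~\ref{behavioral_characterization}, which is precisely your two-step argument. The only difference is that you spell out the details (the pullback identity and the contrapositive) that the paper leaves implicit.
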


%%% Local Variables:
%%% mode: latex
%%% TeX-master: "ms.tex"
%%% End:

\section{Other Classes of Models}
\label{other:models:section}

The completeness result (Theorem \ref{completeness_theorem_app}) may be considered, at a first glance, as slightly artificial. This is because the class of models we are considering (i.e., Markov processes in the sense of Definition \ref{markov_def_1}) have a compact Hausdorff space as state--space and the transition function is required to be continuous. But often, in practice, one is interested in interpreting probabilistic logics on probabilistic transition systems whose state--space is not a compact space or on systems having a discontinuous transition function.

\begin{example}
Consider a Markov chain having a countably infinite state space. Then its state space (when viewed as a topological space with the discrete topology) is not a compact space and thus the Markov chain cannot be naturally modelled as a Markov process in the sense of Definition \ref{markov_def_1}. %We have already discussed in Section \ref{riesz_sec_2} how to interpret the logic $\RUB$ over discrete Markov chains and, under this semantics, the formula $\phi$ is interpreted as a \emph{bounded} function $\sem{\phi}_\tau :X\rightarrow\mathbb{R}$.
\end{example}

\begin{example}
A (discrete--time) random walk on the real line could be modelled as a Markov kernel (i.e., measurable map) $\tau:\mathbb{R}\rightarrow \mathcal{M}^{\leq 1}(\mathbb{R})$ (where $ \mathcal{M}^{\leq 1}$ is the Giry monad from \cite{giri1982}). This very natural model does not fit the definition of Markov process of Definition \ref{markov_def_1} for two reasons: $\mathbb{R}$ is not compact and $\tau$ is, generally, not continuous but merely measurable. 
\end{example}

More generally, many interesting examples of Markov processes are naturally modelled as measurable maps $\tau:X\rightarrow \mathcal{M}^{\leq 1}(X)$ where $X$ is a standard Borel space and $\tau$ is measurable. Several other examples can be found in the literature: for example Markov processes defined on analytic spaces \cite{PrakashBook} or even measurable spaces \cite{KMP2013}.

These models do not fit Definition \ref{markov_def_1}. Yet, Riesz modal logic can be naturally interpreted over them simply by defining: 
$$
\sem{\Diamond\phi}_\tau(x)= \int_{X}\sem{\phi}_\tau \diff \tau(x) \ \ \ 
$$
Let us write $\mathcal{B}(X,\R)$ for the set of bounded measurable real-valued functions on a measurable space $(X,\Sigma)$. This is a subset of $\ell^\infty(X)$, and is in fact a Riesz subspace when equipped with the pointwise operations defined from those in $\R$. Similarly to \eqref{DiamondShortDefn}, we can define a positive linear $\one_X$-decreasing map $\Diamond_\tau : \mathcal{B}(X,\R) \rightarrow \mathcal{B}(X,\R)$ by
\[
\Diamond_\tau(f)(x) = \int_X f \diff \tau(x).
\]
This definition goes back to the predicate transformer semantics defined in \cite[\S 2]{Kozen1983}. It follows from the fact that $\tau(x)$ is a probability measure and linearity of integration that $\Diamond_\tau$ is a positive linear $\one_X$-decreasing map $\mathcal{B}(X,\R) \rightarrow \ell^\infty(X)$. By the definition of the $\sigma$-algebra on $\mathcal{M}^{\leq 1}(X)$, for all $S \in \Sigma$ the function $\Diamond_\tau(\chi_S) \in \mathcal{B}(X,\R)$, and it then follows by a standard argument using linearity and the dominated convergence theorem that $\Diamond_\tau(f) \in \mathcal{B}(X,\R)$ for all $f \in \mathcal{B}(X,\R)$. 

It is then simple to show, by induction on the complexity of $\phi$, that the semantics $\sem{\phi}_\tau$ of every formula $\phi$ under this interpretation is a \emph{bounded} Borel measurable function $\sem{\phi}_\tau :X\rightarrow\mathbb{R}$.

We now explain how our completeness theorem still holds if all the models of the examples above (and arguably most other models in the literature) were considered in  addition to Markov processes ($\Markov$) defined on compact Hausdorff spaces with continuous transitions (as in Definition \ref{markov_def_1}). The key idea is that more general models can be \emph{embedded} into Markov processes in the sense of  Definition \ref{markov_def_1}. This can be proved, as we now show, using the duality results of Section \ref{section_duality}.

Let us denote by $\mathcal{C}$ one of the classes of measure-theoretic models discussed above\footnote{In increasing generality, Markov processes with a state space that is a standard Borel space, an analytic space, or just a general measurable space.} together with the corresponding interpretation of Riesz modal logic in terms of (measurable) bounded functions.

\begin{theorem}[(Extended Model Completeness)]\label{model_completeness_theorem}
Given two formulas $\phi$ and $\psi$ of Riesz modal logic, the following assertions are equivalent:
\begin{enumerate}
\item  $(\textnormal{Axioms of }\CRiesz^{u}_{\Diamond}) + \mathbb{A}\vdash \phi=\psi$
\item $\sem{\phi}_\tau\! =\! \sem{\psi}_\tau$ holds in all models in $\Markov \cup \mathcal{C}$.
\end{enumerate}
\end{theorem}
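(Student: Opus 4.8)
The plan is to reduce everything to the completeness result already established for $\Markov$ (Theorem \ref{completeness_theorem_app}) together with the modal Yosida duality (Theorem \ref{main_theorem_paper}), using the latter to realise each measure-theoretic model as an algebra that embeds into a genuine compact-Hausdorff Markov process. First I would record the algebraic content of a $\mathcal{C}$-model. For $\tau : X \to \mathcal{M}^{\leq 1}(X)$ the triple $A_\tau := (\mathcal{B}(X,\R),\one_X,\Diamond_\tau)$ is an \emph{Archimedean unital modal Riesz space}: it is a Riesz subspace of $\R^X$ under the pointwise operations, $\one_X$ is a strong unit because every member of $\mathcal{B}(X,\R)$ is bounded, the pointwise order makes it Archimedean (if $n|f|\leq|g|$ for all $n$ then $|f(x)|\leq|g(x)|/n\to 0$, so $f=0$), and $\Diamond_\tau$ is linear, positive and $\one_X$-decreasing by the discussion preceding the theorem. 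Moreover the inductively defined semantics $\sem{\blank}_\tau$ is exactly the unique modal Riesz homomorphism $!_{A_\tau}:\mathbb{I}\to A_\tau$ of Proposition \ref{initial_object_modalrieszspace} applied to $\phi$, since the semantic clauses match the operations of $A_\tau$ one by one. Thus $A_\tau \in \AURieszD$.

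Next I would invoke the modal duality. The counit component $\epsilon^\Diamond_{A_\tau} : A_\tau \to C^\Diamond(\Spec^\Diamond(A_\tau))$ is, by Theorem \ref{main_theorem_paper} (building on Theorem \ref{YosidaAdjunctionTheorem}), a modal Riesz homomorphism, and by the isometric-embedding clause of Theorem \ref{YosidaAdjunctionTheorem} it is injective. Writing $\beta := \Spec^\Diamond(A_\tau) \in \Markov$, its codomain is $A_\beta = (C(\Spec(\mathcal{B}(X,\R))),\one,\Diamond_\beta)$, so $\epsilon^\Diamond_{A_\tau}$ embeds the measurable model into a bona fide compact-Hausdorff Markov process. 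Since $\sem{\blank}_\tau = !_{A_\tau}$ and $\sem{\blank}_\beta = !_{A_\beta}$ are the unique homomorphisms out of $\mathbb{I}$, uniqueness forces $\epsilon^\Diamond_{A_\tau}\circ !_{A_\tau} = !_{A_\beta}$, i.e. $\epsilon^\Diamond_{A_\tau}(\sem{\phi}_\tau)=\sem{\phi}_\beta$ for every formula $\phi$. By injectivity of $\epsilon^\Diamond_{A_\tau}$ we conclude $\sem{\phi}_\tau=\sem{\psi}_\tau \iff \sem{\phi}_\beta=\sem{\psi}_\beta$.

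With this equivalence in hand both implications are short. For $(1)\Rightarrow(2)$: on a Markov process $\alpha\in\Markov$ the equation $\sem{\phi}_\alpha=\sem{\psi}_\alpha$ is the soundness half of Theorem \ref{completeness_theorem_app}; on a $\mathcal{C}$-model $\tau$, soundness applied to $\beta=\Spec^\Diamond(A_\tau)\in\Markov$ gives $\sem{\phi}_\beta=\sem{\psi}_\beta$, whence $\sem{\phi}_\tau=\sem{\psi}_\tau$ by the displayed equivalence. For $(2)\Rightarrow(1)$: since $\Markov\subseteq\Markov\cup\mathcal{C}$, assertion (2) in particular yields $\phi\sim\psi$, and Theorem \ref{completeness_theorem_app} delivers derivability.

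The hard part will be the compatibility of the Yosida embedding with the modal operator together with its injectivity — that is, checking that $\epsilon^\Diamond_{A_\tau}$ really is a \emph{modal} homomorphism ($\epsilon_{A_\tau}\circ\Diamond_\tau=\Diamond_\beta\circ\epsilon_{A_\tau}$) and an isometric, hence injective, map; both are supplied by Theorem \ref{main_theorem_paper} and the isometric-embedding clause of Theorem \ref{YosidaAdjunctionTheorem}, so the only genuinely new bookkeeping is the identity $\epsilon^\Diamond_{A_\tau}(\sem{\phi}_\tau)=\sem{\phi}_\beta$ via initiality of $\mathbb{I}$. I note that one could instead prove $(1)\Rightarrow(2)$ directly, since $A_\tau$ is itself Archimedean unital modal and (1) forces $\phi=\psi$ in every such algebra; the embedding route is preferable because it makes the promised embedding of general models into $\Markov$ explicit.
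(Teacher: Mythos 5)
Your proposal is correct and follows essentially the same route as the paper's proof: both form the Archimedean unital modal Riesz space $(\mathcal{B}(X,\R),\one_X,\Diamond_\tau)$ from a $\mathcal{C}$-model and use the modal Yosida duality to embed it into $(C(Y),\Diamond_\sigma)$ for some $(Y,\sigma)\in\Markov$, so that validity or failure of $\phi=\psi$ transfers between the two models, after which everything reduces to Theorem \ref{completeness_theorem_app}. Your write-up is simply more explicit than the paper's terse argument --- naming the counit $\epsilon^\Diamond_{A_\tau}$ as the embedding and using initiality of $\mathbb{I}$ to identify $\sem{\blank}_\tau$ and $\sem{\blank}_\beta$ with the unique homomorphisms, details the paper leaves implicit.
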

\begin{proof}
It is sufficient to prove that if an equality fails in some model in $\mathcal{C}$ then it fails in some Markov process in the sense of Definition \ref{markov_def_1}. Formally, we need to prove that if $\sem{\phi}_\tau\! \neq\! \sem{\psi}_\tau$ in some model $(X,\tau)\in \mathcal{C}$ then there exist a Markov process $(Y,\sigma)\in  \Markov$ (i.e., in the sense of Definition \ref{markov_def_1}) such that $\sem{\phi}_\sigma \neq \sem{\psi}_\sigma$.

Recall that $\mathcal{B}(X,\mathbb{R})$ is the space of bounded Borel measurable functions of type $X\rightarrow\mathbb{R}$. By the discussion above, $\mathcal{B}(X,\mathbb{R})$  is an Archimedean unital Riesz space with strong unit $\sem{1}_{\tau} = (x\mapsto 1)$, and the interpretation $\sem{\Diamond}_\tau$ described above makes the structure $A= (\mathcal{B}(X,\mathbb{R}),\sem{\Diamond}_\tau)$ into a modal Riesz space. By assumption, the equality $\phi=\psi$ fails in $A$, and so by duality, the modal Riesz space $A$ space is isomorphic to a subspace of $(C(Y), \sem{\Diamond}_\sigma)$ for some $(Y,\sigma)\in \Markov$, in which $\phi = \psi$ therefore fails as well, and this concludes the proof.
\end{proof}

The proof of the above theorem shows that, as long as we deal with reasonable models of probabilistic transition systems, the denotation of Riesz modal logic formulas  belongs to some Archimedean Riesz space of bounded real--valued functions and thus, using the duality theory,  it can be also be equally interpreted in some Markov process in the sense of Definition \ref{markov_def_1}.

%\section{Other variants}\label{sec:variants}

\subsection{Labelled Markov processes}

In this paper we have modelled Markov processes as transition functions mapping states to subprobability measures. This choice was made,  once again, for mathematical convenience: the axiomatization of the $\Diamond$ operator of Riesz modal logic is simple and intelligible.

In operational semantics (see, e.g., \cite{SOS,Sokalova2011,PrakashBook}) it is very common to consider transition systems having labelled transitions. Labelled Markov processes, still based on sub--probability measures, can be defined as follows:

\begin{definition}
Let $L$ be a set of labels. A \emph{labelled Markov process} is a pair $(X, \{\tau_l\}_{l\in L})$ where $X$ is a compact Hausdorff space and $\tau_l:X\rightarrow\mathcal{R}^{\leq 1}(X)$ is a continuous map.
\end{definition}

Riesz modal logic can naturally be adapted to be interpreted over labelled Markov processes by replacing the single modality $\Diamond$ with a $L$-indexed family of modalities $\diam{l}$, and by interpreting these labelled modalities as expected (see Definition \ref{def_semantics_2}):
$$
\sem{\diam{l}\phi} (x) = \displaystyle \mathbb{E}(\sem{\phi},\tau_l(x)) 
$$

This multimodal variant of Riesz modal logic can be axiomatized just by duplicating the axioms of $\Diamond$ for each $\diam{l}$ and $l\in L$. For example, if $L=\{a,b\}$, the axiomatization is obtained by taking the axioms of Riesz spaces and the following equations 
\begin{center}
\begin{itemize}

\item modal axioms for $\diam{a}$:
\begin{enumerate}[label=]
\item (Linearity) $\diam{a}(f\!+\!g) = \diam{a}(f)\! +\! \diam{a} (g)$ and $\diam{a}(r f) = r(\diam{a} f)$, for all $r\!\in\!\mathbb{R}$,
\item (Positivity) $\diam{a}(f\sqcup 0)\geq 0$, 
\item ($1$-decreasing) $\diam{a}(1)\leq 1$.
\end{enumerate}

\item modal axioms for $\diam{b}$:
\begin{enumerate}[label=]
\item (Linearity) $\diam{b}(f\!+\!g) = \diam{b}(f)\! +\! \diam{b} (g)$ and $\diam{b}(r f) = r(\diam{b} f)$, for all $r\!\in\!\mathbb{R}$,
\item (Positivity) $\diam{b}(f\sqcup 0)\geq 0$, 
\item ($1$-decreasing) $\diam{b}(1)\leq 1$.
\end{enumerate}
\end{itemize}
\end{center}

\section{Conclusions}
\label{conclusion_section}

We have introduced \emph{Riesz modal logic}, a real--valued endogenous probabilistic modal logic for expressing properties of probabilistic transition systems. The syntax and the semantics of the logic are directly inspired by the theory of Riesz spaces and this has allowed us to develop a mathematically convenient duality theory. We have also shown that Riesz modal logic can interpret other basic real--valued probabilistic logics appeared in the literature: most importantly, the  \emph{\L ukasiewicz modal logic} from \cite{MioThesis,MIO2014a,MioSimpsonFI2017}. This implies that the extension of Riesz modal logic with fixed-point operators results in a very expressive probabilistic logic capable of interpreting very popular probabilistic logics such as probabilistic CTL.

The study of specific fixed--point extensions, including questions such as axiomatizations and decidability properties, is a very interesting topic for further work (see, e.g., \cite{Mio18} for preliminary results). In this paper, we have proved a key extension theorem (Theorem  \ref{completion:thm2}) which is likely going to be of fundamental importance when the existence of the fixed--point considered is guaranteed by the Knaster--Tarski theorem (as in, e.g., \cite{MioThesis,MIO2014a,MioSimpsonFI2017,Mio18} and Kozen's modal $\mu$--calculus \cite{Kozen83}).

We have left open an important question (see Open Problem in Section \ref{sec:AURiesz}) regarding the Archimedean property of the initial modal Riesz space. A positive answer to this question (as claimed in \cite{MFM2017} but using a wrong argument) would imply that the axiomatization of Riesz modal logic (Figure \ref{figure:full:axiomatisation} in Section \ref{section_applications}) remains complete even when the Archimedean rule is removed from the proof system. This of course has some practical  interest since the Archimedean rule is infinitary and not easily tractable. 

Lastly, another aspect not considered in this work, and left for future work, is the design of convenient analytical proof systems (e.g., sequent--calculus) for Riesz modal logic and its extensions. This is a very interesting direction for future work. See, e.g., \cite{LM2019}, for preliminary results.

%%% Local Variables:
%%% mode: latex
%%% TeX-master: "ms.tex"
%%% End:

\section*{Acknowledgements}
The work of  Mio has been partially supported by the French project ANR-16-CE25-0011 REPAS. The work of Furber and Mardare has been partially supported by the DFF project 4181-00360 funded by the Danish Council for Independent Research.

% trigger a \newpage just before the given reference
% number - used to balance the columns on the last page
% adjust value as needed - may need to be readjusted if
% the document is modified later
%\IEEEtriggeratref{8}
% The "triggered" command can be changed if desired:
%\IEEEtriggercmd{\enlargethispage{-5in}}

% references section

% can use a bibliography generated by BibTeX as a .bbl file
% BibTeX documentation can be easily obtained at:
% http://mirror.ctan.org/biblio/bibtex/contrib/doc/
% The IEEEtran BibTeX style support page is at:
% http://www.michaelshell.org/tex/ieeetran/bibtex/

\appendix

\section{Proof that Quotients of Complete Archimedean Riesz Spaces are Complete}
\label{QuotientCompleteSubsection}
In this appendix we present a proof for the fact that if $(A,u)$ is an Archimedean Riesz space with strong unit $u$ and uniformly complete, $I \subseteq A$ a closed ideal, then $A/I$ is Archimedean and uniformly complete. This fact is used in Section \ref{subsection:uniform:incompleteness}. We could not prove it the direct way, by showing that the $[u]$-norm of $A/I$ is the quotient norm of $A$ (which it is easy to prove is complete if $A$ is), so we used Yosida duality.

\begin{lemma}
\label{QuotientStrongUnitLemma}
Let $A$ be a Riesz space, $u \in A_+$ a strong unit, and $I \subseteq A$ an ideal. Then $[u]$ is a strong unit in $A/I$. 
\end{lemma}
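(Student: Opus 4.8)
The plan is to push the strong-unit property of $u$ through the quotient homomorphism, using only that this map is a Riesz homomorphism and that such maps are monotone and commute with $|\cdot|$. Since $I$ is an ideal (Definition \ref{ideals_def}), it is the kernel of a homomorphism, so the quotient $A/I$ inherits a canonical Riesz space structure for which the quotient map $q : A \rightarrow A/I$, $q(a) = [a]$, is a surjective Riesz homomorphism. In particular $q$ preserves all the operations, whence $q(|a|) = |q(a)| = |[a]|$ for every $a \in A$, and $q$ is monotone (as it preserves the lattice operations, hence the induced order $x \leq y \Leftrightarrow x \sqcap y = x$).

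With this in hand there are two things to check. First, positivity of $[u]$: applying the monotone map $q$ to the inequality $0 \leq u$ (which holds since $u$ is a strong unit, hence $u \in A_+$) yields $[0] = q(0) \leq q(u) = [u]$, so $[u] \geq 0$ in $A/I$. Second, the absorption property: given an arbitrary element of $A/I$, choose a representative $a \in A$, so that the element is $[a]$. Because $u$ is a strong unit in $A$, there is some $n \in \mathbb{N}$ with $|a| \leq n u$. Applying $q$ and using monotonicity together with $q(|a|) = |[a]|$ and $q(nu) = n\,q(u) = n[u]$ gives
\[
|[a]| = q(|a|) \leq q(nu) = n[u],
\]
which is exactly the required bound. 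Since $[a]$ was arbitrary, $[u]$ is a strong unit of $A/I$.

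There is no real obstacle here: the statement is essentially immediate once one records that $q$ is a Riesz homomorphism and therefore monotone and compatible with $|\cdot|$. The only point worth stating explicitly (and it is standard universal algebra, justified by Definition \ref{ideals_def}) is that $A/I$ is a well-defined Riesz space with $q$ a homomorphism, so that every element of $A/I$ admits a representative in $A$ to which the strong-unit hypothesis on $u$ can be applied.
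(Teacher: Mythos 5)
Your proof is correct and follows essentially the same route as the paper's: both push the inequality $|a| \leq nu$ (equivalently $-nu \leq a \leq nu$) through the quotient map, using only that it is a monotone Riesz homomorphism. The paper's version is just more terse; your explicit check that $[u] \geq 0$ is a harmless addition already implicit in the homomorphism property.
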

\begin{proof}
As $I$ is an ideal, $[\blank] : A \rightarrow A/I$ is a Riesz homomorphism, and is therefore monotone and linear. If $[a] \in A/I$, there exists $n \in \mathbb{N}$ such that $-nu \leq a \leq nu$ (equivalently $|a| \leq nu$), so $-n[u] \leq [a] \leq n[u]$, proving $[u]$ is a strong unit. 
\end{proof}

Let $X$ be a compact Hausdorff space. Given $Y \subseteq X$ a closed subset, define
\[
I(Y) = \{ a \in C(X) \mid \forall y \in Y. a(y) = 0 \}.
\]
By the definition of the Riesz operations in $C(X)$, it is easy to see that this is an ideal. Since convergence in $C(X)$ corresponds to uniform convergence of functions, $I(Y)$ is always a closed ideal (with respect to the norm defined by the unit of $C(X)$). If $J \subseteq C(X)$ is a norm-closed ideal, we define
\[
Z(J) = \{ x \in X \mid \forall a \in J. a(x) = 0 \} = \bigcap_{a \in J}a^{-1}(0).
\]
Being an intersection of closed sets, $Z(J)$ is a closed subset of $X$. 

\begin{lemma}
\label{ApproxUnitLemma}
Let $X$ be a compact Hausdorff space and $J \subseteq C(X)$ an ideal. There is a directed set $(v_k)_{k \in K}$ of elements of $J$ that are $[0,1]$-valued functions converging pointwise to $1$ on $X \setminus Z(J)$. 
\end{lemma}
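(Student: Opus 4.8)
The plan is to build an \emph{approximate unit} for $J$ out of truncated scalar multiples of finite joins of absolute values of elements of $J$. As index set I would take the collection $K$ of all pairs $(F,n)$ with $F$ a finite subset of $J$ and $n \in \mathbb{N}$, ordered by $(F,n) \le (F',n')$ iff $F \subseteq F'$ and $n \le n'$. This is a directed poset, since $(F \cup F',\, \max(n,n'))$ dominates both $(F,n)$ and $(F',n')$. Writing $g_F = \bigsqcup_{a \in F} |a|$ (with $g_\emptyset = 0$), I would then define
\[
v_{(F,n)} = (n\, g_F) \sqcap \one_X .
\]

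First I would check that each $v_{(F,n)}$ is a legitimate witness. It is $[0,1]$-valued because $0 \le (n g_F)\sqcap \one_X \le \one_X$ pointwise (here $n g_F \ge 0$ since each $|a| \ge 0$). It lies in $J$ by the solidity characterization of ideals (the Proposition following Definition \ref{ideals_def}): each $|a|$ lies in the Riesz subspace $J$, hence so does the finite join $g_F$ and the scalar multiple $n g_F$; and since $0 \le v_{(F,n)} \le n g_F$, i.e.\ $|v_{(F,n)}| \le |n g_F|$ with $n g_F \in J$, solidity yields $v_{(F,n)} \in J$. The point worth stressing is that although the constant function $\one_X$ is in general \emph{not} in $J$, the truncation $(n g_F)\sqcap \one_X$ is, precisely because it is dominated by the element $n g_F$ of $J$.

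Next I would verify that the family is directed (monotone along $K$): if $(F,n)\le(F',n')$ then $g_F \le g_{F'}$ and $n \le n'$, so $n g_F \le n' g_{F'}$, and meeting with $\one_X$ preserves order, giving $v_{(F,n)} \le v_{(F',n')}$. Finally, for pointwise convergence on $X \setminus Z(J)$, fix $x \notin Z(J)$; by definition of $Z(J)$ there is some $a \in J$ with $a(x) \neq 0$, so $c := |a|(x) > 0$. Choosing $N$ with $Nc \ge 1$, for every index $(F,n) \ge (\{a\}, N)$ we have $g_F(x) \ge |a|(x) = c$ and $n \ge N$, whence $(n g_F)(x) \ge Nc \ge 1$ and therefore $v_{(F,n)}(x) = 1$. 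Thus the net is \emph{eventually equal} to $1$ at each such $x$, which is more than enough for pointwise convergence to $\one_X$ on $X \setminus Z(J)$.

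I expect the only genuinely delicate point to be the interplay between solidity and the fact that $\one_X \notin J$: the construction hinges on truncating by $\one_X$ from \emph{inside} $J$, i.e.\ on the observation that $(n g_F)\sqcap \one_X$ is sandwiched below $n g_F \in J$ and so remains in the ideal, while simultaneously lying below $\one_X$ so as to stay $[0,1]$-valued. The distributivity of the lattice $C(X)$ (so that truncation and finite joins interact smoothly, giving $v_{(F,n)} = \bigsqcup_{a\in F}\big((n|a|)\sqcap \one_X\big)$ if one prefers) is used implicitly but causes no real difficulty.
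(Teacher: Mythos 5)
Your proof is correct and takes essentially the same route as the paper's: both constructions produce the approximate unit by truncating scaled absolute values of ideal elements at $\one_X$, both rely on solidity of $J$ to keep the truncation inside the ideal, and both establish convergence by showing the net is eventually equal to $1$ at each point of $X \setminus Z(J)$. The only difference is bookkeeping: the paper indexes by finite subsets of $X \setminus Z(J)$ and normalizes a chosen $a \in J$ by its value at each point (taking $v_{\{x\}} = \frac{|a|}{|a|(x)} \sqcap 1$), whereas you index by finite subsets of $J$ together with an integer scale, which is slightly more canonical (no per-point choices enter the construction itself) but otherwise the same argument.
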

\begin{proof}
Let $K$ be the set of finite subsets of $X \setminus Z(J)$. For each point $x \in X \setminus Z(J)$, there exists $a \in J$ such that $a(x) \neq 0$. As $J$ is an ideal, the element $v_{\{x\}} = \frac{|a|}{|a|(x)}\sqcap 1 \in J$, is $[0,1]$-valued, and takes the values $1$ at $x$. We then define $v_{\{x_1,\ldots,x_n\}} = \bigsqcup_{i=1}^nv_{x_i}$. Then $(v_k)_{k \in K}$ is a directed set, and it converges pointwise to $1$ on $X \setminus Z(J)$. 
\end{proof}

\begin{lemma}
\label{UniformConvApproxLemma}
Let $X$ be a compact Hausdorff space, $J \subseteq C(X)$ an ideal, $a \in C(X)$ vanishing on $Z(J)$. Let $\alpha = \sup_{x \in X}a(x) + 1$. Then $\alpha v_k \land a \to a$ uniformly. 
\end{lemma}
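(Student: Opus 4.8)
The plan is to reduce the statement to Dini's theorem for monotone nets on a compact space, after verifying that the net $(\alpha v_k \land a)_{k \in K}$ is monotone and converges \emph{pointwise} to the continuous function $a$. First I would record the monotonicity built into Lemma \ref{ApproxUnitLemma}: the index set $K$ consists of the finite subsets of $X \setminus Z(J)$ ordered by inclusion, and $v_{\{x_1,\dots,x_n\}} = \bigsqcup_{i} v_{x_i}$, so $S \subseteq S'$ forces $v_S \leq v_{S'}$. Hence $(v_k)$ is an increasing net, and since $\alpha = \sup_x a(x) + 1 \geq 1 > 0$ whenever $a$ is nonnegative (and more generally whenever $Z(J)$ is inhabited, which is the case in the intended application, as then $a(y)=0$ for some $y$ gives $\sup a \geq 0$), the net $(\alpha v_k)$ is also increasing. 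Meeting with the fixed function $a$ preserves order, so $(\alpha v_k \land a)_{k \in K}$ is an increasing net of continuous functions, each bounded above by $a$.

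Next I would verify the pointwise convergence $\alpha v_k \land a \to a$, splitting into two regimes. If $x \in X \setminus Z(J)$, then $v_k(x) \to 1$, so $\alpha v_k(x) \to \alpha = \sup a + 1 > a(x)$; hence for all sufficiently large $k$ we have $\alpha v_k(x) > a(x)$ and therefore $(\alpha v_k \land a)(x) = a(x)$. If instead $x \in Z(J)$, then $a(x) = 0$ by the vanishing hypothesis, while $v_k(x) = 0$ for every $k$ because $v_k \in J$ and every element of $J$ vanishes on $Z(J)$; thus $(\alpha v_k \land a)(x) = 0 \land 0 = 0 = a(x)$. Note that no sign assumption on $a$ is needed here, since $a(x) \neq 0$ already forces $x \notin Z(J)$. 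In both regimes the net converges pointwise to $a(x)$.

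Finally, since $X$ is compact, $(\alpha v_k \land a)$ is a monotone net of continuous functions converging pointwise to the continuous function $a$, so Dini's theorem — in its net formulation, which is valid because $K$ is directed — yields uniform convergence, which is exactly the claim. If Dini for nets is not taken as given, I would insert the short standard argument: for $\epsilon > 0$, pick for each $x$ an index $k_x$ with $\big(a - (\alpha v_{k_x} \land a)\big)(x) < \epsilon$, use continuity to extend this inequality to a neighbourhood of $x$, extract a finite subcover by compactness, and take an upper bound $k_0 \in K$ of the finitely many chosen indices; monotonicity then gives $0 \le a - (\alpha v_k \land a) < \epsilon$ uniformly for all $k \geq k_0$, i.e. $\|a - (\alpha v_k \land a)\|_\infty \leq \epsilon$.

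The only genuine subtlety — and the reason the lemma is phrased with the meet against $a$ rather than about $\alpha v_k$ directly — is that the pointwise limit of $\alpha v_k$ on its own is the \emph{discontinuous} function $\alpha \cdot \chi_{X \setminus Z(J)}$, which jumps across the boundary of $Z(J)$, so Dini cannot be applied to $\alpha v_k$. Meeting with the continuous function $a$ is precisely what restores a continuous limit and makes Dini applicable; this is the single step that does the real work. (The marginal sign issue, when $\alpha < 0$ could occur for $Z(J) = \emptyset$ and $a$ uniformly below $-1$, is handled symmetrically, since in that case $(\alpha v_k \land a)$ is instead \emph{decreasing} to $a$ and the decreasing form of Dini applies.)
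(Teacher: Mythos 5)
Your proof is correct, but it takes a genuinely different route from the paper's. The paper fixes $\epsilon > 0$ and splits $X$ into the compact set $C = a^{-1}(\mathbb{R} \setminus (-\epsilon,\epsilon))$, which lies inside $X \setminus Z(J)$, and its complement: on $C$ it applies Dini's theorem to the net $(v_k)$ itself (whose limit there is the constant function $1$), concluding that for large $k$ one has $\alpha v_k > \sup_{x\in X} a(x) + \tfrac{1}{2} \geq a$ on $C$, so the meet equals $a$ exactly; on $X \setminus C$ it uses $|a| < \epsilon$ together with the lattice identity $|a - (\alpha v_k \sqcap a)| = (a - \alpha v_k) \sqcup 0$ to bound the error by $\epsilon$. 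You instead apply Dini once, directly to the meet net $(\alpha v_k \land a)_{k \in K}$ on all of $X$, after checking that it is increasing (for $\alpha > 0$, using the inclusion ordering on $K$ from Lemma \ref{ApproxUnitLemma}) and converges pointwise to the continuous function $a$ --- including on $Z(J)$, where $v_k$ and $a$ both vanish. Your version is more economical: it eliminates the $\epsilon$-splitting and the lattice computation, and it isolates the actual crux, namely that meeting with $a$ repairs the discontinuity of the naive limit $\alpha\,\chi_{X \setminus Z(J)}$, which is what makes Dini applicable at all. What it needs in exchange is monotonicity of the meet net, which is where the sign of $\alpha$ enters; you handle this adequately, and in fact the degenerate case is even simpler than you say: if $\alpha \leq 0$ then $\alpha v_k \geq \alpha > \sup_{x \in X} a(x) \geq a$, so the net is constantly equal to $a$ and there is nothing to prove. (Note that the paper's proof tacitly assumes $\alpha > 0$ as well, since it divides by $2\alpha$, so your treatment of this edge case is if anything more complete; your aside that $Z(J)$ is inhabited ``in the intended application'' is not quite right, since Proposition \ref{RieszAlgGeomProp} does invoke the lemma when $Z(J)$ may be empty, but this is harmless because you cover the general case anyway.) Both arguments ultimately rest on Dini's theorem in its directed-family form, and your fallback finite-subcover argument makes that dependence self-contained.
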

\begin{proof}
Let $\epsilon > 0$. Define $C = a^{-1}(\mathbb{R} \setminus (-\epsilon,\epsilon))$, which is closed, and therefore compact. As $C \subseteq X \setminus Z(J)$, $(v_k)_{k \in K}$ converges pointwise to $1$ on $C$, so by Dini's theorem \cite[X.4.1 Theorem 1]{bourbaki} it converges uniformly on on $C$. Therefore there exists a $k \in K$ such that for all $k' \in K$ with $k' \geq k$, and for all $x \in C$, $|1- v_{k'}(x)| < \frac{1}{2 \alpha}$. Therefore $|\alpha - \alpha v_{k'}(x)| < \frac{1}{2}$, which, because $0 \leq v_{k'}(x) \leq 1$, is the same as $\alpha - \alpha v_{k'}(x) < \frac{1}{2}$. So
\begin{align*}
\alpha v_{k'}(x) > \alpha - \frac{1}{2} &= \sup_{x \in X}a(x) + 1 - \frac{1}{2} \\
&= \sup_{x \in X}a(x) + \frac{1}{2}.
\end{align*}
So $(\alpha v_{k'} \sqcap a)(x) = a(x)$ for all $x \in C$, and therefore $|(\alpha v_k \sqcap a)(x) - a(x)| = 0 < \epsilon$. 

For $x \not\in C$, we have $|a(x)| < \epsilon$. As $a(x) - \alpha v_{k'}(x) \leq a(x) < \epsilon$, we have $(a(x) - \alpha v_{k'}(x)) \sqcup 0 < \epsilon$. Therefore
\begin{align*}
|a(x) - (\alpha v_{k'} \sqcap a)(x)| &= a(x) - (\alpha v_{k'} \sqcap a)(x) = a(x) + (-\alpha v_{k'}(x)) \sqcup (-a(x)) \\
 &= (a(x) - \alpha v_{k'}(x)) \sqcup 0 < \epsilon.
\end{align*}
So we have shown that for all $\epsilon > 0$, there exists $k \in K$ such that for all $k' \in K$ with $k' \geq k$ and for all $x \in X$ (whether $x \in C$ or $x \not\in C$) $|(\alpha v_{k'} \sqcap a)(x) - a(x) | < \epsilon$, which is to say, $(\alpha v_k \sqcap a)_{k \in K}$ converges uniformly to $a$. 
\end{proof}

\begin{proposition}
\label{RieszAlgGeomProp}
$Z$ and $I$ form an isomorphism between the set of closed subsets of $X$ and the set of norm-closed ideals of $C(X)$, \emph{i.e.} if $Y \subseteq X$ is closed, then $Z(I(Y)) = Y$, and if $J \subseteq C(X)$ is a norm-closed ideal, then $I(Z(J)) = J$. 
\end{proposition}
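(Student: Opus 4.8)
The plan is to establish the two equalities separately, noting that together they exhibit $Z$ and $I$ as mutually inverse. For the equality $Z(I(Y)) = Y$ with $Y \subseteq X$ closed, the inclusion $Y \subseteq Z(I(Y))$ is immediate from the definitions: every $a \in I(Y)$ vanishes on $Y$, so each $y \in Y$ lies in $\bigcap_{a \in I(Y)} a^{-1}(0) = Z(I(Y))$. For the reverse inclusion I would argue contrapositively. Given $x \notin Y$, since $X$ is compact Hausdorff it is normal, so by Urysohn's lemma there is a continuous $a : X \rightarrow [0,1]$ with $a(y) = 0$ for all $y \in Y$ and $a(x) = 1$. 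Then $a \in I(Y)$ while $a(x) \neq 0$, witnessing $x \notin Z(I(Y))$. Hence $Z(I(Y)) = Y$. This half uses only point-set topology, not the Riesz or ideal structure.

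For the equality $I(Z(J)) = J$ with $J$ a norm-closed ideal, the inclusion $J \subseteq I(Z(J))$ is again immediate, since by definition of $Z$ every $a \in J$ vanishes on $Z(J)$. The substantive direction is $I(Z(J)) \subseteq J$, and here I would invoke the two preceding approximation lemmas. First I would reduce to positive elements: if $a$ vanishes on $Z(J)$ then so do $a^+ = a \sqcup 0$ and $a^- = (-a) \sqcup 0$, and since $a = a^+ - a^-$ it suffices to show that every positive $a \in I(Z(J))$ lies in $J$.

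So fix $a \geq 0$ vanishing on $Z(J)$ and set $\alpha = \sup_{x \in X} a(x) + 1$, which is finite and positive because $X$ is compact. Lemma \ref{ApproxUnitLemma} supplies the directed family $(v_k)_{k \in K}$ of $[0,1]$-valued elements of $J$, and Lemma \ref{UniformConvApproxLemma} then gives that $\alpha v_k \sqcap a \rightarrow a$ uniformly, that is, in the norm of $C(X)$. Each approximant already lies in $J$: we have $0 \leq \alpha v_k \sqcap a \leq \alpha v_k$ with $\alpha v_k \in J$ (as $J$ is a vector subspace), so the ideal characterization recalled after Definition \ref{ideals_def} (if $|b| \leq |c|$ and $c \in J$ then $b \in J$) applies. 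Since $J$ is norm-closed and the net $(\alpha v_k \sqcap a)_{k \in K}$ lies in $J$ and converges to $a$ in norm, the limit $a$ belongs to $J$. Combining the two halves completes the proof.

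The main obstacle is precisely the inclusion $I(Z(J)) \subseteq J$: one cannot argue pointwise, since vanishing on the common zero set $Z(J)$ does not by itself place a function in the algebraic ideal $J$. Both the approximation machinery of Lemmas \ref{ApproxUnitLemma} and \ref{UniformConvApproxLemma} and the hypothesis that $J$ is norm-closed are essential, and the reduction to positive $a$ is what makes the ideal membership of the approximants $\alpha v_k \sqcap a$ transparent.
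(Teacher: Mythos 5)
Your proof is correct and follows the same overall route as the paper's: Urysohn's lemma gives $Z(I(Y)) = Y$, and the approximation machinery of Lemmas \ref{ApproxUnitLemma} and \ref{UniformConvApproxLemma} combined with norm-closedness gives $I(Z(J)) = J$. The one place where you genuinely deviate is the reduction to positive elements via $a = a^+ - a^-$, and this is in fact an improvement over the paper's text. The paper simply asserts that $\alpha v_k \sqcap a \in J$ ``as $J$ is an ideal'', but for an $a$ taking negative values this is not immediate: at points where $a(x) < 0$ one has $(\alpha v_k \sqcap a)(x) = a(x)$, so $|\alpha v_k \sqcap a|$ is not dominated by $\alpha v_k$ there; indeed $\alpha v_k \sqcap a = (\alpha v_k \sqcap a^+) - a^-$, and $a^- \in J$ is exactly the kind of statement one is trying to prove, so solidity alone does not apply. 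For positive $a$ the domination $0 \leq \alpha v_k \sqcap a \leq \alpha v_k \in J$ does hold and solidity of the ideal finishes the argument, which is precisely what your reduction buys: $a^+$ and $a^-$ both vanish on $Z(J)$, each is handled by the positive case, and linearity of $J$ recombines them. Everything else in your write-up (the Urysohn half, the trivial inclusion $J \subseteq I(Z(J))$, and passing to the limit using norm-closedness) matches the paper's proof step for step.
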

\begin{proof}
Let $Y \subseteq X$ be a closed subset. If $x \in Y$ then for all $a \in I(Y)$ we have $a(x) = 0$, so $x \in Z(I(Y))$. If $x \not\in Y$, then by Urysohn's lemma, there exists a continuous function $a : X \rightarrow [0,1]$ such that $a(y) = 0$ for all $y \in Y$ and $a(x) = 1$. Therefore $a \in I(Y)$ but $a(x) \neq 0$, so $x \not\in Z(I(Y))$. 

Now let $J \subseteq C(X)$ be a norm-closed ideal. If $a \in J$, then for all $x \in Z(J)$, we have $a(x) = 0$, by definition, so $a \in I(Z(J))$. Conversely, if $a \in I(Z(J))$, by Lemma \ref{UniformConvApproxLemma}, we have a net $(v_k)_{k \in K}$ of elements of $J$ and a real $\alpha \in \mathbb{R}$ such that $\alpha v_k \sqcap a \to a$ uniformly. As $J$ is an ideal, $\alpha v_k \sqcap a \in J$, and as $J$ is uniformly closed, $a \in J$. 
\end{proof}

\begin{proposition}
\label{QuotientCXCompleteProp}
Let $X$ be a compact Hausdorff space, $J \subseteq C(X)$ a uniformly closed ideal. Let $Y = Z(J)$ and define $\phi : C(X) \rightarrow C(Y)$ by $\phi(a) = a|_Y$. This is a unital Riesz homomorphism. The map $\phi$ vanishes on $J$ and the induced map $\tilde{\phi} : C(X)/J \rightarrow C(Y)$ is a unital isomorphism. 
\end{proposition}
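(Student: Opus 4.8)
The plan is to verify the elementary claims directly and then deduce the isomorphism statement from two tools that are already available: the closed-set/closed-ideal correspondence of Proposition \ref{RieszAlgGeomProp} and the Tietze extension theorem.

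First I would note that $Y = Z(J)$, being defined as the intersection $\bigcap_{a \in J} a^{-1}(0)$ of closed sets, is a closed subset of the compact Hausdorff space $X$, hence itself compact Hausdorff, so $C(Y)$ is a genuine unital Riesz space with strong unit $\one_Y$. The restriction map $\phi(a) = a|_Y$ is a unital Riesz homomorphism simply because all the Riesz operations on both $C(X)$ and $C(Y)$ are defined pointwise and restriction commutes with pointwise operations; in particular $\phi(\one_X) = \one_Y$. This part is entirely routine.

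Next I would pin down the kernel of $\phi$. By definition of restriction and of $Z(J)$,
\[
\ker\phi = \{ a \in C(X) \mid a|_Y = \zero_Y \} = \{ a \in C(X) \mid \forall y \in Y.\ a(y) = 0 \} = I(Y) = I(Z(J)).
\]
Since $J$ is norm-closed by hypothesis, Proposition \ref{RieszAlgGeomProp} gives $I(Z(J)) = J$, so $\ker\phi = J$ exactly. In particular $\phi$ vanishes on $J$, hence factors through the quotient as a unital Riesz homomorphism $\tilde\phi : C(X)/J \rightarrow C(Y)$, and because the kernel is precisely $J$ (not merely contained in it), $\tilde\phi$ is injective.

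For surjectivity I would invoke the Tietze extension theorem: as $X$ is compact Hausdorff (hence normal) and $Y \subseteq X$ is closed, every $g \in C(Y)$ extends to some $a \in C(X)$ with $a|_Y = g$, i.e.\ $\phi(a) = g$; thus $\phi$, and therefore $\tilde\phi$, is surjective. Finally, a bijective unital Riesz homomorphism is automatically a unital Riesz isomorphism, since the set-theoretic inverse preserves all operations (these being equationally determined), so $\tilde\phi$ is the claimed isomorphism. I do not expect a serious obstacle: the only ingredients beyond bookkeeping are Tietze's theorem for surjectivity and the already-established Proposition \ref{RieszAlgGeomProp} for identifying the kernel, and both apply directly.
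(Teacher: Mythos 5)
Your proposal is correct and follows essentially the same route as the paper's proof: pointwise-defined operations give the unital Riesz homomorphism, Proposition \ref{RieszAlgGeomProp} (via $I(Z(J)) = J$ for the norm-closed ideal $J$) identifies the kernel and yields injectivity, and Tietze's extension theorem yields surjectivity. The only cosmetic difference is that you compute $\ker\phi$ once and deduce both the vanishing on $J$ and injectivity from it, whereas the paper treats these two points separately, and it cites \cite[Definition 18.4]{Luxemburg} where you argue directly that a bijective homomorphism is an isomorphism.
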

\begin{proof}
The map $\phi$ is a unital Riesz homomorphism because the Riesz operations on $C(X)$ and $C(Y)$ are defined pointwise in terms of the Riesz operations on $\mathbb{R}$. If $a \in J$, then for all $y \in Y = Z(J)$, we have $a(y) = 0$, so $\phi(a) = 0$. 

The map $\tilde{\phi}$ is a unital homomorphism, so we only need to prove that it is a bijection to prove that it is a unital isomorphism \cite[Definition 18.4]{Luxemburg}. We prove that it is surjective by proving that $\phi$ is surjective. If $b \in C(Y)$, then by Tietze's extension theorem \cite[IX.4.2 Theorem 2]{bourbaki} there exists $a \in C(X)$ such that $a|_Y = b$, \emph{i.e.} $\phi(a) = b$. 

To prove that it is injective, suppose that $\tilde{\phi}([a]) = \tilde{\phi}([a'])$ for $a,a' \in C(X)$. This means that $\phi(a-a') = 0$, which is to say, $a-a' \in I(Y) = I(Z(J)) = J$ because $J$ is norm-closed (Proposition \ref{RieszAlgGeomProp}). Therefore $[a] = [a']$. 
\end{proof}

\begin{corollary}
\label{QuotientCompleteCor}
Let $(A,u)$ be an Archimedean Riesz space with strong unit that is uniformly complete, and $J \subseteq A$ a closed ideal in $A$. Then $(A/J, [u])$ is Archimedean, $[u]$ a strong unit, and uniformly complete. 
\end{corollary}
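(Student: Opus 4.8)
The plan is to reduce the statement to the concrete case of $C(X)$-spaces already handled in Proposition \ref{QuotientCXCompleteProp}, transporting everything across the Yosida isomorphism. Since $(A,u)$ is Archimedean with strong unit and uniformly complete, it lies in $\CAURiesz$, so by Theorem \ref{YosidaDualityTheorem} the counit $\epsilon_A : A \to C(X)$, where $X = \Spec(A)$, is a unital isometric Riesz isomorphism onto the space of continuous functions on the compact Hausdorff space $X$, carrying $u$ to $\one_X$.

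First I would transport the ideal $J$ across this isomorphism. Setting $J' = \epsilon_A(J) \subseteq C(X)$, the image of an ideal under a Riesz isomorphism is again an ideal, and because $\epsilon_A$ is isometric it is in particular a homeomorphism, so the norm-closed set $J$ maps to a norm-closed (equivalently uniformly closed) ideal $J'$. Since $\epsilon_A$ maps $J$ onto $J'$, it descends to a unital Riesz isomorphism $A/J \cong C(X)/J'$ of the quotients.

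Next I would apply Proposition \ref{QuotientCXCompleteProp} to $J'$: with $Y = Z(J')$, a closed and hence compact subset of $X$, the restriction map induces a unital Riesz isomorphism $C(X)/J' \cong C(Y)$. Composing the two isomorphisms gives $A/J \cong C(Y)$. As $Y$ is compact Hausdorff, Example \ref{CAU_example} tells us $(C(Y), \one_Y)$ is Archimedean, unital and uniformly complete; these three properties are preserved by unital Riesz isomorphism, so $A/J$ is Archimedean and uniformly complete. Finally, $[u]$ is a strong unit in $A/J$ directly by Lemma \ref{QuotientStrongUnitLemma}, which applies to any ideal, completing the proof.

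The main obstacle is bookkeeping rather than conceptual: one must check carefully that the Yosida isomorphism $\epsilon_A$ genuinely transports the abstract closed ideal $J$ to a uniformly closed ideal of $C(X)$ to which Proposition \ref{QuotientCXCompleteProp} applies, and that it descends to the quotients. This is precisely where the isometry property of $\epsilon_A$ (and not merely its being an isomorphism) is used, since it guarantees that closedness is preserved. Everything else reduces to the already-established geometric correspondence between closed subsets of $X$ and uniformly closed ideals of $C(X)$.
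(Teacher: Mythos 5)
Your proposal is correct and follows essentially the same route as the paper's own proof: invoke Lemma \ref{QuotientStrongUnitLemma} for the strong unit, use Yosida's theorem to get an isometric unital isomorphism $A \cong C(X)$, transport $J$ to a norm-closed ideal $J'$, descend to an isomorphism $A/J \cong C(X)/J'$, and conclude via Proposition \ref{QuotientCXCompleteProp} that this is isomorphic to $C(Z(J'))$, hence uniformly complete (and Archimedean). The only cosmetic difference is that you make the transfer of the Archimedean property explicit via Example \ref{CAU_example}, where the paper leaves it implicit in the isomorphism.
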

\begin{proof}
The element $[u]$ is a strong unit by Lemma \ref{QuotientStrongUnitLemma}. By Yosida's theorem, there exists a unital Riesz isomorphism $\epsilon_A : (A,u) \cong (C(X),1)$ for $X$ a compact Hausdorff space. As it is an isomorphism, it preserves the norm defined by the unit, so it is an isometry. Therefore $J' = \epsilon_A(J)$ is not just an ideal, but a norm-closed ideal as well. The map $[\blank] \circ \epsilon_A : (A,u) \rightarrow (C(X)/J', [1])$ vanishes precisely on $J$, so $\widetilde{[\blank] \circ \epsilon_A} : (A/J,[u]) \rightarrow (C(X)/J',[1])$ is a unital Riesz isomorphism. By Proposition \ref{QuotientCXCompleteProp}, $C(X)/J'$ is uniformly complete, because it is isomorphic to $C(Z(J'))$, and therefore $A/J$ is uniformly complete. 
\end{proof}

%%% Local Variables:
%%% mode: latex
%%% TeX-master: "ms.tex"
%%% End:

\bibliography{biblio}
\bibliographystyle{alpha}

%\bibliographystyle{IEEEtran}
% argument is your BibTeX string definitions and bibliography database(s)
%\bibliography{IEEEabrv,../bib/paper}
%
% <OR> manually copy in the resultant .bbl file
% set second argument of \begin to the number of references
% (used to reserve space for the reference number labels box)
%\begin{thebibliography}{1}

%\bibitem{IEEEhowto:kopka}
%H.~Kopka and P.~W. Daly, \emph{A Guide to \LaTeX}, 3rd~ed.\hskip 1em plus   0.5em minus 0.4em\relax Harlow, England: Addison-Wesley, 1999.

%\end{thebibliography}

%\newpage

% that's all folks
\end{document}